%% file: Eprint-final-submit.tex
\documentclass[runningheads]{llncs}
\makeatletter
\newcommand{\printdate}{%
  \ifx\@date\@empty\else
    \begin{center}\small \@date\end{center}%
  \fi}
\makeatother
\makeatletter
\let\oldmaketitle\maketitle
\renewcommand{\maketitle}{\oldmaketitle\printdate}
\makeatother
\makeatletter
\renewenvironment{remark}[1][]{%
  \refstepcounter{theorem}\par\medskip
  \noindent\textbf{Remark~\thetheorem
    \ifx\relax#1\relax\else\ (#1)\fi.} \itshape
}{\par\medskip}
\makeatother
\makeatletter
\renewenvironment{remark}[1][]{%
  \refstepcounter{theorem}\par\medskip
  \noindent\textbf{Claim~\thetheorem
    \ifx\relax#1\relax\else\ (#1)\fi.} \itshape
}{\par\medskip}
\makeatother

\usepackage{cite}
\usepackage{lipsum}
\usepackage{amssymb,amsmath}
\usepackage{algorithm}
\usepackage{algpseudocode}
\usepackage{graphicx}
\usepackage{enumitem}
\usepackage{framed}
\usepackage{float}
\usepackage{wrapfig}
\usepackage{multirow}
\usepackage{url}
\usepackage{enumitem}
\usepackage{stackengine}
\usepackage{cancel}
\usepackage{array}
\usepackage{hyperref}
\usepackage{thmtools,thm-restate}
\usepackage{pifont}
\usepackage{mathtools}
\usepackage{xcolor}
\usepackage{bm}
\usepackage{ragged2e}
\usepackage{braket}
\usepackage{tcolorbox}
\usepackage{caption}
\usepackage[normalem]{ulem}
\usepackage{authblk}
\newcommand{\authcount}[1]{} 
\renewcommand*{\contentsname}{Contents}

\usepackage{color}
\usepackage[usenames,dvipsnames,svgnames,table]{xcolor}
\usepackage{hyperref}
\hypersetup{
     colorlinks   = true,
     citecolor    = red
}

\usepackage{tocloft}

\usepackage{booktabs}
\usepackage{colortbl}
\usepackage{pdflscape}
\usepackage{stmaryrd}
\usepackage{tikz}
\usetikzlibrary{calc}
\usetikzlibrary{arrows.meta,decorations.markings}
\usetikzlibrary{positioning,shapes,arrows.meta,decorations.pathreplacing,decorations.markings,fit}
\usetikzlibrary{decorations.pathmorphing, decorations.markings, decorations.shapes, arrows.meta, shapes.geometric}
\usetikzlibrary{patterns}
\usetikzlibrary{shapes.gates.logic.US, shapes}
\tikzset{
  arrowmiddle/.style={
    thick,
    postaction={
      decorate,
      decoration={
        markings,
        mark=at position 0.5 with {\arrow{>}}
      }
    }
  }
}

\tikzset{
  arrowright/.style={
    thick,
    postaction={
      decorate,
      decoration={
        markings,
        mark=at position 0.85 with {\arrow{>}}
      }
    }
  }
}
\tikzset{
  arrowleft/.style={
    thick,
    postaction={
      decorate,
      decoration={
        markings,
        mark=at position 0.15 with {\arrow{>}}
      }
    }
  }
}
\tikzset{
  arrow/.style={
      thick,
      postaction={decorate},
      decoration={markings, mark=at position 0.7 with {\arrow{Latex[length=3mm,width=2mm]}}}
    },
    doublearrow/.style={thick, {Latex[length=3mm,width=2mm]}-{Latex[length=3mm,width=2mm]}}
}

\usepackage[T1]{fontenc}
\usepackage{lmodern}

\allowdisplaybreaks[1]

\urldef{\mailsa}\path|{alfred.hofmann, ursula.barth, ingrid.haas,
frank.holzwarth,| \urldef{\mailsb}\path|anna.kramer, leonie.kunz,
christine.reiss, nicole.sator,|
\urldef{\mailsc}\path|erika.siebert-cole, peter.strasser,
lncs}@springer.com|
\title{Registered Attribute-Based Encryption with Publicly Verifiable Certified Deletion, Everlasting Security, and More}
\titlerunning{Registered Attribute-Based Encryption}
\author{Shayeef Murshid, Ramprasad Sarkar, Mriganka Mandal}
\authorrunning{Murshid et al.}
\institute{Cryptology and Security Research Unit\\ R. C. Bose Centre for Cryptology and Security\\ Indian Statistical Institute\\ 203 B.T. Road, Kolkata 700108, India\\
\email{shayeef.murshid91@gmail.com, \{rpsarkar\_p,mriganka\}@isical.ac.in}
}
\begin{document}
\let\oldaddcontentsline\addcontentsline
\def\addcontentsline#1#2#3{}
\maketitle
\def\addcontentsline#1#2#3{\oldaddcontentsline{#1}{#2}{#3}}
\input{0_abstract}
\clearpage
\hypersetup{linkcolor=blue}
\tableofcontents

\newpage
\input{1_intro}
\input{1.1_Tech-overview}
\input{2_prel}

\input{3_RABE_CD-CED-Framework}

\input{4_Shadow-RABE}

\input{5-6_RABE-CD}

\input{7-8_RABE-CED}
\bibliographystyle{alpha}
\bibliography{reference-url}
\end{document}

%% file: 0_abstract.tex
\begin{abstract}
Certified deletion ensures that encrypted data can be irreversibly deleted, preventing future recovery even if decryption keys are later exposed. Although existing works have achieved certified deletion across various cryptographic primitives, they rely on central authorities, leading to inherent escrow vulnerabilities. This raises the question of whether certified deletion can be achieved in decentralized frameworks such as Registered Attribute-Based Encryption (\textsf{RABE}) that combines fine-grained access control with user-controlled key registration. This paper presents the \emph{first} \textsf{RABE} schemes supporting certified deletion and certified everlasting security. Specifically, we obtain the following: \vspace{0.1cm}

\begin{itemize}\setlength\itemsep{0.2em}
    \item[$-$] \justifying We first design a privately verifiable \textsf{RABE} with Certified Deletion (\textsf{RABE-CD}) scheme by combining our newly proposed shadow registered \textsf{ABE} (\textsf{Shad-RABE}) with one-time symmetric key encryption with certified deletion. 

    \item[$-$] \justifying We then construct a publicly verifiable \textsf{RABE-CD} scheme using \textsf{Shad-RABE}, witness encryption, and one-shot signatures, allowing any party to validate deletion certificates without accessing secret keys.  

    \item[$-$] \justifying We also extend to privately verifiable \textsf{RABE} with Certified Everlasting Deletion (\textsf{RABE-CED}) scheme, integrating quantum-secure \textsf{RABE} with the certified everlasting lemma. Once a certificate is produced, message privacy becomes information-theoretic even against unbounded adversaries.  

    \item[$-$] \justifying We finally realize a publicly verifiable \textsf{RABE-CED} scheme by employing digital signatures for the BB84 states, allowing universal verification while ensuring that deletion irreversibly destroys information relevant to decryption.  
\end{itemize}
 
\end{abstract}

%% file: 1_intro.tex
\section{Introduction}\label{intro}
%
Certified deletion ensures that encrypted data, once deleted, becomes permanently and irretrievably unrecoverable. Achieving this property is a fundamental challenge at the intersection of classical cryptography and quantum information. In purely classical settings, this problem is intractable: digital information can be copied arbitrarily and stored indefinitely, making it impossible to ensure the destruction of every copy. An adversary may retain hidden replicas and later exploit leaked decryption keys or future advances in cryptanalysis to recover the underlying plaintext. Thus, within the classical cryptographic paradigm, certified deletion is widely regarded as unattainable.\vspace{0.2cm}

\noindent \textbf{Quantum Certified Deletion.} Quantum mechanics provides a framework for addressing this limitation through the no-cloning theorem \cite{wootters2009no}, which forbids the duplication of arbitrary quantum states. This intrinsic unclonability sharply distinguishes quantum information from its classical counterpart and enables cryptographic guarantees unattainable in classical settings. Exploiting this property, in 2020, Broadbent \textit{et al.} \cite{broadbent2020quantum} first introduced the quantum encryption with certified deletion protocol, wherein classical messages are encoded into quantum ciphertexts that cannot be copied. In this setting, a recipient of a quantum ciphertext can either decrypt it to recover the message or perform a deletion procedure that produces a verifiable certificate for deletion. Once deleted, the message becomes irretrievable, even if the decryption key is later revealed. This paradigm continues a broader line of quantum-enabled cryptographic primitives that leverage physical unclonability, including quantum key distribution \cite{bennett2014quantum}, quantum money \cite{wiesner1983conjugate}, and quantum secret sharing \cite{hillery1999quantum}.\vspace{0.20cm}
%

\noindent Since its inception, the notion of certified deletion has progressively been extended to a diverse set of cryptographic primitives \cite{hiroka2021quantum,bartusek2023obfuscation,bartusek2023cryptography,bartusek2024secret}. In particular, Bartusek \textit{et al.} \cite{bartusek2023cryptography} showed that for any encryption $\textsf{X} \in$ \{public-key, attribute-based, fully homomorphic, witness, timed release\}, a generic compiler can transform any (post-quantum) $\textsf{X}$-encryption scheme into one supporting certified deletion. In addition, they presented a compiler that upgrades statistically binding commitments into statistically binding commitments with certified everlasting hiding. Collectively, these results underscore the increasing generality of certified deletion and highlight its emerging role as a foundational primitive in hybrid classical-quantum cryptographic constructions.\vspace{0.2cm}

\noindent \textbf{Challenges in Centralization.} Despite these advances, most existing general-purpose frameworks for certified deletion, including those of \cite{hiroka2021quantum,bartusek2023obfuscation,bartusek2023cryptography,bartusek2024secret}, rely crucially on \emph{centralized trust}. In nearly all known constructions, a designated authority is entrusted with generating and distributing cryptographic keys. Such a trust model inherently introduces key-escrow vulnerabilities: the security of the entire system collapses if the authority is compromised, reintroducing the same single point of failure that traditional cryptography has long struggled to eliminate. Eliminating this reliance on centralized trust is therefore not just an optimization in terms of security but a prerequisite for deploying certified deletion in realistic large-scale systems.\vspace{0.2cm}

\noindent A notable exception to this centralized approach is the foundational work of Broadbent \textit{et al.} \cite{broadbent2020quantum}, whose protocol operates without a central authority. However, their construction is inherently restricted to a one-to-one communication model and cannot be readily extended to multi-user or public-key settings without a central authority. This limitation exposes a fundamental gap in the current landscape and motivates the following central question:

\begin{center}
    \textit{Can certified deletion be achieved in advanced encryption schemes without relying on a trusted authority?}
\end{center}

\vspace{0.2cm}
\noindent \textbf{Registered Attribute-Based Encryption.} Attribute-Based Encryption (\textsf{ABE}) \cite{sahai2005fuzzy} extends the traditional public-key encryption paradigm that supports fine-grained access control over encrypted data. In a ciphertext-policy \textsf{ABE} (\textsf{CP-ABE}) scheme, users are issued secret keys associated with attribute sets, while ciphertexts are generated with embedded access policies. A user can decrypt a ciphertext if and only if the attributes bound to their secret key satisfy the ciphertext policy. The dual construction, referred to as key-policy \textsf{ABE} (\textsf{KP-ABE}), interchanges the roles of attributes and access policies. 
However, both the conventional \textsf{ABE} schemes \cite{sahai2005fuzzy,goyal2006attribute,gorbunov2015attribute} are inherently centralized: a single authority issues all user secret keys, thereby introducing the same key-escrow vulnerability. Registration-Based Encryption (\textsf{RBE}) \cite{garg2018registration} overcomes this limitation by decentralizing key generation, users generate their own key pairs and register only their public keys, while a curator aggregates them into a master public key without holding any private information. This model eliminates the problem of central key escrow while preserving global encryptability. Building on this foundation, Registered Attribute-Based Encryption (\textsf{RABE}) \cite{hohenberger2023registered} extends decentralization to attribute-controlled settings: users independently register both their keys and attributes, and encryption policies are evaluated relative to the registered attributes. Recent constructions achieve practicality under bilinear pairings \cite{hohenberger2023registered,zhu2023registered} and post-quantum security under lattice assumptions \cite{champion2025registered,zhu2025black,pal2025registered,wee2025unbounded}.\vspace{0.2cm}

\noindent The convergence of certified deletion with decentralized cryptography naturally raises the question of whether these two paradigms can be integrated without compromising security. Extending certified deletion to \textsf{RABE} would provide an unprecedented combination of properties: decentralized trust (without the key-escrow problem), fine-grained access control, and provable guarantees of irreversible data deletion. This raises the following research question:

\begin{center}
\textit{Can certified deletion be incorporated into the registered framework?}
\end{center}

\noindent However, this integration is technically challenging. In \textsf{RABE}, users control their own secret keys, while the curator manages and updates the helper secret key for users. In contrast, certified deletion must ensure that all the decryption capabilities are irreversibly destroyed. Coordinating these independent components to guarantee complete information destruction is difficult.

\vspace{0.2cm}
\noindent \textbf{Certified Everlasting Security.} An extension of certified deletion security is \emph{certified everlasting security}, which guarantees that once deletion certificates are produced, security becomes information-theoretic: even adversaries with unlimited computational power cannot recover the information after deletion. Recent works have demonstrated certified everlasting security across diverse primitives, including commitments \cite{hiroka2022certified}, zero-knowledge proofs \cite{hiroka2022certified}, garbling schemes \cite{hiroka2024certified}, non-committing encryption \cite{hiroka2024certified}, predicate encryption \cite{hiroka2024certified}, fully-homomorphic encryption \cite{bartusek2023cryptography}, time-release encryption \cite{bartusek2023cryptography}, witness encryption \cite{bartusek2023cryptography}, and functional encryption \cite{hiroka2024certified}. These results suggest that certified everlasting security forms a unifying paradigm for quantum cryptography.

\vspace{0.2cm}
\noindent Taken together, these developments lead to the following fundamental research question:
\begin{center}
    \textit{Can we construct a registered attribute-based encryption protocol with certified everlasting security?}
\end{center}
\vspace{0.2cm}

\subsection{Our Results}

We provide affirmative answers to all the above questions. To the best of our knowledge, we are the first to present the lattice-based key-policy registered attribute-based encryption schemes that support both certified deletion and certified everlasting deletion. We summarize our main findings as follows: \vspace{0.2cm}

\noindent We first introduce a new primitive  (key-policy) shadow-registered attribute-based encryption (\textsf{Shad-RABE}) {with receiver non-committing security}, constructed from an indistinguishability obfuscation ($i\mathcal{O}$), zero-knowledge argument (\textsf{ZKA}), and registered attribute-based encryption (\textsf{RABE}) . The \textsf{Shad-RABE} primitive serves as the foundational building block for subsequent constructions. In addition, we show that lattice-based instantiations of \textsf{Shad-RABE} can be obtained by combining the \textsf{RABE} scheme of Champion \textit{et al.} \cite{champion2025registered}, the \textsf{ZKA} of Bitansky \textit{et al.} \cite{bitansky2020post}, and the $i\mathcal{O}$ framework of Cini \textit{et al.} \cite{cini2025lattice}. A detailed description appears in Section~\ref{section:shad-RABE}.  \vspace{0.2cm}

\noindent We also design the Registered Attribute-Based Encryption with Certified Deletion (\textsf{RABE-CD}) under two different verification models. The first construction, referred to as \textsf{RABE-PriVCD}, achieves privately verifiable deletion by combining our proposed \textsf{Shad-RABE} with One-Time Symmetric Key Encryption with Certified Deletion (\textsf{OTSKE-CD}) \cite{broadbent2020quantum}. This combination requires both the underlying \textsf{RABE} functionality and the simulation features required for certified deletion. We also demonstrate a lattice-based instantiation of \textsf{RABE-PriVCD} by integrating our generic construction with a lattice-based \textsf{Shad-RABE} secure under $\ell$-succinct \textsf{LWE}, plain \textsf{LWE} and equivocal \textsf{LWE} assumptions (\textit{cf.} Section \ref{Inst:Shadow-RABE}) along with an unconditionally secure \textsf{OTSKE-CD} scheme \cite{broadbent2020quantum}. The resulting protocol achieves certified deletion security, which is discussed in detail in Section \ref{subsection-RABE-PriVCD}. The second construction, referred to as \textsf{RABE-PubVCD}, which integrates our proposed \textsf{Shad-RABE}, witness encryption, and one-shot signatures, allowing deletion certificates to be publicly validated by any party without access to secret keys. We instantiate \textsf{RABE-PubVCD} over lattices by combining three central building blocks: the lattice-based \textsf{Shad-RABE} protocol secure under $\ell$-succinct \textsf{LWE}, plain \textsf{LWE} and equivocal \textsf{LWE} (\textit{cf.} Section \ref{Inst:Shadow-RABE}); extractable witness encryption instantiated from the \textsf{LWE} assumption and its variants \cite{vaikuntanathan2022witness,tsabary2022candidate}; and one-shot signatures \cite{amos2020one,shmueli2025one}. The resulting scheme presents the lattice-based publicly verifiable \textsf{RABE-CD} design, as shown in Section~\ref{subsection-RABE-PubVCD}.\vspace{0.2cm}

\noindent We then further extend our framework to \textsf{RABE} with Certified Everlasting Deletion (\textsf{RABE-CED}), which guarantees message privacy even against computationally unbounded adversaries once a valid deletion certificate is issued. 
In the privately verifiable setting, our construction combines quantum-secure \textsf{RABE} with the certified everlasting lemma \cite{bartusek2023cryptography}, employing quantum states prepared in random bases together with classical encryption that conceals the basis information. Since generating a deletion certificate requires measurement of these quantum states, the necessary decryption information is irreversibly destroyed, ensuring the indistinguishability of ciphertexts even against unbounded adversaries. The detailed description of this construction is presented in Section~\ref{sec:RABE-PrivCED}. In the publicly verifiable setting, we utilize one-time unforgeable signatures for BB84 states, which allow universal verification of deletion certificates. The certified everlasting lemma guarantees that BB84 states $\lvert x \rangle_\theta$ cannot be reconstructed without the bases $\theta$, and any incorrect measurement permanently erases the information about $x$. This construction is also analyzed in depth in Section~\ref{sec:RABE-PubVCED}.

\subsection{Related Work}\label{subsection:related-works}

\noindent\textbf{Threshold and Multi-Authority Approaches.}  
The key escrow problem in identity-based encryption \cite{boneh2001identity,paterson2008security} and attribute-based encryption \cite{sahai2005fuzzy,goyal2006attribute,gorbunov2015attribute} motivated new approaches to reduce centralized trust. An early line of work \cite{boneh2001identity,chen2002applications,paterson2008security,kate2010distributed} applied threshold cryptography, where the master secret key was divided among several authorities, and the generation of keys required their collective participation. Although effective in reducing single point of failure, these schemes remained vulnerable to collusion between authorities. Another direction was multi-authority \textsf{ABE} \cite{chase2007multi,lin2008secure,muller2008distributed,chase2009improving,lewko2011decentralizing,rouselakis2015efficient,datta2021decentralized-a,datta2021decentralized-b,waters2022multi}, where separate attribute domains were controlled by different entities. This distribution limited the power of individual authorities, but did not fundamentally eliminate the systemic risk posed by the compromise of trusted authorities.

\vspace{0.2cm}  
\noindent\textbf{Registration-Based Encryption.}  
In 2018, a paradigm shift occurred with the introduction of Registration-Based Encryption (\textsf{RBE}) \cite{garg2018registration}, which fundamentally restructured the trust model rather than redistributing existing vulnerabilities. This line of work gave rise to Registered Attribute-Based Encryption (\textsf{RABE}) \cite{hohenberger2023registered}, where users generate their own secret/public key pair and register the corresponding public keys along with attribute information through a curator. The curator aggregates these public keys into a succinct master public key, while users employ helper secret keys to enable decryption. Early \textsf{RABE} schemes \cite{hohenberger2023registered,zhu2023registered} relied on bilinear pairings over composite order groups and supported policies that can be represented through linear secret sharing, laying the groundwork for decentralized attribute-based access control.  

\vspace{0.2cm}  
\noindent\textbf{Revocation in Registered \textsf{ABE}.}  
Practical deployment prompted the study of revocation mechanisms within \textsf{RABE}. The first formal analysis introduced Deletable \textsf{RABE} (\textsf{DRABE}) and Directly Revocable \textsf{RABE} (\textsf{RRABE}) \cite{asano2025key}, addressing temporal access control in dynamic systems. Later work \cite{li2025revocable} expanded this to fine-grained revocation, supporting both file-specific access removal and permanent user deregistration. These advances enhanced the flexibility of the administrator while preserving the decentralized trust foundation of registered frameworks.  

\vspace{0.2cm}  
\noindent\textbf{Post-Quantum Secure \textsf{RABE}.}  
Significant theoretical progress has extended \textsf{RABE} into the post-quantum setting. A scheme~\cite{zhu2025black} supporting unbounded users with a transparent setup was built from the private-coin evasive \textsf{LWE} assumption. It avoids random oracles but relies on complex assumptions that are now under scrutiny~\cite{vaikuntanathan2022witness,branco2024pseudorandom,brzuska2024evasive}. In 2025, Champion \textit{et al.} \cite{champion2025registered} proposed a key-policy \textsf{RABE} for bounded-depth circuits under the $\ell$-succinct \textsf{LWE} assumption in the random oracle model, trading stronger expressiveness for reliance on heuristic models. Recently, Pal \textit{et al.}~\cite{pal2025registered} achieved post-quantum secure ciphertext-policy and key-policy \textsf{RABE} schemes and registered predicate encryption, supporting unbounded circuit depth. In parallel, multi-authority extensions \cite{lu2025multi} introduced independent curators for separate attribute domains, enabling policies across multiple authorities while maintaining the registration-based structure.  

\vspace{0.2cm}  
\noindent\textbf{Quantum Encryption with Certified Deletion.}  
In parallel, quantum cryptography introduced mechanisms for verifiable data deletion. Quantum encryption with certified deletion \cite{broadbent2020quantum} exploits the no-cloning property of quantum states, ensuring that deleted information cannot be recovered, even by unbounded adversaries. This principle has been extended to public-key encryption, \textsf{ABE}, fully homomorphic encryption, and witness encryption \cite{hiroka2021quantum,bartusek2023obfuscation,bartusek2023cryptography,bartusek2024secret}, highlighting its wide applicability.  

\vspace{0.2cm}  
\noindent\textbf{Certified Everlasting Security.}  
The concept of certified deletion has further evolved into certified everlasting security, which ensures that once deletion occurs, privacy persists indefinitely. Examples include certified everlasting commitments and zero-knowledge proofs for \textsf{QMA} \cite{hiroka2022certified}, as well as frameworks for all-or-nothing encryption based on BB84 states \cite{bartusek2023cryptography}. These results extend naturally to functional encryption, garbled circuits, and non-committing encryption \cite{hiroka2024certified}, demonstrating the breadth of everlasting security guarantees.  

\vspace{0.2cm}  
\noindent While many certified deletion systems are limited to private verifiability, where verification of deletion is restricted to the designated parties, practical applications often require publicly verifiable guarantees. Addressing this need, recent works have introduced publicly verifiable certified deletion for public-key encryption, \textsf{ABE}, and fully homomorphic encryption \cite{poremba2022quantum,kitagawa2023publicly}, thereby enabling universal auditability and supporting regulatory compliance in decentralized settings.

%% file: 1.1_Tech-overview.tex
\section{Technical Overview}\label{tech-overview}
This section presents a high-level technical overview of our constructions for registered attribute-based encryption with certified deletion and certified everlasting deletion, considering both privately and publicly verifiable frameworks, depicted in Fig. \ref{tech-overview-fig}. \vspace{0.2cm}

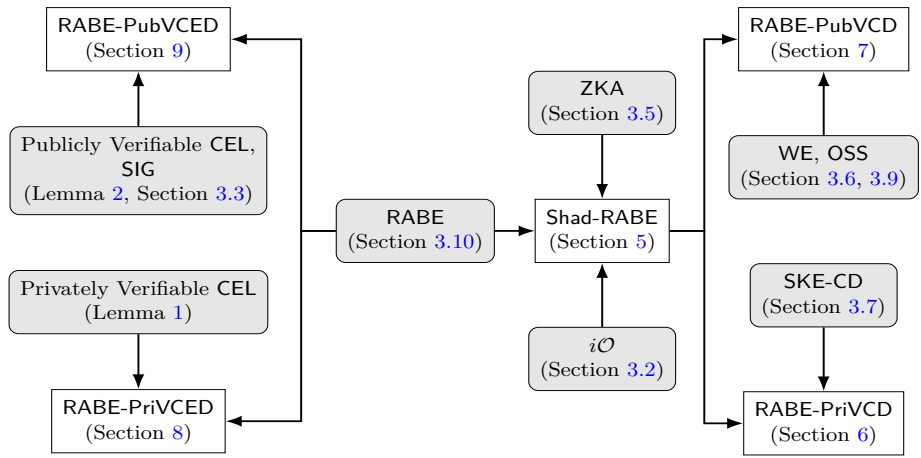
\begin{figure}
\centering
\resizebox{1\linewidth}{0.85\height}{%
\begin{tikzpicture}[
  node/.style={draw, minimum size=8mm, rectangle},
  arr/.style={-Latex, thick},
  special/.style={draw, rounded corners, fill=gray!20, minimum size=8mm, rectangle},
  box/.style={draw, rounded corners, inner sep=10pt}
]

\node[node] (C) at (0.5,-1) {\begin{tabular}{c}\textsf{Shad-RABE}\\ (Section~\ref{section:shad-RABE})\end{tabular}};
\node[special] (RABE) at (-2.2,-1) {\begin{tabular}{c}\textsf{RABE}\\ (Section~\ref{model:RABE})\end{tabular}};

\node[special] (Top) [above=of C] {{\begin{tabular}{c}\textsf{ZKA}\\ (Section \ref{def:zka})\end{tabular}}};
\node[special] (Bottom) [below=of C] {{\begin{tabular}{c}$i\mathcal{O}$\\ (Section~\ref{def:iO-framework})\end{tabular}}};

\draw[arr] (Top.south) -- (C.north);   
\draw[arr] (Bottom.north) -- (C.south); 

\node[node] (L1) at (-6.2,2) {{\begin{tabular}{c}\textsf{RABE-PubVCED}\\ (Section~\ref{sec:RABE-PubVCED})\end{tabular}}};
\node[special] (L2) [below= 0.86 cm of L1] {{\begin{tabular}{c} Publicly Verifiable \textsf{CEL},\\ \textsf{SIG}\\(Lemma~\ref{lem:pv-certified-everlasting}, Section~\ref{def:sig-framework})\end{tabular}}};
\node[special] (L3) [below=0.86 cm of L2] {{\begin{tabular}{c} Privately Verifiable \textsf{CEL}\\ (Lemma~\ref{lem:certified-everlasting-bk})\end{tabular}}};
\node[node] (L4) [below=0.86 cm of L3] {{\begin{tabular}{c}\textsf{RABE-PriVCED}\\ (Section~\ref{sec:RABE-PrivCED})\end{tabular}}};

\draw[arr] (L2) -- (L1); 
\draw[arr] (L3) -- (L4); 
\draw[arr] (RABE.east) -- (C.west);   

\draw[arr] (RABE.west) -- ++(-0.5,0) |- (L1.east);
\draw[arr] (RABE.west) -- ++(-0.5,0) |- (L4.east);

\node[node] (R1) at (3.7,2) {{\begin{tabular}{c}\textsf{RABE-PubVCD}\\ (Section~\ref{subsection-RABE-PubVCD})\end{tabular}}};
\node[special] (R2) [below=of R1] {{\begin{tabular}{c}\textsf{WE}, \textsf{OSS}\\ (Section~\ref{def:we-framework}, \ref{def:oss-framework})\end{tabular}}};
\node[special] (R3) [below=of R2] {{\begin{tabular}{c}\textsf{SKE-CD}\\ (Section~\ref{subsection-enc-cd})\end{tabular}}};
\node[node] (R4) [below=of R3] {{\begin{tabular}{c}\textsf{RABE-PriVCD}\\ (Section~\ref{subsection-RABE-PriVCD})\end{tabular}}};

\draw[arr] (R2) -- (R1); 
\draw[arr] (R3) -- (R4); 

\draw[arr] (C.east) -- ++(0.5,0) |- (R1.west);
\draw[arr] (C.east) -- ++(0.5,0) |- (R4.west);

\end{tikzpicture}
}
\vspace{-3mm}

    \caption{High-level Overview of Our Constructions}
    \label{tech-overview-fig}
\end{figure}

\noindent\textbf{Notations.} We begin by introducing some basic notation that will be used throughout this paper. The notation $x \leftarrow X$ denotes that $x$ is selected uniformly at random from a finite set $X$. If $A$ is an algorithm (which may be classical, probabilistic, or quantum), then $y \leftarrow A(x)$ represents the output of $A$ on input $x$, and we write $y \leftarrow A(x; r)$ when $A$ uses explicit randomness $r$. $\lambda$ and $\tau$ denote a security parameter and a policy-family parameter respectively. For a distribution $D$, we write $x \leftarrow D$ to denote sampling according to $D$. The assignment $y:= z$ indicates that the variable $y$ is set or defined to be equal to $z$. For a positive integer $n$, we denote the set $\{1, 2, \ldots, n\}$ by $[n] := \{1, \ldots, n\}$, and $[N]_p$ represents the set of all $p$-element subsets of $\{1, 2, \ldots, N\}$. The symbol $\lambda$ denotes the security parameter. For classical strings $x$ and $y$, their concatenation is written as $x \| y$. For a bit string $s \in \{0,1\}^n$, both $s_i$ and $s[i]$ refer to its $i$-th bit. The acronym \textsf{QPT} stands for quantum polynomial-time algorithms, while \textsf{PPT} refers to probabilistic polynomial-time algorithms using classical computation. A function $f: \mathbb{N} \to \mathbb{R}$ is said to be negligible if, for every constant $c > 0$, there exists $\lambda_0 \in \mathbb{N}$ such that $f(\lambda) < \lambda^{-c}$ for all $\lambda > \lambda_0$. We use the notation $f(\lambda) \leq \mathsf{negl}(\lambda)$ to express that $f$ is a negligible function. For a bit string $x$, we write $x_j$ to mean the $j$-th bit of $x$. For bit strings $x, \theta \in \{0, 1\}^{\ell}$, we write $\ket{x}_{\theta}$ to mean the BB84 state $\bigotimes_{j \in [\ell]} H^{\theta_j} \ket{x_j}$ where $H$ is the Hadamard operator. The trace distance, denoted by $\mathsf{TD}(\cdot,\cdot)$, between two states $\rho$ and $\sigma$ is defined as
\(
\mathsf{TD}(\rho,\sigma) = \|\rho - \sigma\|_{\mathrm{tr}},
\) where \(
\|A\|_{\mathrm{tr}} \coloneqq \operatorname{Tr}\!\big(\sqrt{A^\dagger A}\big).
\) We use the symbol $\stackrel{c}{\approx}$ to denote computational indistinguishability between distributions.

\vspace{-0.2cm}

\subsection{Towards Registered \textsf{ABE-CD} from \textsf{ABE-CD}}
In a registered attribute-based encryption scheme, users independently generate key pairs $(\mathsf{pk},\mathsf{sk})$ using local randomness, ensuring that no trusted party is ever exposed to their secret information. The users then submit only their public keys, together with their policies, to a curator. The curator performs an administrative role: aggregating submitted keys into an evolving master public key $\mathsf{mpk}$ and producing helper secret key $\mathsf{hsk}$ for registered users. Crucially, the curator never learns the secret keys of users, preserving security even against a malicious or compromised curator. \vspace{0.2cm}

\noindent Verifiable deletion for the ciphertext further augments privacy: a recipient can produce a classical certificate that proves the irreversible destruction of the quantum ciphertext. {A valid certificate of deletion guarantees that no adversary can recover the message, even in possession of the secret keys}. This verifiable destruction mechanism was introduced in attribute-based encryption with certified deletion (\textsf{ABE-CD}) of Hiroka \textit{et al.}~\cite{hiroka2021quantum}. However, integrating such a feature into a registered \textsf{ABE} framework is highly non-trivial, as the curator does not hold any secret information, such as the master secret key \textsf{msk}, in the registered \textsf{ABE} system. Before developing our construction, we recall the definition of \textsf{ABE-CD} \cite{hiroka2021quantum}: 

\begin{itemize}\setlength\itemsep{0.4em}
    \item[$-$] \textbf{\textsf{Setup}}: Given the security parameter $1^\lambda,$ this algorithm outputs a public key $\mathsf{pk}$ and a master secret key $\mathsf{msk}$.
       
    \item[$-$] \textbf{\textsf{KeyGen}}: This algorithm takes as input $\mathsf{msk}$ and a policy $P$ to return a secret key $\mathsf{sk}_P$. 
    
    \item[$-$] \textbf{\textsf{Encryption}}: Taking as input $\mathsf{pk}$, an attribute $X$, and a message $\mu$, this algorithm outputs a verification key $\mathsf{vk}$ and a ciphertext $\mathsf{CT}$.

    \item[$-$] \textbf{\textsf{Decryption}}: On input the secret key $\mathsf{sk}_P$ and a ciphertext $\mathsf{CT}$, the decryption algorithm outputs a message $\mu'$ or $\bot$.
    
    \item[$-$] \textbf{\textsf{Delete}}: The delete algorithm takes as input a ciphertext $\mathsf{CT}$ and outputs a certification $\mathsf{cert}$.
 
    \item[$-$] \textbf{\textsf{Verify}}:  Given the verification key $\mathsf{vk}$ and the certificate $\mathsf{cert}$,  this algorithm outputs $\top$ or $\bot$.
\end{itemize}

\noindent \textbf{Starting Point.} As a warm-up to our actual scheme, in the following, we first describe the attribute-based encryption with certified deletion (\textsf{ABE-CD}) scheme proposed by Hiroka \textit{et al.} \cite{hiroka2021quantum}. Their framework is built upon receiver non-committing attribute-based encryption (\textsf{RNCABE}), which serves as a building block for achieving \textsf{ABE-CD} scheme. The main challenge here is to reconcile the fine-grained access control provided by attribute-based encryption with the irreversible deletion guarantees required by certified deletion. To address this, Hiroka \textit{et al.} combine \textsf{RNCABE} with a one-time secret key encryption protocol that supports certified deletion, thus embedding deletion capabilities directly into the attribute-based paradigm. \vspace{0.2cm}

\noindent Understanding their construction requires first grasping the underlying framework of \textsf{RNCABE}, which provides the essential simulation-based properties needed for security in the presence of deletion functionality. Therefore, we briefly review the formal syntax of \textsf{RNCABE}, defined over a message space $\mathcal{M}$, an attribute space $\mathcal{X}$, and a policy space $\mathcal{P}$.

\begin{itemize}\setlength\itemsep{0.5em}
    \item[$-$] \textbf{\textsf{Setup}}: Given the security parameter $1^\lambda$, the setup algorithm outputs a master public key $\mathsf{mpk}$ and a master secret key $\mathsf{msk}$.
    
    \item[$-$] \textbf{\textsf{KeyGen}}: On input of the master secret key $\mathsf{msk}$ and a policy $P$, the key generation algorithm generates the corresponding secret key $\mathsf{sk}_P$.

    \item[$-$] \textbf{\textsf{Encryption}}: Using the master public key $\mathsf{mpk}$, an attribute $X$, and a message $m$, the encryption algorithm produces a ciphertext $\mathsf{CT}$.

    \item[$-$] \textbf{\textsf{Decryption}}: The decryption algorithm recovers the message $m'$ from $\mathsf{CT}$ using a secret key $\mathsf{sk}$, or outputs $\perp$ in case of failure.
    
    \item[$-$] \textbf{\textsf{FakeSetup}}: The fake setup algorithm produces a master public key $\mathsf{mpk}$ together with auxiliary information $\mathsf{aux}$, serving as an alternative to the real setup.
    
    \item[$-$] \textbf{\textsf{FakeCT}}: Given the master public key $\mathsf{mpk}$, an auxiliary information $\mathsf{aux}$, and an attribute $X$, the fake encryption algorithm outputs a fake ciphertext $\widetilde{\mathsf{CT}}$.
    
    \item[$-$] \textbf{\textsf{FakeSK}}: The fake key generation algorithm generates a fake secret key $\widetilde{\mathsf{sk}}$ corresponding to policy $P$.
    
    \item[$-$] \textbf{\textsf{Reveal}}: On input the master public key $\mathsf{mpk}$, an auxiliary information $\mathsf{aux}$, a fake ciphertext $\widetilde{\mathsf{CT}}$, and a message $m$, the reveal algorithm outputs a fake master secret key $\widetilde{\mathsf{msk}}$.
\end{itemize}

\noindent\textbf{First Modification: Towards Registered \textsf{ABE} with Certified Deletion.} As an initial step towards our goal, we focus on constructing a registered attribute-based encryption scheme with certified deletion, without yet addressing its full-fledged security guarantees. In a registered attribute-based setting, users must generate their own public and secret key pairs independently, without relying on any trusted central authority. Consequently, to extend the existing \textsf{ABE-CD} framework into a registered setting, we need to introduce new algorithms and modify existing ones from the protocol of Hiroka \textit{et al.} \cite{hiroka2021quantum}. In particular, the following changes are required:

\begin{itemize}\setlength\itemsep{0.5em}
    \item[$-$] \textbf{\textsf{Setup}}: Taking as input the length of the security parameter $\lambda$ and the size of the attribute universe $\mathcal{U}$, the setup algorithm outputs a common reference string $\mathsf{crs}$.

    \item[$-$] \textbf{\textsf{KeyGen}}: Given the common reference string $\mathsf{crs}$, an auxiliary state $\mathsf{aux}$, and a policy $P$, each user independently runs the key generation algorithm to derive its own key pair $(\mathsf{pk}, \mathsf{sk})$ before registration.
    
    \item[$-$] \textbf{\textsf{RegPK}}: On input of the common reference string \textsf{crs}, an auxiliary state \textsf{aux}, a user's public key $\textsf{pk}$, and a policy $P$, the registration algorithm updates the master public key \textsf{mpk}, and the auxiliary state \textsf{aux} to incorporate the registration of the new user.
        
    \item[$-$] \textbf{\textsf{Update}}: The update algorithm, given the common reference string \textsf{crs}, auxiliary state \textsf{aux}, and a registered public key \textsf{pk}, outputs a helper secret key $\mathsf{hsk}$ that allows the associated user to decrypt ciphertexts for which they are authorized.
\end{itemize}

\noindent In the registered setting, the functionality of the trusted authority is replaced by a transparent public entity, referred to as the curator, which executes the \textsf{RegPK} and \textsf{Update} algorithms. Before registering a public key \textsf{pk}, the curator can invoke a validation procedure to ensure its correctness, as the key may have been maliciously generated.\vspace{0.2cm}

\noindent \textbf{Second Modification: Architecture of \textsf{Shad-RABE}.}
{Building on the model of Hiroka \textit{et al.}~\cite{hiroka2021quantum}, we next aim to extend the notion of receiver non-committing attribute-based encryption (\textsf{RNCABE}) to a decentralized setting.} This extension introduces several conceptual challenges, primarily due to the need for specialized simulation algorithms that preserve the registration framework. Conceptually, this can be viewed as adapting \textsf{RNCABE} to the registered framework.\vspace{0.2cm}

\noindent In the \textsf{RNCABE} framework of Hiroka \textit{et al.} \cite{hiroka2021quantum}, the trusted authority holds a master secret key $\mathsf{msk}$, which is used to generate and distribute user secret keys. In contrast, within the registered setting, the curator is a transparent and publicly verifiable entity, and hence cannot hold or access any secret information. This fundamental restriction makes a direct adaptation of \textsf{RNCABE} infeasible.\vspace{0.2cm}

\noindent {Alternatively, one may view the incorporation of registration mechanisms into \textsf{RNCABE} as yielding a registered \textsf{ABE} scheme with non-committing security}. Since both key generation and the distribution of helper secret keys in the \textsf{Update} procedure are already managed by the curator, embedding simulation algorithms within the registered \textsf{ABE} framework provides a natural pathway toward our goal. {In particular, the simulation algorithms are used only within the security experiment, executed by the challenger, and have no bearing on the scheme’s correctness.} We refer to this extended framework as \emph{Shadow Registered Attribute-Based Encryption} (\textsf{Shad-RABE}), which serves as the core building block for our subsequent constructions. Specifically, \textsf{Shad-RABE} augments the classical \textsf{RABE} algorithms with five additional simulation procedures: \vspace{-0.2cm}

\begin{itemize}\setlength\itemsep{0.5em}
    \item[$-$] \textbf{\textsf{SimKeyGen}}: Given the common reference string $\mathsf{crs}$, an auxiliary state $\mathsf{aux}$, an access policy $P$, and an ancillary dictionary $\mathsf{B}$ (possibly empty) indexed by public keys, the simulator runs the key-generation procedure to produce a simulated public key $\widetilde{\mathsf{pk}}$ along with an updated dictionary $\mathsf{B}$.

    \item[$-$] \textbf{\textsf{SimRegPK}}: The simulated registration procedure takes as input the common reference string $\mathsf{crs}$, an auxiliary state $\mathsf{aux}$, a public key $\mathsf{pk}$, an access policy $P$, and the ancillary dictionary $\mathsf{B}$, and outputs a simulated master public key $\widetilde{\mathsf{mpk}}$ together with an updated auxiliary state $\widetilde{\mathsf{aux}}$.

    \item[$-$] \textbf{\textsf{SimCorrupt}}: The simulated corrupt algorithm, on input the common reference string $\mathsf{crs}$, a public key $\mathsf{pk}$, and the ancillary dictionary $\mathsf{B}$, returns a simulated secret key $\widetilde{\mathsf{sk}}$ corresponding to $\mathsf{pk}$.

    \item[$-$] \textbf{\textsf{SimCT}}: Given the simulated master public key $\widetilde{\mathsf{mpk}}$, the dictionary $\mathsf{B}$, {and an attribute $X$, the ciphertext simulation algorithm produces a simulated ciphertext} $\widetilde{\mathsf{ct}}$.

    \item[$-$] \textbf{\textsf{Reveal}}: Given a public key $\widetilde{\mathsf{pk}},$ the dictionary $\mathsf{B}$, a simulated ciphertext $\widetilde{\mathsf{ct}}$, and a target message $\mu$, the reveal algorithm outputs a simulated secret key $\widetilde{\mathsf{sk}}$ that decrypts $\widetilde{\mathsf{ct}}$ to $\mu$.

\end{itemize}

\noindent\textbf{Security for \textsf{Shad-RABE}.} The security foundation of \textsf{Shad-RABE} builds upon a receiver non-committing security framework that combines indistinguishability obfuscation with zero-knowledge arguments possessing statistical soundness and computational zero-knowledge properties. The security analysis employs a sequence of hybrid transformations that systematically transition from real-world operations to ideal-world simulations, ensuring that no polynomial-time adversary can distinguish between these worlds with non-negligible probability.\vspace{0.2cm}

\noindent 
The proof methodology begins with the original \textsf{Shad-RABE} security experiment, where all cryptographic operations are performed according to their original algorithms. The analysis then progressively introduces simulation components through computationally indistinguishable transformations. The first transformation replaces left-or-right decryption circuits embedded within obfuscated programs with left-only variants, thereby eliminating the scheme's dependence on the random selector string $z$, which determines the bit representation to use for each message position. This change relies on the security properties of indistinguishability obfuscation, which ensures that functionally equivalent circuits remain computationally indistinguishable when obfuscated.\vspace{0.2cm}

\noindent Next, to integrate zero-knowledge arguments within the registered framework, we consider constructions that rely on a common reference string \textsf{crs}. In such settings, each new user registration would necessitate an update of the \textsf{crs}. Our initial attempt was to realize \textsf{Shad-RABE} using existing \textsf{NIZK} schemes with updatable common reference strings \cite{groth2018updatable,abdolmaleki2020lift}. However, these schemes either fail to achieve statistical soundness or depend on pairing-based assumptions that are potentially insecure against quantum polynomial-time (\textsf{QPT}) adversaries.\vspace{0.2cm}

\noindent To overcome these limitations, we employ a zero-knowledge argument system that does not require any common reference string, while still guaranteeing soundness and zero-knowledge properties against \textsf{QPT} adversaries. The analysis transitions to simulated \textsf{ZKA} components, where challenge proofs are generated using simulation algorithms: $\widetilde{\pi} \leftarrow \langle{Sim}\rangle (d^*,y)$, where $d^*, y$ are the statement and an element of $\{0,1\}^*$, respectively. This transformation leverages the computational zero-knowledge property of the \textsf{ZKA} protocol to ensure that the simulated proofs are indistinguishable from the real ones. This makes the proof components independent of the actual challenge message. \vspace{0.2cm}

\noindent Another hybrid transformation introduces asymmetric encryption patterns that break the symmetry in how different message bits are encrypted. For each bit position $i$, the construction encrypts different values depending on the selector bit: $\mathsf{CT}^*_{i,z[i]} \leftarrow \mathsf{RABE}.\mathsf{Encrypt}(\widetilde{\mathsf{mpk}}_{i,z[i]}, X^*, \mu[i])$ encrypts the actual message bit, while $\mathsf{CT}^*_{i,1-z[i]} \leftarrow \mathsf{RABE}.\mathsf{Encrypt}(\widetilde{\mathsf{mpk}}_{i,1-z[i]}, X^*, 1-\mu[i])$ encrypts its complement. This asymmetric approach is validated through a position-by-position hybrid argument that relies on the semantic security of the underlying \textsf{RABE} scheme.\vspace{0.2cm}

\noindent Finally, the analysis achieves complete independence from the target message $\mu$ employing a new random string $z^* \leftarrow \{0,1\}^{\ell_m}$ and encrypting fixed values: $\mathsf{CT}^*_{i,z^*[i]} \leftarrow$ $\mathsf{RABE}.\mathsf{Encrypt}(\widetilde{\mathsf{mpk}}_{i,z^*[i]}, X^*, 0)$ and $\mathsf{CT}^*_{i,1-z^*[i]} \leftarrow$ $\mathsf{RABE}.\mathsf{Encrypt}$ $(\widetilde{\mathsf{mpk}}_{i,1-z^*[i]},$ $ X^*, 1)$. This systematic elimination of dependencies, from the random selector string $z$, through real \textsf{ZKA} proofs, to the target message $\mu$ itself, combined with the computational indistinguishability of consecutive transformations, establishes that no polynomial-time adversary can distinguish between real and ideal executions with non-negligible probability. The resulting \textsf{Shad-RABE} primitive provides the simulation capabilities necessary for constructing secure \textsf{RABE} schemes with certified deletion, enabling security proofs that remain valid even when adversaries can adaptively corrupt users and request deletion certificates. \vspace{-0.2cm}

\subsection{{Design of} \textsf{RABE-PriVCD} under \textsf{LWE} Assumptions}
We propose a construction of Registered Attribute-Based Encryption with Privately Verifiable Certified Deletion (\textsf{RABE-PriVCD}), designed to jointly realize decentralized access control and quantum-secure data deletion. The framework follows a hybrid encryption strategy, which distinctly separates the role of registered \textsf{ABE} from the quantum mechanisms required certified deletion.
\vspace{0.15cm}

\noindent At the core of our design lies the observation that the two functionalities, registered \textsf{ABE} and certified deletion, can be combined in a modular manner within a hybrid structure. To this end, we integrate two specialized primitives: 
\begin{itemize}\setlength\itemsep{0.5em}
    \item \textbf{\textsf{Shad-RABE}}, which manages access control, decentralized key registration, and fine-grained policy enforcement.

    \item \textbf{\textsf{SKE-CD}}, symmetric key encryption with certified deletion, which leverages quantum principles to guarantee that deleted data remains irretrievable, even against unbounded adversaries.
\end{itemize}

\noindent\textbf{Two-Layer Encryption.} The construction operates through a two-layer encryption approach. During encryption, a fresh symmetric key is generated and encrypted using \textsf{Shad-RABE} under the specified attribute set, while the actual message is encrypted using \textsf{SKE-CD} with the symmetric key. This design ensures that attributes are enforced through the registered \textsf{ABE} mechanism, while the certified deletion security is derived from the quantum properties of the symmetric scheme. The core encryption algorithm \textsf{Encrypt}$(\mathsf{mpk}, X, \mu)$ demonstrates the hybrid approach:\vspace{-0.2cm}

\begin{itemize}\setlength\itemsep{0.5em}
  \item[$-$] Generate symmetric key $\mathsf{ske.sk} \leftarrow \mathsf{SKE\text{-}CD}.\mathsf{KeyGen}(1^\lambda)$
  \item[$-$] Compute \textsf{Shad-RABE} ciphertext $\mathsf{srabe.ct} \leftarrow \mathsf{Shad\text{-}RABE}.\mathsf{Encrypt}(\mathsf{mpk},$ $ X,$ $\mathsf{ske.sk})$
  \item[$-$] Compute symmetric ciphertext $\mathsf{ske.ct} \leftarrow \mathsf{SKE\text{-}CD}.\mathsf{Enc}(\mathsf{ske.sk}, \mu)$
  \item[$-$] Output ciphertext $\mathsf{ct} := (\mathsf{srabe.ct}, \mathsf{ske.ct})$ and verification key $\mathsf{vk} := \mathsf{ske.sk}$
\end{itemize}

\noindent The security guarantees derive from the interaction between both components:
\begin{itemize}\setlength\itemsep{0.5em}
    \item \textbf{Access Control Security:} \textsf{Shad-RABE} ensures that only users with policies $P$ satisfying $P(X)=1$ can recover the symmetric key $k$. This holds even for adaptive corruptions, dynamic user registration, and key update operations.

    \item \textbf{Certified Deletion Security:} {Once a valid deletion certificate is produced, the \textsf{SKE-CD} scheme guarantees that the message $\mu$ can no longer be recovered by any quantum polynomial-time (\textsf{QPT}) adversary.}

    \item \textbf{Composition Security:} The hybrid approach preserves both security properties. Unauthorized users cannot access the information because they cannot recover $k$. After deletion, even authorized users cannot recover the data due to the information-theoretic deletion guarantees.
\end{itemize}

\noindent\textbf{Post-Quantum Realization.} Our construction achieves post-quantum security through a modular framework. Specifically, the \textsf{Shad-RABE} component is realized from lattice-based building blocks, namely: a registered \textsf{ABE} scheme under the $\ell$-succinct \textsf{LWE} assumption \cite{champion2025registered}, a zero-knowledge argument based on plain \textsf{LWE} \cite{bitansky2020post}, and an indistinguishability obfuscation scheme relying on equivocal \textsf{LWE} \cite{cini2025lattice}. These primitives jointly ensure resistance against quantum adversaries and enable simulation techniques required for certified deletion proofs. Complementing this, the \textsf{SKE-CD} component employs an unconditionally secure one-time symmetric encryption with certified deletion \cite{broadbent2020quantum}, thus providing information-theoretic guarantees beyond standard computational assumptions. This separation allows independent analysis of \textsf{ABE} functionality and deletion mechanisms, while their integration delivers decentralized key management, fine-grained access control, and verifiable data deletion with private verification. \vspace{-0.2cm}

\subsection{{Design of \textsf{RABE-PubVCD} under \textsf{LWE} Assumptions}}
In the \textsf{RABE-PriVCD} framework, the verification key coincides with the symmetric key $\mathsf{ske.sk}$ generated by the \textsf{SKE-CD} scheme. Since $\mathsf{ske.sk}$ is used directly in the two-layer encryption of message $\mu$, verification remains inherently private; public verification cannot be achieved in this setting. Consequently, to construct \textsf{RABE-PubVCD}, we cannot rely on \textsf{SKE-CD} protocol. \vspace{0.2cm}

\noindent\textbf{Core Idea.}  
To overcome this limitation, we employ the notion of {one-shot signatures} \cite{amos2020one}. A one-shot signature scheme allows the generation of a classical public key $\mathsf{oss.pk}$ along with a quantum secret key $\mathsf{oss.sk}$. The secret key can be used to sign message $0$ (producing $\sigma_0$) or message $1$ (producing $\sigma_1$), but never both. Importantly, signatures can be verified publicly, making them well-suited for publicly verifiable deletion.  \vspace{0.2cm}

\noindent Our construction integrates one-shot signatures with extractable witness encryption. The encryption of a message $\mu$ proceeds by encoding $\mu$ within a witness encryption instance, where the statement corresponds to the existence of a valid signature on $0$. The deletion certificate, on the other hand, is realized as a one-shot signature on $1$. Once such a certificate is issued, the one-shot property ensures that no valid signature on $0$ can ever be generated, thereby guaranteeing certified deletion. Public verifiability follows immediately, as any third party can check signatures. In order to prevent an adversary from decrypting the ciphertext before issuing the deletion certificate, we add an additional layer of encryption, for which we implement \textsf{Shad-RABE}. \vspace{0.2cm}

\noindent\textbf{Protocol Execution.}  
The protocol unfolds as follows: the sender first generates one-shot signature parameters \textsf{oss.crs} and shares them with the receiver. The receiver uses these to create a key pair $(\mathsf{oss.pk}, \mathsf{oss.sk})$, returns the public key to the sender, and keeps the signing key secret. With $\mathsf{oss.pk}$, the sender defines a witness statement that asserts the existence of a valid signature on $0$, encrypts $\mu$ under this statement by witness encryption, and then encapsulates the resulting ciphertext using \textsf{Shad-RABE} under the chosen attribute set $X$. The final ciphertext is transmitted to the receiver. At the end of this process, the sender publishes the public verification key $\mathsf{vk} = (\mathsf{oss.crs}, \mathsf{oss.pk})$, while the receiver obtains the ciphertext $\mathsf{ct} = (\mathsf{srabe.ct}, \mathsf{oss.sk})$ containing both the encrypted data and the deletion capability. \vspace{0.2cm}

\noindent\textbf{Security Guarantees.}  
The one-shot property guarantees that no adversary can produce both $\sigma_0$ and $\sigma_1$. Thus, once a deletion certificate $\sigma_1$ is issued, the witness encryption becomes undecryptable, certifying irreversible deletion. Since one-shot signatures are publicly verifiable, any party can confirm deletion by checking  
\[
\mathsf{OSS}.\mathsf{Verify}(\mathsf{oss.crs}, \mathsf{oss.pk}, \sigma, 1) = \top.
\]  
Only the signing key $\mathsf{oss.sk}$ involves quantum information and is generated locally by the receiver; all remaining components, including the ciphertext $\mathsf{srabe.ct}$, are entirely classical. Consequently, the interaction between the sender and receiver can proceed over a classical channel. In contrast to prior frameworks that rely on quantum communication or trusted verification authorities, our construction operates entirely over classical communication and supports fully public verifiability. \vspace{-0.2cm}

\subsection{Achieving Certified Everlasting Deletion Security}

We present two constructions that achieve certified everlasting security, ensuring that once deletion is executed, the underlying message can never be recovered, even by adversaries with unlimited computational power in the future. Our methodology leverages fundamental quantum mechanical properties, particularly the uncertainty principle, to design schemes that support both private and public verification.\vspace{0.2cm}

\noindent \textbf{Quantum One-Time Pad.} Our constructions are based on a quantum one-time pad implemented with BB84 states. To encrypt a single message bit $b$, the encryptor samples two uniformly random strings, $x, \theta \leftarrow \{0,1\}^{\lambda}$. It then prepares the quantum state $\ket{x}_\theta = \bigotimes_{j \in [\lambda]} H^{\theta_j} \ket{x_j}$, where each qubit is encoded in either the computational basis (if $\theta_j = 0$) or the Hadamard basis (if $\theta_j = 1$).\vspace{0.2cm}

\noindent The message bit $b$ is masked using the bits of $x$ that were encoded in the Hadamard basis. The classical part of the ciphertext will contain an encryption of the basis string $\theta$ and the masked message, $b \oplus \bigoplus_{j: \theta_j = 1} x_j$. The quantum state $\ket{x}_\theta$ itself serves as the quantum component of the ciphertext. An authorized decryptor, upon recovering $\theta$, can measure the quantum state in the correct bases to learn $x$ and subsequently unmask the message.\vspace{0.2cm}

\noindent \textbf{Privately Verifiable Certified Everlasting Deletion.}  
In our first construction, \textsf{RABE} with Privately Verifiable Certified Everlasting Deletion (\textsf{RABE-PriVCED}), the encryption procedure begins by generating $(x, \theta)$ and preparing the quantum state $\ket{x}_{\theta}$. 
A bitwise concealment value is then computed as $m^{*} := b \oplus \bigoplus_{i:\theta_i=1} x_i$, after which $(\theta, m^{*})$ is encrypted using \textsf{RABE} under the attribute $X$. 
The resulting ciphertext therefore consists of a classical component and a quantum register. During decryption, the algorithm first recovers $(\theta, m^{*})$ from the classical ciphertext, measures $\ket{x}_{\theta}$ in the specified bases to reconstruct $x$, and subsequently retrieves the original message. Deletion is achieved by measuring the quantum register in the Hadamard basis, producing outcomes that serve as a deletion certificate. Since verifying this certificate requires knowledge of $\theta$, which remains secret within the decryption process, the deletion in this scheme is only privately verifiable.   \vspace{0.15cm}

\noindent \textbf{Publicly Verifiable Certified Everlasting Deletion.}  
To achieve public verifiability, we upgrade our construction using a technique inspired by Kitagawa \textit{et al.} \cite{kitagawa2023publicly} to authenticate the quantum state. Our idea is to generate a signature for the classical string $x$ by coherently running a signing algorithm on the quantum state itself. This creates a ``coherently signed'' BB84 state, which embeds a quantum signature that is verifiable using a classical public key. Here, the encryptor generates a signature key pair $(\mathsf{sigk},\mathsf{vk})$ and includes $\mathsf{sigk}$ along with $(\theta,m^*)$ in the classical component of the ciphertext. The quantum component is a coherently signed BB84 state of the form $U_{\mathsf{sign}}\ket{x}_\theta\ket{0\ldots 0} = \ket{x}_\theta\ket{\mathsf{Sign}(\mathsf{sigk},x)}$, with $\mathsf{vk}$ made public. During decryption, the signature register can be “uncomputed” using $\mathsf{sigk}$ to recover the pure BB84 state, after which the procedure mirrors that of the privately verifiable scheme. Deletion is performed by measuring the entire quantum state in the computational basis, yielding both a string $x'$ and a classical signature $\sigma$. The pair $(x',\sigma)$ forms the deletion certificate, which any third party can validate using $\mathsf{vk}$. This makes deletion publicly verifiable without requiring access to hidden basis information.\vspace{0.2cm}

\noindent \textbf{Security Foundation.}
The security relies on fundamental quantum mechanical principles formalized through the Certified Everlasting Lemma (\textsf{CEL}) \cite{bartusek2023cryptography}. {A key distinction from the notion of Certified Deletion lies in the adversary's post-deletion capabilities. In certified deletion security, the challenger reveals all honest secret keys after successful deletion verification, and security relies on the computational assumption that the adversary cannot distinguish the encrypted messages despite having these keys. In contrast, Certified Everlasting security does not grant access to honest secret keys but instead guarantees information-theoretic indistinguishability of the residual state.} When BB84 states prepared on random bases are measured incorrectly during deletion, the resulting probability distributions become computationally indistinguishable for any encrypted messages. The proof employs sophisticated quantum information techniques, including \textsf{EPR} pairs, deferred measurements, and projective measurement analysis. Once deletion occurs through incorrect basis measurement, the quantum superposition irreversibly collapses, making message recovery impossible even with unlimited computational resources. Unlike classical cryptography, which relies on computational assumptions potentially vulnerable to quantum computers or algorithmic advances, our everlasting deletion guarantees remain secure indefinitely. After valid deletion, the privacy protection persists regardless of future technological developments.\vspace{0.2cm}

%% file: 2_prel.tex
\section{Preliminaries}\label{preli}



\noindent This section provides an overview of the cryptographic tools used in this work.
\subsection{Public Key Encryption}\label{def:pke-syntax}
We present the formal definition of the public key encryption (\textsf{PKE}) scheme, along with its correctness and security framework.\vspace{0.2cm}

\noindent\textbf{Syntax.} Let $\lambda$ be a security parameter, and let $p$, $q,$ and $r$ be polynomial functions in $\lambda$. A \textsf{PKE} scheme consists of a tuple of algorithms, $\prod_{\textsf{PKE}}=(\mathsf{KeyGen},$ $\mathsf{Encrypt},$ $\mathsf{Decrypt})$ with message space $\mathcal{M}:= \{0, 1\}^{n}$, ciphertext space $\mathcal{C}:= \{0, 1\}^{p(\lambda)}$, public key space $\mathcal{PK}:= \{0, 1\}^{q(\lambda)},$ and secret key space $\mathcal{SK}:= \{0, 1\}^{r(\lambda)}$, where $n$ is the length of the message. The algorithms are specified as follows:

\begin{description}\setlength\itemsep{0.5em}
    \item$-$~$(\mathsf{pk}, \mathsf{sk})$ $\leftarrow$ \textsf{KeyGen}$(1^{\lambda}).$ The key generation algorithm takes as input the length of the security parameter $1^{\lambda}$ to output a public key $\mathsf{pk} \in \mathcal{PK}$ and a secret key $\mathsf{sk} \in \mathcal{SK}$.
    
    \item$-$ $(\mathsf{CT})$ $\leftarrow$ \textsf{Encrypt}$(\mathsf{pk}, m).$ The encryption algorithm takes as input $\mathsf{pk}$ and a message $m \in \mathcal{M}$ to output a ciphertext $\mathsf{CT} \in \mathcal{C}$.
    
    \item$-$ $(m'~\vee\perp)$ $\leftarrow$ \textsf{Decrypt}$(\mathsf{sk}, \mathsf{CT}).$ The decryption algorithm takes as input $\mathsf{sk}$ and $\mathsf{CT}$ to output either the message $m'$ or $\perp,$ indicating decryption failure.
\end{description}
\noindent We require that a \textsf{PKE} scheme $\prod_{\textsf{PKE}}=(\mathsf{KeyGen},$ $\mathsf{Encrypt},$ $\mathsf{Decrypt})$ satisfies the following properties.
\begin{description}
\item$\bullet$ \textbf{Correctness.}\label{def:pke-correctness}
There exists a negligible function $\mathsf{negl}$ such that for any $\lambda \in \mathbb{N}$, $m \in \mathcal{M}$, it holds that
\[
\Pr\left[
  \mathsf{Decrypt}(\mathsf{sk}, \mathsf{CT}) \neq m 
  \,\bigg|\, 
  \begin{array}{l}
    (\mathsf{pk}, \mathsf{sk}) \leftarrow \mathsf{KeyGen}(1^{\lambda}) \\
    \mathsf{CT} \leftarrow \mathsf{Encrypt}(\mathsf{pk}, m)
  \end{array}
\right] \leq \mathsf{negl}(\lambda).
\]

\item$\bullet$ \textbf{\textsf{IND-CPA} Security.}\label{def:pke-ind-cpa} We define indistinguishability under chosen-plaintext attack (\textsf{IND-CPA}) security through the following experiment, denoted $\mathsf{Exp}^{\mathsf{IND-CPA}}_{\textsf{PKE}, \mathcal{A}}$ $(\lambda,$ $b)$, for a \textsf{PKE} scheme and a quantum polynomial-time adversary (\textsf{QPT}) $\mathcal{A}$.\vspace{0.2cm}

\begin{itemize}\setlength\itemsep{0.3em}
    \item[$-$] The challenger samples a public/secret key pair $(\mathsf{pk}, \mathsf{sk}) \leftarrow \mathsf{KeyGen}(1^\lambda)$ and sends the public key $\mathsf{pk}$ to $\mathcal{A}$.
    
    \item[$-$] The adversary $\mathcal{A}$ selects two messages $(m_0, m_1) \in \mathcal{M}^2$ and submits them to the challenger.

    \item[$-$] The challenger selects a bit $b \in \{0,1\}$ uniformly at random, computes $\mathsf{CT}_b \leftarrow \mathsf{Encrypt}(\mathsf{pk}, m_b)$, and sends $\mathsf{CT}_b$ to $\mathcal{A}$.

    \item[$-$] The adversary $\mathcal{A}$ outputs a guess $b' \in \{0,1\}$ for bit $b$.
\end{itemize}
\end{description}\vspace{0.2cm}

\noindent The advantage of $\mathcal{A}$ in the above experiment is defined by
\begin{align*}
    &\mathsf{Adv}^{\mathsf{IND-CPA}}_{\textsf{PKE}, \mathcal{A}}(\lambda) := \left|\Pr[\mathsf{Exp}^{\mathsf{IND-CPA}}_{\textsf{PKE}, \mathcal{A}}(\lambda, 0) = 1] - \Pr[\mathsf{Exp}^{\mathsf{IND-CPA}}_{\textsf{PKE}, \mathcal{A}}(\lambda, 1) = 1]\right|.
\end{align*}

\noindent We say that the $\prod_{\textsf{PKE}}$ protocol is \textsf{IND-CPA} secure if for all \textsf{QPT} adversaries $\mathcal{A}$, the advantage is negligible in $\lambda,$ \textit{i.e.,} the following holds:
\begin{align*}
    &\mathsf{Adv}^{\mathsf{IND-CPA}}_{\textsf{PKE}, \mathcal{A}}(\lambda) \leq \mathsf{negl}(\lambda).
\end{align*}

 \noindent A notable example is the $\mathsf{Regev}$ $\mathsf{PKE}$ scheme~\cite{regev2009lattices} that achieves $\mathsf{IND}$-$\mathsf{CPA}$ security under the Learning with Errors ($\mathsf{LWE}$) assumption against \textsf{QPT} adversaries. We refer the reader to~\cite{regev2009lattices,gentry2008trapdoors} for a more detailed treatments of the $\mathsf{LWE}$ assumption and additional constructions of the post-quantum secure \textsf{PKE} schemes.

\subsection{Indistinguishability Obfuscation}\label{def:iO-framework}

Let $\{\mathcal{C}_{\lambda}\}_{\lambda \in \mathbb{N}}$ be a family of circuits. A \textsf{PPT} algorithm $i\mathcal{O}$ is an indistinguishability obfuscator for the circuit class if it satisfies the following:

\begin{itemize}
    \item \textbf{Correctness:} For every security parameter $\lambda \in \mathbb{N}$, every circuit $C \in \mathcal{C}_{\lambda}$, and every input $x$, the obfuscated circuit preserves functionality:
    \[
    \Pr[C'(x) = C(x) \mid C' \leftarrow i\mathcal{O}(C)] = 1.
    \]

    \item \textbf{Indistinguishability:}  For any \textsf{QPT} distinguisher $\mathcal{D}$ and any pair of functionally equivalent circuits $C_0, C_1 \in \mathcal{C}_\lambda$ such that $C_0(x) = C_1(x)$ for all inputs $x$ and $|C_0| = |C_1|$, the obfuscations are computationally indistinguishable:
    \[
    \big|\Pr[\mathcal{D}(i\mathcal{O}(C_0)) = 1] - \Pr[\mathcal{D}(i\mathcal{O}(C_1)) = 1]\big| \leq \mathsf{negl}(\lambda).
    \]
\end{itemize}

\noindent In recent literature, candidates for indistinguishability obfuscation secure against \textsf{QPT} adversaries have been proposed \cite{gay2021indistinguishability,wee2021candidate,brakerski2020factoring}. Moreover, the work of \cite{cini2025lattice} proposes a post-quantum $i\mathcal{O}$ candidate based on the \textsf{NTRU} assumption and a novel Equivocal \textsf{LWE} problem, employing a trapdoor-based Gaussian hint-release mechanism that ensures correctness while preventing structural leakage, with security proven against \textsf{QPT} adversaries.

\subsection{Digital Signatures}\label{def:sig-framework}
We now formally define a digital signature scheme, including its syntax, correctness, and a specialized one-time unforgeability notion against quantum adversaries.\vspace{0.2cm}

\noindent\textbf{Syntax.} A digital signature (\textsf{SIG}) scheme consists of a tuple of algorithms $\prod_{\textsf{SIG}}=(\mathsf{Gen}, \mathsf{Sign}, \mathsf{Verify})$ defined over the message space $\mathcal{M}$, signature space $\mathcal{S}$, verification key space $\mathcal{VK}$, and signature key space $\mathcal{SK}$. The algorithms are described in the following:

\begin{description}\setlength\itemsep{0.5em}
    \item[$-$] $(\mathsf{vk}, \mathsf{sk}) \leftarrow \mathsf{Gen}(1^\lambda).$  
    The key generation algorithm takes as input the length of the security parameter $1^\lambda$ to output a key pair consisting of a verification key $\mathsf{vk} \in \mathcal{VK},$ and a signing key $\mathsf{sk} \in \mathcal{SK}$.
    
    \item[$-$] $(\sigma) \leftarrow \mathsf{Sign}(\mathsf{sk}, m).$ Taking as input the signing key $\mathsf{sk}$ and a message $m \in \mathcal{M}$, the signing algorithm outputs a signature $\sigma \in \mathcal{S}$.
    
    \item[$-$] $\{0,1\} \leftarrow \mathsf{Verify}(\mathsf{vk}, m, \sigma).$ On the input of the verification key $\mathsf{vk}$, a message $m$ and a signature $\sigma$, the verification algorithm returns $1$ (accept) if the signature is valid or $0$ (reject) otherwise.
\end{description}

\noindent A digital signature scheme $\prod_{\textsf{SIG}}$ must satisfy the following properties.

\begin{description}
\item[$\bullet$] \textbf{Correctness.}\label{def:sig-correctness}  
There exists a negligible function $\mathsf{negl}(\cdot)$ such that for all security parameters $\lambda \in \mathbb{N}$ and for all messages $m \in \mathcal{M}$, we have
\[
\Pr\left[
  \mathsf{Verify}(\mathsf{vk}, m, \sigma) = 1
  \,\bigg|\,
  \begin{array}{l}
    (\mathsf{vk}, \mathsf{sk}) \leftarrow \mathsf{Gen}(1^{\lambda}) \\
    \sigma \leftarrow \mathsf{Sign}(\mathsf{sk}, m)
  \end{array}
\right] \geq 1 - \mathsf{negl}(\lambda).
\]
\end{description}

\vspace{0.2cm}

\begin{description}
\item[$\bullet$] \textbf{One-Time Unforgeability (\textsf{OT-UF}) for {BB84} States.}\label{def:sig-otuf-bb84}  
We define one-time unforgeability (\textsf{OT-UF}) security for digital signatures in the presence of quantum adversaries with access to BB84 quantum states. We say that a digital signature scheme with the tuples of algorithm $\prod_{\textsf{SIG}}=(\mathsf{Gen}, \mathsf{Sign}, \mathsf{Verify})$ satisfies one-time unforgeability against BB84 attacks if no \textsf{QPT} adversary $\mathcal{A}$ can win the following security game with non-negligible probability.\vspace{0.2cm}

\begin{itemize}\setlength\itemsep{0.3em}
    \item[$-$] The challenger generates a key pair $(\mathsf{vk}, \mathsf{sk}) \leftarrow \mathsf{Gen}(1^\lambda)$ and sends $\mathsf{vk}$ to $\mathcal{A}$.

    \item[$-$] $\mathcal{A}$ selects a message $m^* \in \mathcal{M}$ and receives a valid signature $\sigma^* \leftarrow \mathsf{Sign}(\mathsf{sk}, m^*)$.
    
    \item[$-$] Finally, $\mathcal{A}$ outputs a pair $(m', \sigma')$.
    
    \item[$-$] $\mathcal{A}$ wins if $m' \ne m^*$ and $\mathsf{Verify}(\mathsf{vk}, m', \sigma') = 1$.
\end{itemize}
\vspace{0.2cm}

\noindent The adversary may additionally have quantum access to BB84 states encoding the signed message. We say that a digital signature scheme is one-time unforgeable for BB84 states if this success probability is negligible in $\lambda$.
\end{description}

\noindent In the current literature, Kitagawa et al.~\cite{kitagawa2023publicly} constructed a digital signature scheme satisfying one-time unforgeability for BB84 states under the existence of one-way functions.

\subsection{\textsf{NP} Languages}\label{def:np-languages}
In computational complexity theory, \textsf{NP} represents a non-deterministic polynomial time machine \cite{turing1936computable,cook2000p} (\textit{i.e.,} a machine that can move more than one state from a given configuration). A language $\mathcal{L}$ over an alphabet $\Sigma$ is called an \textsf{NP} languages iff there exists a $k\in \mathbb{N}$ and a polynomial checking relation $R_{\mathcal{L}} \subseteq \Sigma^{*}\times \Sigma^{*}$ such that for all strings $x \in \Sigma^{*}$,
\begin{align*}
	x\in \mathcal{L} \iff \exists y~(|y|\le |x|^k~\text{and}~R_{\mathcal{L}}(x,y)),
\end{align*}
where $|x|$ and $|y|$ are the lengths of the strings $x$ and $y$, respectively.

\subsection{Zero-Knowledge Arguments}\label{def:zka}
We define a zero-knowledge argument system (\textsf{ZKA}) as an interactive protocol between two parties.
Let $\langle A(w), B(z) \rangle(x)$ denote the interaction between interactive machines $A$ and $B$ on a common input $x$, where $A$ receives auxiliary input $w$ and $B$ receives auxiliary input $z$. We refer to the protocol as $\prod_{\textsf{ZKA}} = \langle A, B \rangle$.


\begin{definition}[Classical Proof and Argument Systems for \textsf{NP}]
\label{def:proof-argument-system}
Let $(P, V)$ be a protocol with an honest \textsf{PPT} prover $P$ and an honest \textsf{PPT} verifier $V$ for a language $\mathcal{L} \in \mathsf{NP}$, satisfying:

\begin{enumerate}
    \item \textbf{Perfect Completeness:}  
    For any  $x \in \mathcal{L} $, $R_\mathcal{L}(x,w)$ and $z\in\{0,1\}^*$
    \begin{equation*}
        \Pr[\langle P(w), V(z) \rangle(x) = 1] = 1.
    \end{equation*}

    \item \textbf{Soundness:} For any quantum polynomial-size prover $P^* = \{P^*_{\lambda}, \rho_{\lambda}\}_{\lambda \in \mathbb{N}}$, there exists a negligible function $\mu(\cdot)$ such that for any security parameter $\lambda \in \mathbb{N}$ and any $x \in \{0,1\}^{\lambda} \setminus \mathcal{L}$ and $z\in\{0,1\}^*$
        \begin{equation*}
            \Pr[\langle P^*_{\lambda}(\rho_{\lambda}), V(z) \rangle(x) = 1] \leq \mu(\lambda).
        \end{equation*}
        A protocol with computational soundness is called an \emph{argument}.
        
 \item \textbf{Zero Knowledge:} There exists a quantum polynomial-time simulator $\mathsf{Sim}$ such that for any quantum polynomial-size verifier $V^* = \{V^*_{\lambda}, \rho_{\lambda}\}_{\lambda \in \mathbb{N}}$,
\begin{equation*}
    \big\{
        \langle P(w), V^*_{\lambda}(\rho_{\lambda}) \rangle(x)
    \big\}_{\lambda, x, w}
    \stackrel{c}{\approx}
    \big\{
        \mathsf{Sim}(x, V^*_{\lambda}, \rho_{\lambda})
    \big\}_{\lambda, x},
\end{equation*}
where $\lambda \in \mathbb{N}$, $x \in \mathcal{L} \cap \{0,1\}^{\lambda}$, and $ R_\mathcal{L}(x,w)$.

\noindent If $V^*$ is a classical circuit, then the simulator is computable by a classical polynomial-time algorithm.
    \[
\{ \langle P(w), V^*(z) \rangle (x) \}_{x, w , z }
\stackrel{c}{\approx}
\{ \langle \mathsf{Sim} \rangle(x, z) \}_{x , z }
\]
when the distinguishing gap is considered as a function of $  |x| $.

\end{enumerate}
\end{definition}

\noindent If the \textsf{LWE} assumption holds against quantum polynomial-size adversaries, then there exists a post-quantum zero-knowledge interactive argument system $(P, V)$ for all languages in $\mathsf{NP}$ \cite{bitansky2020post}.

\subsection{Witness Encryption}\label{def:we-framework}
Witness Encryption (\textsf{WE}), introduced by Garg \textit{et al.} \cite{stoc-13}, enables the encryption of a message $m$ with a statement $x$ from an \textsf{NP} language such that decryption is possible only with a corresponding valid witness $w$. We proceed to formally define the syntax of the \textsf{WE} scheme and outline its correctness and security guarantees, particularly in the context of quantum adversaries.\vspace{0.2cm}

\noindent\textbf{Syntax.} A witness encryption scheme for an \textsf{NP} language $\mathcal{L}$ consists of a pair of algorithms $\prod_{\textsf{WE}}=(\textsf{Encrypt}, \textsf{Decrypt})$ described below.

\begin{description}\setlength\itemsep{0.5em}
    \item[$-$] $(\mathsf{CT}) \leftarrow \textsf{Encrypt}(1^\lambda, x, m).$  
    On input a security parameter $1^\lambda$, an instance $x \in \mathcal{L}$, and a message $m \in \mathcal{M}$, the encryption algorithm returns a ciphertext $\mathsf{CT}$.

    \item[$-$] $(m\vee\bot) \leftarrow \textsf{Decrypt}(\mathsf{CT}, w).$  
    Taking as input a ciphertext $\mathsf{CT}$ and a witness $w$ such that $(x, w) \in R_\mathcal{L}$, the decryption algorithm outputs either the message $m$ or a distinguished failure symbol $\bot$.
\end{description}

\vspace{0.2cm}
\noindent A \textsf{WE} scheme $\prod_{\textsf{WE}}$ should satisfy the following correctness and security properties:

\begin{description}
\item[$\bullet$] \textbf{Correctness.}\label{def:we-correctness}  
We require that for all security parameters $\lambda \in \mathbb{N}$, all messages $m \in \mathcal{M}$, and all $(x,w) \in R_\mathcal{L}$, decryption correctly recovers the original message except with negligible probability:
\[
\Pr\left[
  \textsf{Decrypt}(\textsf{Encrypt}(1^\lambda, x, m), w) = m
\right] \ge 1 - \mathsf{negl}(\lambda).
\]

\item[$\bullet$] \textbf{Security.}\label{def:we-security}  
We consider two security attributes for witness encryption: {soundness} and {extractability}.\vspace{0.2cm}

\begin{description}\setlength\itemsep{0.5em}
    \item[$-$] \textbf{Soundness.}  
    A \textsf{WE} scheme satisfies soundness if for any $x \notin \mathcal{L}$, ciphertexts corresponding to different messages are computationally indistinguishable. More precisely, for any \textsf{QPT} adversary $\mathcal{A}$, and for all $m_0, m_1 \in \mathcal{M}$ and $x \notin \mathcal{L}$, we define the advantage:
    \[
    \mathsf{Adv}_\mathcal{A}^{\textsf{WE}}(\lambda) = 
    \Big|
        \Pr[\mathcal{A}(\textsf{Encrypt}(1^\lambda, x, m_0)) = 1] 
        - \Pr[\mathcal{A}(\textsf{Encrypt}(1^\lambda, x, m_1)) = 1]
    \Big|.
    \]
    
    We require that $\mathsf{Adv}_\mathcal{A}^{\textsf{WE}}(\lambda) \le \mathsf{negl}(\lambda)$.\vspace{0.25cm}

    We emphasize that the correctness property of \textsf{WE} scheme guarantees the successful decryption when $x \in \mathcal{L}$ and a valid witness $w$ for the relation $R$ is provided. In contrast, the soundness property ensures that if $x \notin \mathcal{L}$, no efficient adversary can distinguish between encryptions of different messages, and thus cannot recover meaningful information. We remark that the definition intentionally does not address the case where $x \in \mathcal{L}$ but no valid witness for $x$ is known. In such cases, the behaviour of the decryption algorithm is unspecified by the correctness and soundness definitions, and no guarantees are provided.
    
    \item[$-$] \textbf{Extractability}.  
    Extractability property \cite{goldwasser2013run,hiroka2021quantum} ensures that if an adversary can distinguish between encryptions of two messages with non-negligible advantage, then one can efficiently extract a valid witness for the instance. Formally, for any \textsf{QPT} adversary $\mathcal{A}$, any polynomial $p(\lambda)$, and any $m_0, m_1 \in \mathcal{M}$, there exists a \textsf{QPT} extractor $\mathcal{E}$ and polynomial $q(\lambda)$ such that for any auxiliary quantum state $\mathsf{aux}$ potentially entangled with external registers, if:
    \[
    \left|
        \Pr[\mathcal{A}(\mathsf{aux}, \textsf{Encrypt}(1^\lambda, x, m_0)) = 1] 
        - \Pr[\mathcal{A}(\mathsf{aux}, \textsf{Encrypt}(1^\lambda, x, m_1)) = 1]
    \right| \ge \frac{1}{p(\lambda)},
    \]
    then, the extractor succeeds in producing a valid witness with non-negligible probability:
    \[
    \Pr\left[
        (x, \mathcal{E}(1^\lambda, x, \mathsf{aux})) \in R_\mathcal{L}
    \right] \ge \frac{1}{q(\lambda)}.
    \]
\end{description}
\end{description}
\noindent {Extractable witness encryption was introduced in \cite{goldwasser2013run} as a syntactic modification of the witness encryption framework of \cite{garg2013witness}}.
\subsection{{Secret Key} Encryption with Certified Deletion}\label{subsection-enc-cd}
Broadbent and Islam introduced the concept of encryption with certified deletion in the context of secret key encryption (\textsf{SKE}) \cite{broadbent2020quantum}. Their framework considers a one-time usage model, where the secret key is employed for a single encryption instance. We now present the formal definition of a secret key encryption scheme with certified deletion (\textsf{SKE-CD}) and outline its associated security requirements.\vspace{0.2cm}

\noindent\textbf{Syntax of \textsf{SKE-CD}.}\label{def:ske-cd-syntax} A secret key encryption scheme with certified deletion (\textsf{SKE-CD}) is a tuple of \textsf{QPT} algorithms $\prod_{\textsf{SKE-CD}}=(\mathsf{KeyGen}, \mathsf{Encrypt},$ $\mathsf{Decrypt},$ $\mathsf{Deletion},$ $\mathsf{Verify})$ with plaintext space $\mathcal{M}$ and key space $\mathcal{K},$ with the following algorithms.

\begin{description}\setlength\itemsep{0.5em}
    \item[$-$] $(\mathsf{sk})\leftarrow\textsf{KeyGen}(1^{\lambda}).$ On input the length of the security parameter $1^{\lambda}$, the key generation algorithm outputs a secret key $\mathsf{sk} \in \mathcal{K}$.

    \item[$-$] $(\mathsf{CT}) \leftarrow \textsf{Encrypt}(\mathsf{sk}, m).$ The encryption algorithm takes as input $\mathsf{sk}$ and a plaintext $m \in \mathcal{M}$ to output a ciphertext $\mathsf{CT}$.

    \item[$-$] $(m'\vee\bot) \leftarrow\textsf{Decrypt}(\mathsf{sk}, \mathsf{CT}).$ Taking as input $\mathsf{sk}$ and $\mathsf{CT},$ the decryption algorithm outputs a plaintext $m' \in \mathcal{M}$ or $\bot$.

    \item[$-$] $(\mathsf{cert})\leftarrow\textsf{Deletion}(\mathsf{CT}).$ The deletion algorithm, on input the ciphertext $\mathsf{CT}$, produces a deletion certificate $\mathsf{cert}$.

    \item[$-$] $(\top \vee \bot)\leftarrow\textsf{Verify}(\mathsf{sk}, \mathsf{cert}).$ The verification algorithm takes as input the secret key $\mathsf{sk}$ and the certificate $\mathsf{cert}$ to return $\top$ {or} $\bot$.
\end{description}

\begin{description}\setlength\itemsep{0.5em}
    \item[$\bullet$] \textbf{Correctness.}\label{def:ske-cd-correctness} A \textsf{SKE-CD} scheme $\prod_{\textsf{SKE-CD}} = (\mathsf{KeyGen}, \mathsf{Encrypt},$ $ \mathsf{Decrypt},$ $ \mathsf{Deletion},$ $ \mathsf{Verify})$ 
    is said to be correct if it satisfies the following two properties.\vspace{0.2cm}
    
    \begin{description}\setlength\itemsep{0.5em}
        \item[$-$] \textbf{Decryption correctness:} For any $\lambda \in \mathbb{N}$, $m \in \mathcal{M}$, 
        \[
        \Pr\left[\mathsf{Decrypt}(\mathsf{sk}, \mathsf{CT}) \neq m \,\middle|\,
        \begin{array}{l}
            \mathsf{sk} \leftarrow \mathsf{KeyGen}(1^{\lambda}), \\
            \mathsf{CT} \leftarrow \mathsf{Encrypt}(\mathsf{sk}, m)
        \end{array}
        \right] \leq \mathsf{negl}(\lambda).
        \]

        \item[$-$] \textbf{Verification correctness:} For any $\lambda \in \mathbb{N}$, $m \in \mathcal{M}$, 
        \[
        \Pr\left[\mathsf{Verify}(\mathsf{sk}, \mathsf{cert}) = \bot \,\middle|\,
        \begin{array}{l}
            \mathsf{sk} \leftarrow \mathsf{KeyGen}(1^{\lambda}), \\
            \mathsf{CT} \leftarrow \mathsf{Encrypt}(\mathsf{sk}, m), \\
            \mathsf{cert} \leftarrow \mathsf{Deletion}(\mathsf{CT})
        \end{array}
        \right] \leq \mathsf{negl}(\lambda).
        \]
    \end{description}
    \item[$\bullet$] \textbf{Certified Deletion Security.}\label{def:otske-cd-security} The notion of one-time certified deletion (\textsf{OT-CD)} security for a \textsf{SKE-CD} scheme is defined via the security experiment $\mathsf{Exp}^{\mathsf{OT-CD}}_{\textsf{SKE-CD},~\mathcal{A}}(\lambda, b)$, which models the interaction between the challenger and a \textsf{QPT} adversary~$\mathcal{A}$.\vspace{0.2cm} 

    \begin{description}\setlength\itemsep{0.5em}
        \item[$-$] \textbf{KeyGen Phase:} The challenger generates a secret key $\mathsf{sk} \leftarrow \mathsf{KeyGen}(1^{\lambda})$.

        \item[$-$] \textbf{Challenge:} $\mathcal{A}$ submits two messages $(m_0, m_1) \in \mathcal{M}^2$ to the challenger. It samples a bit $b \in \{0,1\}$ and computes the ciphertext $\mathsf{CT}_b \leftarrow \mathsf{Encrypt}(\mathsf{sk}, m_b)$, which is then sent to $\mathcal{A}$.

        \item[$-$] \textbf{Certificate Verification:} $\mathcal{A}$ sends $\mathsf{cert}$ to the challenger. The challenger computes $\mathsf{Verify}(\mathsf{sk}, \mathsf{cert})$. If the output is $\perp$, the challenger sends $\perp$ to $\mathcal{A}$. If the output is $\top$, the challenger sends $\mathsf{sk}$ to $\mathcal{A}$.

        \item[$-$] \textbf{Guess:} $\mathcal{A}$ outputs a guess $b' \in \{0,1\}$.
    \end{description}\vspace{0.2cm}
\end{description}
    \noindent The advantage of $\mathcal{A}$ in breaking certified deletion is defined as
    \[
    \mathsf{Adv}^{\mathsf{OT-CD}}_{\textsf{SKE-CD},~\mathcal{A}}(\lambda) := 
    \left| \Pr\left[\mathsf{Exp}^{\mathsf{OT-CD}}_{\textsf{SKE-CD},~\mathcal{A}}(\lambda, 0) = 1\right] - 
    \Pr\left[\mathsf{Exp}^{\mathsf{OT-CD}}_{\textsf{SKE-CD},~\mathcal{A}}(\lambda, 1) = 1\right] \right|.
    \]\vspace{0.2cm}

    \noindent We say that a \textsf{SKE-CD} scheme $\prod_{\textsf{SKE-CD}}$ achieves \textsf{OT-CD} security if for all \textsf{QPT} adversaries $\mathcal{A},$ the advantage is negligible in $\lambda,$ \textit{i.e.,} the following holds:
    
    $$\mathsf{Adv}^{\mathsf{OT-CD}}_{\textsf{SKE-CD},~\mathcal{A}}(\lambda) \leq \mathsf{negl}(\lambda).$$
    
     \noindent A secret key encryption scheme satisfying the \textsf{OT-CD} security property is referred to as an \textsf{OTSKE-CD} scheme.

\begin{remark}[Impossibility of Reverse Certified Deletion]\label{rem:cd-intuition}
The notion of one-time certified deletion security ensures that once a valid deletion certificate has been produced, the corresponding ciphertext can no longer be decrypted. A natural question arises as to whether the converse can also be enforced, that once a ciphertext is decrypted, it should no longer be possible to generate a valid deletion certificate. However, this property is impossible to achieve due to the decryption correctness. If the quantum decryption algorithm $\mathsf{Decrypt}$ correctly recovers the plaintext from a quantum ciphertext $\mathsf{CT}$ with overwhelming probability, then by the gentle measurement lemma, the ciphertext remains only negligibly perturbed by the decryption process. As a result, it remains feasible to apply the deletion algorithm and obtain a valid certificate subsequently.\vspace{0.2cm}

\noindent It is noted that in existing constructions of secret key encryption with certified deletion, the secret key is a classical bitstring, while ciphertexts are quantum states. The following result, due to Broadbent and Islam \cite{broadbent2020quantum}, establishes the unconditional existence of such schemes.
\end{remark}
\begin{theorem}[Unconditional Existence of \textsf{OT-CD} Secure \textsf{SKE}]\label{thm:otske-cd-existence}
    There exists a one-time secret key encryption scheme with certified deletion that satisfies \textsf{OT-CD} security unconditionally, where the message space is $\mathcal{M} = \{0,1\}^{\ell_m}$ and the key space is $\mathcal{K} = \{0,1\}^{\ell_k}$ for some polynomials $\ell_m, \ell_k$.
\end{theorem}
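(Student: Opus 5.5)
We prove the statement by exhibiting the Broadbent--Islam construction and verifying the \textsf{OT-CD} experiment directly; no computational assumption is needed anywhere.

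\textbf{The scheme.} Let $\ell_m$ be the message length and let $n=\mathrm{poly}(\lambda)$ be the number of qubits, with $n\ge \lambda+\ell_m$. The key is $\mathsf{sk}=(\theta,r,u)$. Here $\theta\in\{0,1\}^n$ is a basis string of Hamming weight exactly $\lambda$, encoded as a polynomial-length string, and $r\in\{0,1\}^{n-\lambda}$ and $u\in\{0,1\}^{\ell_m}$ are uniform. \textsf{KeyGen} also samples a description of a universal$_2$ hash function $h$ into $\{0,1\}^{\ell_m}$, so $\ell_k$ is polynomial.

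\textsf{Encrypt} works as follows. Sample $x\leftarrow\{0,1\}^n$ and prepare the BB84 state $\ket{x}_\theta$. Let $x_{\bar\Theta}$ be the bits of $x$ on the computational-basis positions. Output the quantum state $\ket{x}_\theta$ together with the classical pad $c=m\oplus h(x_{\bar\Theta})\oplus u$ and the syndrome $s=\mathrm{syn}(x_{\bar\Theta})\oplus r$.

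\textsf{Decrypt} measures the positions $\bar\Theta$ in the computational basis and then unmasks to recover $m$; this step is perfectly correct.

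\textsf{Deletion} measures every qubit in the Hadamard basis. \textsf{Verify} checks that the outcome agrees with $x$ on the Hadamard positions $\Theta$. Verification correctness then holds with probability $1$.

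\textbf{Security.} The plan is to switch to an EPR-based purification. The challenger prepares $n$ EPR pairs, sends one half of each, and measures its own halves in bases $\theta$ only after the adversary returns $\mathsf{cert}$. This is perfectly indistinguishable from the real experiment.

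In this picture, a valid certificate means the adversary correctly predicted the Hadamard-basis outcomes on the random subset $\Theta$. We then apply a sampling argument together with an entropic uncertainty relation, following Broadbent--Islam's use of Tomamichel et al. The argument shows that, conditioned on acceptance, the smooth min-entropy of $x_{\bar\Theta}$ given the adversary's residual state, the key, and $s$ is large. Concretely, it is at least roughly $(n-\lambda)\cdot(1-h(\delta))-|s|-$small terms, except with negligible probability.

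With that bound in hand, the quantum leftover hash lemma makes $h(x_{\bar\Theta})$ statistically close to uniform even given $\mathsf{sk}$. The pad $c$ therefore statistically hides $m_b$, so the advantage is negligible for unbounded adversaries.

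\textbf{Main obstacle.} The delicate step is the uncertainty/sampling argument. We must show that the adversary cannot simultaneously know the Hadamard outcomes on a random hidden subset and retain information about the computational outcomes on the complement. This requires handling the case where $\theta$ is revealed only after the certificate, and bounding the error of estimating the complementary subset from a random sample, via a Hoeffding-type bound for sampling without replacement.

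We would invoke the Broadbent--Islam analysis essentially verbatim at this point. Parameters are chosen so the min-entropy loss from the syndrome and smoothing is at most $n-\lambda-\ell_m-\omega(\log\lambda)$, which yields the claimed polynomials $\ell_m,\ell_k$.
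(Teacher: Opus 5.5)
Your route is the paper's route: the paper proves nothing here and simply attributes the theorem to Broadbent--Islam, and you reconstruct their scheme and their EPR-purification, sampling/uncertainty-relation and privacy-amplification analysis. As written, however, your scheme does not fit the \textsf{SKE-CD} syntax of Section~\ref{subsection-enc-cd}, so it needs a fix before it proves the statement. There, $\mathsf{Verify}$ receives only $(\mathsf{sk},\mathsf{cert})$, yet your verifier must compare the certificate with $x$ on $\Theta$. But $x$ is sampled inside $\mathsf{Encrypt}$ and is not part of $\mathsf{sk}=(\theta,r,u,h)$, so verification cannot be carried out as specified. The fix is the one Broadbent--Islam use. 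Let $\mathsf{KeyGen}$ also sample the check bits $x_\Theta\in\{0,1\}^\lambda$ and store them in $\mathsf{sk}$, while $\mathsf{Encrypt}$ samples only $x_{\bar\Theta}$ afresh. Do not put $x_{\bar\Theta}$ in the key: $\mathsf{sk}$ is handed to the adversary after an accepted certificate, and it would then reveal $h(x_{\bar\Theta})$. Revealing $x_\Theta$ costs nothing in your entropy bound, because it is independent of $x_{\bar\Theta}$ and agrees with the accepted certificate on $\Theta$. In the EPR picture, $x_\Theta$ is simply defined by the challenger's Hadamard measurement of its halves on $\Theta$ after $\mathsf{cert}$ arrives.

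Some smaller points your sketch should make explicit. The purification is perfectly indistinguishable only because $c$ and $s$ can be sent as uniform strings before $x$ exists. You then fix $u:=c\oplus m_b\oplus h(x_{\bar\Theta})$ and $r:=s\oplus\mathrm{syn}(x_{\bar\Theta})$ afterwards. Correspondingly, the min-entropy of $x_{\bar\Theta}$ must be conditioned on the adversary's state, $\theta$, $x_\Theta$, $h$ and the syndrome, but not on $u$, which together with $c$ gives away $m_b\oplus h(x_{\bar\Theta})$. Your choice $r\in\{0,1\}^{n-\lambda}$ suggests a full-length syndrome, which would reveal $x_{\bar\Theta}$ entirely once $r$ is disclosed and make $(n-\lambda)(1-h(\delta))-|s|$ vacuous. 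In this noiseless setting you can simply drop the syndrome. Finally, in the paper's experiment the adversary still outputs a guess after rejection. That branch is perfectly hiding thanks to $u$. The accepting branch should be bounded by $\Pr[\mathsf{accept}]$ times the conditional trace distance, rather than by conditioning on acceptance outright.
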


\subsection{Certified Everlasting Lemmas}\label{def:cert-ever-last-lemma}

In this section, we revisit the certified everlasting lemma of Bartusek and Khurana \cite{bartusek2023cryptography}, which provides cryptography with privately verifiable deletion. Later, Kitagawa \textit{et al.} \cite{kitagawa2023publicly} generalized this notion to achieve publicly verifiable deletion. Building on these foundations, we design a new registered attribute-based encryption scheme that supports publicly verifiable deletion. \vspace{0.2cm}

\begin{lemma}[Certified Everlasting Lemma \cite{bartusek2023cryptography}]\label{lem:certified-everlasting-bk}
Let $\{\mathcal{Z}_{\lambda}(\cdot, \cdot, \cdot)\}_{\lambda \in \mathbb{N}}$ be an efficient quantum operation taking three inputs: (i) a $\lambda$-bit string $\theta$, (ii) a single bit $\beta$, (iii) a quantum register \textcolor{gray}{\textsf{A}}. For any \textsf{QPT} $\mathcal{B}$ algorithm, define the experiment $\widetilde{\mathcal{Z}}^{\mathcal{B}}_{\lambda}(b)$ over quantum states by executing $\mathcal{B}$ according to the following procedure:

\begin{description}\setlength\itemsep{0.5em}
    \item[$-$] Sample $x, \theta \leftarrow \{0, 1\}^{\lambda}$ uniformly at random.  
    \item[$-$] Initialize $\mathcal{B}$ with the state $\mathcal{Z}_{\lambda}\left(\theta,\, b \oplus \bigoplus_{i: \theta_i = 1} x_i,\, \ket{x}_{\theta}\right).$
    
    \item[$-$] Let the output of $\mathcal{B}$ be a classical string $x' \in \{0, 1\}^{\lambda}$ along with a residual quantum state on its register  \textcolor{gray}{\textsf{B}}.
    
    \item[$-$] If $x_i = x'_i$ for all indices $i$ where $\theta_i = 0$, output \textcolor{gray}{\textsf{B}}. Otherwise, output a special failure symbol $\bot$.
\end{description}

\noindent Assume that for any \textsf{QPT} $\mathcal{A}$, any $\theta \in \{0, 1\}^{\lambda}$, $\beta \in \{0, 1\}$, and any efficiently samplable state $\ket{\psi}_{\textcolor{gray}{\textsf{A}},\textcolor{gray}{\textsf{C}}}$ over registers \textcolor{gray}{\textsf{A}} and \textcolor{gray}{\textsf{C}}, the following indistinguishability condition holds:
\[
\left|\Pr[\mathcal{A}(\mathcal{Z}_{\lambda}(\theta, \beta, \textcolor{gray}{\textsf{A}}), \textcolor{gray}{\textsf{C}}) = 1] - \Pr[\mathcal{A}(\mathcal{Z}_{\lambda}(0^{\lambda}, \beta, \textcolor{gray}{\textsf{A}}), \textcolor{gray}{\textsf{C}}) = 1]\right| \leq \mathsf{negl}(\lambda).
\]
\noindent Then, for any \textsf{QPT} $\mathcal{B}$, we have
\[
\mathsf{TD}(\widetilde{\mathcal{Z}}^{\mathcal{B}}_{\lambda}(0),\, \widetilde{\mathcal{Z}}^{\mathcal{B}}_{\lambda}(1)) \leq \mathsf{negl}(\lambda).
\]
\end{lemma}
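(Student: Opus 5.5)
The plan follows the Bartusek--Khurana approach: first make the BB84 encoding a measurement on EPR pairs, then condition on the deletion check succeeding, then compare with a game in which $\theta$ is hidden. Throughout, $\mathcal{Z}_\lambda$ only ever receives the basis string $\theta$ and a single bit.

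\textbf{Step 1 (EPR-based purification).} Rewrite $\widetilde{\mathcal{Z}}^{\mathcal{B}}_\lambda(b)$ so that the challenger prepares $\lambda$ EPR pairs. Register \textsf{A} goes to $\mathcal{Z}_\lambda$ and register \textsf{R} stays with the challenger. Measuring \textsf{R} in basis $\theta$ yields a uniform $x$ and collapses \textsf{A} to $\ket{x}_\theta$, so this reproduces the original experiment. The masked bit is $b\oplus\bigoplus_{i:\theta_i=1}x_i$, and it depends only on the Hadamard-basis outcomes on \textsf{R}. The check against $x'$ depends only on the computational-basis positions. Both measurements act on \textsf{R} and commute with everything $\mathcal{B}$ does, so they can be deferred until after $\mathcal{B}$ outputs $x'$. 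The adversary's view is then generated by giving $\mathcal{Z}_\lambda$ the pair $(\theta,\beta)$ for a uniform bit $\beta$, and the dependence on $b$ is moved to a final parity measurement on the Hadamard positions of \textsf{R}.

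\textbf{Step 2 (reduce to a ``basis-hidden'' game).} It suffices to show that, conditioned on the check passing, the register \textsf{R} restricted to the positions $\{i:\theta_i=1\}$ has a Hadamard-basis parity that is statistically close to uniform given \textsf{B}. Then $b$ is information-theoretically masked, and the trace distance bound follows. To get this, compare with a hybrid in which $\mathcal{Z}_\lambda$ is given $0^\lambda$ instead of $\theta$, with $\theta$ sampled only afterwards. In that hybrid, $\mathcal{B}$'s output $x'$ is produced independently of $\theta$. A standard uncertainty-type argument then says the following: the projector ``$\textsf{R}$ agrees with $x'$ in the computational basis on $\theta$-zero positions, yet the Hadamard-basis parity on $\theta$-one positions is far from uniform'' has negligible weight over a random $\theta$ chosen after the fact. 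The mechanism is that $\theta$ is a random subset, so with overwhelming probability the test positions cover enough of \textsf{R} to pin down its computational-basis content, which forces the Hadamard parity to be uniform.

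\textbf{Step 3 (the hybrid step, the main obstacle).} Moving from $\theta$ to $0^\lambda$ inside $\mathcal{Z}_\lambda$ must use the computational hypothesis of Lemma~\ref{lem:certified-everlasting-bk}. However, the event of interest is a property of the residual state measured by an inefficient projection. I plan to handle this by noting that the quantity needed is not the full trace distance. It is only the probability that an \emph{efficient} test succeeds: check $x'$ on the computational positions of \textsf{R}, and measure the parity of the Hadamard positions of \textsf{R}, which is efficient given $\theta$ held by the challenger. Bounding this probability by $1/2+\mathsf{negl}$ when the check passes is an efficient distinguishing event. The state $\ket{\psi}_{\textsf{A},\textsf{C}}$, namely the EPR halves together with $\theta$ stored in \textsf{C}, is efficiently samplable. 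So the assumed indistinguishability transfers this bound from the $0^\lambda$ hybrid to the real one.

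Care is needed because $\theta$ must be in \textsf{C} for the test, while $\mathcal{Z}_\lambda$'s input is switched. I would resolve this by fixing $\theta$ and $\beta$ nonuniformly, which the hypothesis allows since it quantifies over every $\theta$ and $\beta$, and averaging at the end. Once the guessing advantage for the masked parity is negligible, a standard argument converts it into a negligible trace distance between $\widetilde{\mathcal{Z}}^{\mathcal{B}}_\lambda(0)$ and $\widetilde{\mathcal{Z}}^{\mathcal{B}}_\lambda(1)$: a bias bound holding uniformly over all measurements on \textsf{B} upgrades to a trace-distance bound.
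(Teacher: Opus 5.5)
Your outline matches the skeleton of the Bartusek--Khurana proof~\cite{bartusek2023cryptography}, which the paper imports without proof: EPR purification, a uniform $\beta$ in place of the masked bit, switching $\theta$ to $0^\lambda$ with $\theta,\beta$ fixed nonuniformly, and a random-subset sampling bound. The genuine gap is in Step~3, where you transfer the wrong event.

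Your efficient test (check the $\theta$-zero positions, then measure the Hadamard parity of \textsf{R} on the $\theta$-one positions) controls only the \emph{marginal} distribution of the masking parity. The lemma needs that parity to be close to uniform \emph{jointly with} register \textsf{B}, against arbitrary and possibly unbounded measurements of \textsf{B}; the trace distance is exactly that unbounded advantage. A bias of $1/2+\mathsf{negl}(\lambda)$ for the parity alone is consistent with \textsf{B} holding a perfectly correlated copy of it. You also cannot put an arbitrary measurement of \textsf{B} into the distinguisher for the hypothesis, which must be \textsf{QPT}. So your closing step, a bias bound ``uniformly over all measurements on \textsf{B},'' is exactly what Step~3 fails to deliver, and no efficient test touching \textsf{B} can deliver it.

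A smaller point concerns Step~1. ``Commutes with everything $\mathcal{B}$ does'' does not by itself let you defer the masking measurement, because its outcome is an input to $\mathcal{Z}_\lambda$. You need the explicit post-selection: output $\bot$ unless the final parity equals $\beta\oplus b$. This succeeds with probability $1/2$ and costs a factor of $2$ in trace distance.

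The missing idea is to transfer an event that does not involve \textsf{B} at all, and then finish information-theoretically.
\begin{itemize}
\item Measure all of \textsf{R} in the computational basis to get $u$; this commutes with the check. Call the run \emph{bad} if $u_i=x'_i$ for every $i$ with $\theta_i=0$, yet $u$ and $x'$ differ on at least $k/2$ of the $k$ positions with $\theta_i=1$.
\item With $\theta$ hardwired, this event is efficiently testable, so your nonuniform reduction moves its probability over from the $0^\lambda$ hybrid. There the distribution of $u\oplus x'$ does not depend on $\theta$, so averaging over $\theta$ and applying the sampling bound makes it negligible.
\item In the real hybrid the bad projector therefore has negligible weight. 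Argue this on unnormalized states rather than after conditioning on the check, since the check may pass only rarely.
\item Up to negligible trace distance, the accepted joint state of $\textsf{R}_{\theta=1}$ and \textsf{B} then lies in $\mathrm{span}\{\ket{u}:\mathrm{wt}(u\oplus x'_{\theta=1})<k/2\}\otimes\mathcal{H}_{\textsf{B}}$.
\item The Hadamard-parity observable $X^{\otimes k}$ flips every bit, so it maps this subspace into one orthogonal to it. Hence $\mathrm{Tr}_{\textsf{R}}\bigl[(X^{\otimes k}\otimes I)\rho\bigr]=0$: the masking parity is exactly uniform and independent of \textsf{B}, for every measurement of \textsf{B}.
\end{itemize}
This uncertainty-principle step, not a computational bias bound, is what yields the negligible trace distance.
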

\vspace{0.2cm}

\noindent Kitagawa \textit{et al.}~\cite{kitagawa2023publicly} upgraded the Lemma \ref{lem:certified-everlasting-bk} to a publicly verifiable version by employing a signature scheme with one-time unforgeability for BB84 states.

\begin{lemma}[Publicly Verifiable Certified Everlasting Lemma \cite{kitagawa2023publicly}]\label{lem:pv-certified-everlasting}
Let $\prod_{\mathsf{SIG}} = (\mathsf{Gen}, \mathsf{Sign}, \mathsf{Verify})$ denote a signature scheme that is one-time unforgeable for BB84 states. Consider a family of efficient quantum operations $\{\mathcal{Z}_{\lambda}(\cdot,$ $\cdot,$ $\cdot,$ $\cdot,$ $\cdot) \}_{\lambda \in \mathbb{N}}$, each taking five inputs: (i) a verification key $\mathsf{vk},$ (ii) a signing key $\mathsf{sigk},$ (iii) a $\lambda$-bit string $\theta,$ (iv) a single bit $\beta,$ (v) a quantum register \textcolor{gray}{\textsf{A}}. For any \textsf{QPT} algorithm $\mathcal{B}$, define the experiment $\widetilde{\mathcal{Z}}^{\mathcal{B}}_{\lambda}(b)$ over quantum states as the outcome of executing $\mathcal{B}$ according to the following procedure: \vspace{0.2cm}

\begin{description}\setlength\itemsep{0.5em}
    \item[$-$] Sample random strings $x, \theta \in \{0,1\}^{\lambda}$ and generate a key pair $(\mathsf{vk}, \mathsf{sigk}) \gets \mathsf{Gen}(1^{\lambda})$. 
    
    \item[$-$] Construct the quantum state $\ket{\psi}$ by applying the mapping 
    \[
    \ket{m}\ket{0 \ldots 0} \ \longmapsto\ \ket{m} \ket{\mathsf{Sign}(\mathsf{sigk}, m)}
    \] to the input $\ket{x}_{\theta} \otimes \ket{0 \ldots 0}$. 
    \item[$-$] Initialize $\mathcal{B}$ with parameters
        \[
        \mathcal{Z}_{\lambda}\left(\mathsf{vk}, \mathsf{sigk}, \theta,\, b \oplus \bigoplus_{i : \theta_i = 1} x_i,\, \ket{\psi}\right).
        \]

    \item[$-$] Parse $\mathcal{B}$’s output as a pair $(x', \sigma')$ and a residual quantum state on register \textcolor{gray}{\textsf{B}}.
    
    \item[$-$] If $\mathsf{Verify}(\mathsf{vk}, x', \sigma') = \top$, output the final state of \textcolor{gray}{\textsf{B}}; otherwise output the special symbol $\bot$.
\end{description}

\noindent Suppose that, for every \textsf{QPT} adversary $\mathcal{A}$, any key pair 
$(\mathsf{vk}, \mathsf{sigk})$ of $\prod_{\mathsf{SIG}}$, any $\theta \in \{0,1\}^{\lambda}$, 
any $\beta \in \{0,1\}$, and any efficiently samplable quantum state 
$\ket{\psi}_{\textcolor{gray}{\textsf{A},}\textcolor{gray}{\textsf{C}}}$ over registers \textcolor{gray}{\textsf{A}} and \textcolor{gray}{\textsf{C}}, the following properties hold:
\begin{align*}
&\left| \Pr\big[ \mathcal{A}(\mathcal{Z}_{\lambda}(\mathsf{vk}, \mathsf{sigk}, \theta, \beta, \textcolor{gray}{\textsf{A}}), \textcolor{gray}{\textsf{C}}) = 1 \big]
- \Pr\big[ \mathcal{A}(\mathcal{Z}_{\lambda}(\mathsf{vk}, 0^{\ell_{\mathsf{sigk}}}, \theta, \beta, \textcolor{gray}{\textsf{A}}), \textcolor{gray}{\textsf{C}}) = 1 \big] \right|
\leq \mathsf{negl}(\lambda), \\
&\left| \Pr\big[ \mathcal{A}(\mathcal{Z}_{\lambda}(\mathsf{vk}, \mathsf{sigk}, \theta, \beta, \textcolor{gray}{\textsf{A}}), \textcolor{gray}{\textsf{C}}) = 1 \big]
- \Pr\big[ \mathcal{A}(\mathcal{Z}_{\lambda}(\mathsf{vk}, \mathsf{sigk}, 0^{\lambda}, \beta, \textcolor{gray}{\textsf{A}}), \textcolor{gray}{\textsf{C}}) = 1 \big] \right|
\leq \mathsf{negl}(\lambda). 
\end{align*}

\noindent It then follows that, for every \textsf{QPT} adversary $\mathcal{B}$,
\[
\mathsf{TD}(\widetilde{\mathcal{Z}}^{\mathcal{B}}_{\lambda}(0), \widetilde{\mathcal{Z}}^{\mathcal{B}}_{\lambda}(1)) \leq \mathsf{negl}(\lambda).
\]
\end{lemma}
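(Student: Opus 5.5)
The plan is to reduce the statement to the privately verifiable Certified Everlasting Lemma (Lemma~\ref{lem:certified-everlasting-bk}), using the one-time unforgeability of $\prod_{\mathsf{SIG}}$ for BB84 states to replace signature verification by the check on the computational-basis positions. Concretely, I would define a new efficient operation $\mathcal{Z}'_\lambda(\theta,\beta,\textcolor{gray}{\textsf{A}})$ that internally samples $(\mathsf{vk},\mathsf{sigk})\gets\mathsf{Gen}(1^\lambda)$, applies the coherent signing map $U_{\mathsf{sign}}:\ket{m}\ket{0\ldots0}\mapsto\ket{m}\ket{\mathsf{Sign}(\mathsf{sigk},m)}$ to the register \textcolor{gray}{\textsf{A}} (holding $\ket{x}_\theta$), and outputs $\mathcal{Z}_\lambda(\mathsf{vk},\mathsf{sigk},\theta,\beta,\cdot)$ on the result. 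Then $\widetilde{\mathcal{Z}}^{\mathcal{B}}_\lambda(b)$ in the present lemma is the same as running $\mathcal{B}$ on $\mathcal{Z}'_\lambda(\theta, b\oplus\bigoplus_{i:\theta_i=1}x_i,\ket{x}_\theta)$, except for the final acceptance test.

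The first step is to check the hypothesis of Lemma~\ref{lem:certified-everlasting-bk} for $\mathcal{Z}'$: for any $\theta,\beta$ and any efficiently samplable $\ket{\psi}_{\textsf{A},\textsf{C}}$, $\mathcal{Z}'(\theta,\beta,\textsf{A})\approx_c\mathcal{Z}'(0^\lambda,\beta,\textsf{A})$. This is immediate from the second assumed indistinguishability (switching $\theta$ to $0^\lambda$ for a fixed key pair), averaged over the key pair, since the reduction can itself sample $(\mathsf{vk},\mathsf{sigk})$ and apply $U_{\mathsf{sign}}$. The first assumption (erasing $\mathsf{sigk}$) is reserved for the unforgeability step below.

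The second and main step is to relate the two acceptance tests: I must show that, except with negligible probability, whenever $\mathsf{Verify}(\mathsf{vk},x',\sigma')=\top$ we also have $x'_i=x_i$ for all $i$ with $\theta_i=0$. Then the output of the signature-based experiment is within negligible trace distance of a modified experiment that additionally requires the BB84 check, and that modified experiment differs from the Lemma~\ref{lem:certified-everlasting-bk} experiment (with adversary $\mathcal{B}'$ that runs $\mathcal{B}$ and outputs $x'$) only by also requiring signature validity. Rejecting more can only lower trace distance, since it is a post-processing that maps some accepted branches to $\bot$ identically in both worlds. Hence $\mathsf{TD}(\widetilde{\mathcal{Z}}^{\mathcal{B}}_\lambda(0),\widetilde{\mathcal{Z}}^{\mathcal{B}}_\lambda(1))\le\mathsf{negl}$.

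The obstacle is proving this "valid signature implies correct computational-basis bits" claim. My approach is a hybrid. First, the first hypothesis lets me replace $\mathsf{sigk}$ by $0^{\ell_{\mathsf{sigk}}}$ inside $\mathcal{Z}$; this is a computational change, and the event "valid signature on $x'$ with a wrong bit at some $\theta_i=0$ position" is efficiently checkable by a reduction that knows $x,\theta,\mathsf{vk}$. Second, the adversary then effectively holds only the coherently signed BB84 state. Measuring the $\theta_i=0$ positions in the computational basis commutes with the adversary's action, and the $\theta_i=1$ positions carry no information on the computational-basis outcomes, so this reduces to the one-time unforgeability for BB84 states game. There, the forger receives a signature on one message $m^*$ (the result of measuring $\ket{x}_\theta$, i.e. a string agreeing with $x$ on $\theta_i=0$) and outputs a valid signature on $x'\ne m^*$. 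I expect the delicate part to be making this correspondence precise when the signed message is in superposition over the Hadamard positions; I would lean on the BB84-specific unforgeability notion exactly as Kitagawa et al.\ do, guessing or absorbing the Hadamard positions so that a valid $(x',\sigma')$ with $x'$ disagreeing on a computational position yields a forgery.
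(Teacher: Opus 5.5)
\textbf{The skeleton is right, but the bound on the bad event has a gap.} The paper gives no proof of this lemma; it imports it from Kitagawa et al. Your skeleton matches theirs:
\begin{enumerate}
\item Bound the event $E$ = ``$\mathsf{Verify}(\mathsf{vk},x',\sigma')=\top$ but $x'_i\neq x_i$ for some $i$ with $\theta_i=0$''.
\item Apply Lemma~\ref{lem:certified-everlasting-bk} to the operation $\mathcal{Z}'$ that samples the keys and signs coherently.
\item Post-select on signature validity. For this step, let $\mathcal{Z}'$ also output $\mathsf{vk}$ and let $\mathcal{B}'$ keep $(x',\sigma',\mathsf{vk})$ in its residual register. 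The extra check is then a channel applied to the Bartusek--Khurana output, and data processing applies.
\end{enumerate}

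\textbf{Missing step 1: the forger cannot run $\mathcal{Z}_\lambda$.} After you erase $\mathsf{sigk}$, the adversary still receives $\mathcal{Z}_\lambda(\mathsf{vk},0^{\ell_{\mathsf{sigk}}},\theta,\beta,\ket{\psi})$ with $\beta=b\oplus\bigoplus_{i:\theta_i=1}x_i$. A forger in the BB84 unforgeability game gets only $\mathsf{vk}$ and $\ket{\psi}$ and knows neither $\theta$ nor $x$. So your claim that ``the adversary then effectively holds only the coherently signed BB84 state'' is unjustified. Reserving the second hypothesis for the Bartusek--Khurana step is exactly what breaks the reduction. The fix has three parts:
\begin{itemize}
\item Switch $\theta\to 0^\lambda$ first, using the second hypothesis while $\mathsf{sigk}$ is still real. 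That hypothesis is stated only for honestly generated key pairs, so $(\mathsf{vk},0^{\ell_{\mathsf{sigk}}})$ does not qualify.
\item Then switch $\mathsf{sigk}\to 0^{\ell_{\mathsf{sigk}}}$ using the first hypothesis at $\theta=0^\lambda$.
\item In both steps, let register \textsf{C} hold $x$, sampled conditioned on the parity that fixes $\beta$. The hypotheses are for a fixed $\beta$, whereas here $\beta$ is correlated with the state. With $x$ in \textsf{C}, the distinguisher can test $E$.
\end{itemize}
Finally, the forger guesses $\beta$ uniformly. Since $\Pr[E]$ under the correct $\beta$ is at most the sum over both values of $\beta$, this costs only a factor $2$.

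\textbf{Missing step 2: the wrong unforgeability game.} You describe the game as a signature on one measured message $m^*$ with a forgery on $x'\neq m^*$. That is not what is needed, and it cannot be reached by reduction. Measuring the message register destroys the coherent superposition over the Hadamard positions, and one classical signing query cannot produce $\ket{\psi}$. Ordinary one-time unforgeability therefore does not suffice. The notion the lemma assumes is Kitagawa et al.'s, stated imprecisely in this paper's preliminaries: given $\mathsf{vk}$ and $U_{\mathsf{sign}}(\ket{x}_\theta\ket{0\ldots0})$ for uniform $x,\theta$, no QPT adversary outputs a valid $(x',\sigma')$ with $x'_i\neq x_i$ at some position where $\theta_i=0$. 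After the reduction above, $\Pr[E]$ in the final hybrid is bounded by a direct invocation of this definition. No ``guessing or absorbing'' of Hadamard positions is needed. For Lamport-type schemes, the definition itself is proved by embedding a one-way-function challenge at the unused preimage $(i,1-x_i)$ of a computational-basis position.
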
\vspace{0.2cm}
\subsection{One-Shot Signature}\label{def:oss-framework}
In this section, we present the formal definition of a one-shot signature (\textsf{OSS}) scheme as introduced by Amos \textit{et al.}~\cite{amos2020one}. Such schemes are designed for settings where each key pair is used to sign at most one message and rely on quantum capabilities for key generation and signing.\vspace{0.2cm}

\noindent\textbf{Syntax.} A one-shot signature scheme consists of a tuple of \textsf{QPT} algorithms $\prod_{\textsf{OSS}}=(\textsf{Setup}, \textsf{KeyGen}, \textsf{Sign}, \textsf{Verify})$ over a message space $\mathcal{M}$ and signature space $\mathcal{S}$, which are described below.\vspace{0.2cm}

\begin{description}\setlength\itemsep{0.5em}
    \item[$-$] $(\textsf{crs}) \leftarrow \textsf{Setup}(1^\lambda).$ On input the security parameter $1^\lambda$, the setup algorithm outputs a classical common reference string $\textsf{crs}$.

    \item[$-$] $(\mathsf{pk}, \mathsf{sk}) \leftarrow \textsf{KeyGen}(\textsf{crs}).$  
    Given the common reference string $\textsf{crs}$, the key generation algorithm outputs a key pair consisting of a classical public key $\mathsf{pk}$ and a quantum secret key $\mathsf{sk}$.

    \item[$-$] $(\sigma) \leftarrow \textsf{Sign}(m, \mathsf{sk}).$ The signing algorithm intakes the quantum secret-key $sk$ and a message $m$ from the message space $\mathcal{M}$, which have to sign. It outputs a signature $\sigma \in \mathcal{S}$ and makes publicly available.

    \item[$-$] $(\top\vee\bot) \leftarrow \textsf{Verify}(\textsf{crs}, \mathsf{pk}, m, \sigma).$ Taking as input the common reference string $crs$, classical public-key $pk$, message $m$ and the signature $\sigma$, the algorithm can verify whether the signature is valid or not.
 	\[
        \textsf{Verify}(crs,pk,m,\sigma)=
 	\begin{cases}
 		\top &\text{if} ~~\sigma~\text{is valid.}\\
 		\bot &\text{if} ~~\sigma~\text{is invalid.}
 	\end{cases}	
 	\]
\end{description}

\noindent A one-shot signature scheme must satisfy correctness and security properties as defined below.\vspace{0.2cm}

\begin{description}
\item[$\bullet$] \textbf{Correctness.}  
We say that an \textsf{OSS} scheme is said to be correct if for all security parameters $\lambda \in \mathbb{N}$ and all messages $m \in \mathcal{M}$, the probability that a valid signature fails to verify is negligible:
\[
\Pr\left[
    \textsf{Verify}(crs, \mathsf{pk}, m, \sigma) = \bot
    \,\middle|\,
    \begin{array}{l}
        crs \leftarrow \textsf{Setup}(1^\lambda), \\
        (\mathsf{pk}, \mathsf{sk}) \leftarrow \textsf{KeyGen}(crs), \\
        \sigma \leftarrow \textsf{Sign}(m, \mathsf{sk})
    \end{array}
\right] \leq \mathsf{negl}(\lambda).
\]
\end{description}

\vspace{0.2cm}
\begin{description}
\item[$\bullet$] \textbf{Security.}  
We say that the one-shot signature scheme is secure if no \textsf{QPT} adversary $\mathcal{A}$ can produce two valid message-signature pairs under the same public key. Specifically, for all such adversaries $\mathcal{A}$ and for all $i\in \{0,1\}$, we require:
\[
\Pr\left[
    \begin{array}{l}
        \textsf{Verify}(crs, \mathsf{pk}, m_i, \sigma_i) = \top ~\wedge~ m_0 \ne m_1
    \end{array}
    \,\middle|\,
    \begin{array}{l}
        crs \leftarrow \textsf{Setup}(1^\lambda), \\
        (\mathsf{pk}, m_i, \sigma_i) \leftarrow \mathcal{A}(crs)
    \end{array}
\right] \leq \mathsf{negl}(\lambda).
\]
The one-shot nature of the signature illustrates that the signing key can be used to produce a signature on at most one message, and any attempt to produce a second distinct signed message should fail verification with overwhelming probability.
\end{description}

\subsection{Registered Attribute-Based Encryption}\label{model:RABE}

In this section, we present the formal definition of a (key-policy) registered attribute-based encryption (\textsf{RABE}) scheme that supports an attribute universe of polynomial size, adapted from the framework introduced in \cite{hohenberger2023registered}. The notion of \textsf{RABE} extends the concept of registration-based encryption (\textsf{RBE}) \cite{garg2018registration} by incorporating attribute-based access control into the registration-based paradigm. \vspace{0.2cm}

\noindent\textbf{Syntax.}
Let $\lambda$ be the security parameter. We define the attribute universe as $\mathcal{U} = \{\mathcal{U}_\tau\}_{\tau \in \mathbb{N}}$, the policy space over $\mathcal{U}$ as $\mathcal{P} = \{\mathcal{P}_\tau\}_{\tau \in \mathbb{N}}$, where each ${P}\in \mathcal{P}_\tau $ is a mapping $P:\mathcal{P}_\tau: \mathcal{U}_\tau \rightarrow\{0,1\}$ and the message space as $\mathcal{M} = \{\mathcal{M}_\lambda\}_{\lambda \in \mathbb{N}}$. A \textsf{RABE} scheme is defined as a tuple of \textsf{PPT} algorithms $\prod_{\textsf{RABE}} = (\mathsf{Setup},$ $\mathsf{KeyGen},$ $\mathsf{RegPK},$ $\mathsf{Encrypt},$ $\mathsf{Update},$ $\mathsf{Decrypt})$ with the following functionalities:

\begin{description}\setlength\itemsep{0.5em}
\item[$-$] $(\mathsf{crs}) \leftarrow \mathsf{Setup}(1^\lambda, 1^{{\tau}}).$
The setup algorithm generates a common reference string $\mathsf{crs}$ based on the security parameter $\lambda$ and the policy-family parameter $\tau$.

\item[$-$] $(\mathsf{pk}, \mathsf{sk}) \leftarrow \mathsf{KeyGen}(\mathsf{crs}, \mathsf{aux}, P).$ Taking as input the common reference string \textsf{crs}, an auxiliary state $\mathsf{aux},$ and a policy $P\in \mathcal{P}_{\tau},$ the key generation algorithm outputs a public/secret key pair $(\mathsf{pk}, \mathsf{sk})$.

\item[$-$] $(\mathsf{mpk}, \mathsf{aux}') \leftarrow \mathsf{RegPK}(\mathsf{crs}, \mathsf{aux}, \mathsf{pk}, P).$ The registration algorithm takes as input the common reference string $\mathsf{crs}$, the current auxiliary state $\mathsf{aux}$, a public key $\mathsf{pk}$, and a policy $P\in \mathcal{P}_{\tau}$. It updates the auxiliary state to $\mathsf{aux}'$ and outputs the corresponding master public key $\mathsf{mpk}$.

\item[$-$] $(\mathsf{ct}) \leftarrow \mathsf{Encrypt}(\mathsf{mpk}, X, \mu).$
Given the master public key $\mathsf{mpk}$, an attribute { $X \in \mathcal{U_{\tau}}$}, and a message $\mu \in \mathcal{M}$, the encryption algorithm outputs a ciphertext $\mathsf{ct}$.

\item[$-$] $(\mathsf{hsk}) \leftarrow \mathsf{Update}(\mathsf{crs}, \mathsf{aux}, \mathsf{pk}).$ The update algorithm, given the common reference string $\mathsf{crs}$, the auxiliary state $\mathsf{aux}$, and a public key $\mathsf{pk}$, produces a helper key $\mathsf{hsk}$ to facilitate the decryption process.

\item[$-$] $(\mu') \leftarrow \mathsf{Decrypt}(\mathsf{sk}, \mathsf{hsk},{X}, \mathsf{ct}).$
Given a secret key $\mathsf{sk}$, a helper key $\mathsf{hsk}$,an attribute { $X \in \mathcal{U_{\tau}}$} and a ciphertext $\mathsf{ct}$, the decryption algorithm outputs either a message $\mu' \in \mathcal{M}$, the failure symbol $\perp$, or the flag $\mathsf{GetUpdate}$ to indicate that re-synchronization is required.
\end{description}

\vspace{0.2cm}
\noindent\textbf{Correctness and Efficiency.}\label{def:rabe-correctness}
A \textsf{RABE} scheme $\prod_{\textsf{RABE}}$ is said to satisfy correctness if any honestly registered user can successfully decrypt ciphertexts whose associated policies are satisfied by their attribute set, regardless of the presence of adversarially registered keys. Specifically, suppose that a user registers her public key with an access policy. In that case, she must be able to decrypt all ciphertexts encrypted under the corresponding (or any subsequent) master public key, provided the attribute is satisfied by the access policy. This guarantee must hold even when malicious users register malformed keys, as long as the key curator behaves semi-honestly. Given a security parameter $\lambda$, policy parameter $\tau$ the correctness and efficiency of a \textsf{RABE} scheme is formally defined through the following interaction between an adversary $\mathcal{A}$ and a challenger.\vspace{0.2cm}

\begin{description}\setlength\itemsep{0.5em}
    \item[$\bullet$] \textbf{Setup phase:} The challenger samples the common reference string $\mathsf{crs} \gets \mathsf{Setup}(1^\lambda, 1^{\tau})$, initializes the auxiliary input $\mathsf{aux} \gets \bot$ and initial master public key $\mathsf{mpk}_0 \gets \bot$. It also sets the counters $\mathsf{ctr}_{\mathsf{reg}} \gets 0$ and $\mathsf{ctr}_{\mathsf{enc}} \gets 0$, and initializes $\mathsf{ctr}_{\mathsf{reg}}^\ast \gets \infty$ to store the index of the target key. Finally, it sends $\mathsf{crs}$ to $\mathcal{A}$.

    \item[$\bullet$] \textbf{Query phase:} The adversary $\mathcal{A}$ may adaptively issue the following queries.\vspace{0.2cm}
     
    \begin{description}\setlength\itemsep{0.5em}
        \item[$-$] \textbf{Register non-target key query:} On input of a public key \textsf{pk}, and a access policy $P\in \mathcal{P}_{\tau}$, the challenger increments $\mathsf{ctr}_{\mathsf{reg}} \gets \mathsf{ctr}_{\mathsf{reg}} + 1$ and computes $(\mathsf{mpk}_{\mathsf{ctr}_{\mathsf{reg}}}, \mathsf{aux}')$ $\gets$ $\mathsf{RegPK}(\mathsf{crs},$ $\mathsf{aux},$ $\mathsf{pk},$ $P)$. It updates $\mathsf{aux} \gets \mathsf{aux}'$ and returns $(\mathsf{ctr}_{\mathsf{reg}},\mathsf{mpk}_{\mathsf{ctr}_{\mathsf{reg}}},\mathsf{aux})$ to $\mathcal{A}$.

        \item[$-$] \textbf{Register target key query:} On input a policy $P^\ast \in \mathcal{P}_{\tau}$, if a target key has already been registered, the challenger returns $\bot$. Otherwise, it increments the counter $\mathsf{ctr}_{\mathsf{reg}}$ $\gets$ $\mathsf{ctr}_{\mathsf{reg}} + 1$ and generates $(\mathsf{pk}^\ast,$ $\mathsf{sk}^\ast)$ $\gets$ $\mathsf{KeyGen}(\mathsf{crs}, \mathsf{aux}, P^\ast)$. It then registers the key to obtain $(\mathsf{mpk}_{\mathsf{ctr}_{\mathsf{reg}}}, \mathsf{aux}')$ $\gets$ $\textsf{RegPK}(\mathsf{crs},$ $\mathsf{aux},$ $\mathsf{pk}^\ast,$ $P^\ast)$, and computes the helper decryption key $\mathsf{hsk}^\ast$ $\gets$ $\mathsf{Update}(\mathsf{crs},$ $\mathsf{aux},$ $\mathsf{pk}^\ast)$. The challenger sets $\mathsf{aux} \gets \mathsf{aux}'$, updates $\mathsf{ctr}_{\mathsf{reg}}^\ast \gets \mathsf{ctr}_{\mathsf{reg}}$, and returns$(\mathsf{ctr}_{\mathsf{reg}}, \mathsf{mpk}_{\mathsf{ctr}_{\mathsf{reg}}},\mathsf{aux}, \mathsf{pk}^\ast, \mathsf{sk}^\ast, \mathsf{hsk}^\ast)$ to $\mathcal{A}$.

        \item[$-$] \textbf{Encryption query:} On input of an public key index $\mathsf{ctr}^\ast_{\mathsf{reg}}\leq i\leq \mathsf{ctr}_{\mathsf{reg}}$, a message $\mu_{\mathsf{ctr}_{\mathsf{enc}}}$, and an attribute {$X_{\mathsf{ctr}_{\mathsf{enc}}}\in \mathcal{U_{\tau}}$}, if no target key has been registered or $P^\ast (X_{\mathsf{ctr}_{\mathsf{enc}}})=0$, the challenger returns $\bot$. Otherwise, it increments the counter $\mathsf{ctr}_{\mathsf{enc}}$ $\gets$ $\mathsf{ctr}_{\mathsf{enc}} + 1$ and generates $\mathsf{ct_{\mathsf{ctr}_{\mathsf{enc}}}} \gets \mathsf{Encrypt}(\mathsf{mpk}_i, X_{\mathsf{ctr}_{\mathsf{enc}}}, \mu_{\mathsf{ctr}_{\mathsf{enc}}})$ and returns the pair $(\mathsf{ctr}_{\mathsf{enc}},\mathsf{ct_{\mathsf{ctr}_{\mathsf{enc}}}})$.

        \item[$-$] \textbf{Decryption query:} On input of a ciphertext index $1\leq j\leq \mathsf{ctr}_{\mathsf{enc}}$, the challenger computes 
       { $m'_j \gets \mathsf{Decrypt}(\mathsf{sk}^\ast, \mathsf{hsk}^\ast, X_j,  \mathsf{ct}_j)$}. If $m'_j = \mathsf{GetUpdate}$, it updates the helper key 
        $\mathsf{hsk}^\ast \gets \mathsf{Update}(\mathsf{crs}, \mathsf{aux}, \mathsf{pk}^\ast)$ 
        and recomputes $m'_j$. If $m'_j \neq m_j$, the challenger sets $b \gets 1$ and halts.
    \end{description}
    If the adversary has finished making queries and the experiment has not halted (as a result of a decryption query), then the experiment outputs $b=0$.
\end{description}

\noindent The scheme $\prod_{\textsf{RABE}}$ is correct and efficient if for all (possibly unbounded) adversaries $\mathcal{A}$ making at most a polynomial number of queries, the following properties hold:

\begin{itemize}\setlength\itemsep{0.5em}
    \item[$\bullet$] \textbf{Correctness:} There exists a negligible function 
    $\mathsf{negl}(\cdot)$ such that for all $\lambda \in \mathbb{N}$,
    \[
        \Pr[b = 1] = \mathsf{negl}(\lambda).
    \]
    It achieves perfect correctness if $\Pr[b = 1] = 0$.

    \item[$\bullet$] \textbf{Compactness:} Let $N$ denote the number of registration queries. There exists a universal polynomial $\mathsf{poly}(\cdot, \cdot, \cdot)$ such that for all $i\in[N],$
    \[
        |\mathsf{mpk}_i| = \mathsf{poly}(\lambda, |\mathcal{U}|, \log i), \quad 
        |\mathsf{hsk}^\ast| = \mathsf{poly}(\lambda, |\mathcal{U}|, \log N).
    \]

    \item[$\bullet$] \textbf{Update efficiency:} 
    The $\mathsf{Update}$ algorithm is invoked at most $O(\log N)$ times, 
    each execution running in $\mathsf{poly}(\log N)$ time in the RAM model, which allows sublinear input access.
\end{itemize}\vspace{0.2cm}

\noindent\textbf{Security.}\label{def:rabe-security}  
The security notion for a (key-policy) registered \textsf{ABE} scheme is analogous to the standard \textsf{ABE}. 
In the security game, the adversary may register users whose policies are satisfied by the challenge attribute set, as long as their secret keys are honestly generated by the challenger and thus unknown to the adversary. Furthermore, the adversary may register arbitrary public keys for policies of its choice, provided that none satisfy the challenge attributes. \vspace{0.2cm} 

\noindent Let $\prod_{\textsf{RABE}}=$ $(\mathsf{Setup},$ $\mathsf{KeyGen},$ $\mathsf{RegPK}, \mathsf{Encrypt},$ $\mathsf{Update},$ $\mathsf{Decrypt})$ be a registered \textsf{ABE} scheme with attribute universe $\mathcal{U}$, policy space $\mathcal{P}$, and message space $\mathcal{M}$. For a security parameter $\lambda$, an adversary $\mathcal{A}$, and a bit $b \in \{0,1\}$, we define the following security game $\textsf{Exp}^{\textsf{RABE}}_{\mathcal{A}}(\lambda,b)$ between $\mathcal{A}$ and the challenger:

\begin{description}\setlength\itemsep{0.5em}
    \item[$\bullet$] \textbf{Setup:} The challenger runs $\mathsf{crs} \gets \mathsf{Setup}(1^\lambda, 1^{{\tau}})$ and initializes: $\mathsf{aux} \gets \bot$, $\mathsf{mpk} \gets \bot$, a counter $\mathsf{ctr} \gets 0$ for honest-key registrations, $\mathcal{C} \gets \emptyset$ for corrupted public keys, and a dictionary $D \gets \emptyset$ mapping public keys to attribute sets. If $\mathsf{pk} \notin D$, then $D[\mathsf{pk}] \gets \emptyset$. The challenger sends $\mathsf{crs}$ to $\mathcal{A}$.

    \item[$\bullet$] \textbf{Query phase:}  
    In this phase, $\mathcal{A}$ may adaptively issue the following queries.\vspace{0.2cm}
    
    \begin{description}\setlength\itemsep{0.5em}
        \item[$-$] \textbf{Register corrupted key query:} $\mathcal{A}$ submits a public key $\mathsf{pk},$ and a policy $ P \in \mathcal{P}_{\tau}$. The challenger computes $(\mathsf{mpk}', \mathsf{aux}')$ $\gets$ $\mathsf{RegPK}(\mathsf{crs},$ $\mathsf{aux},$ $\mathsf{pk},$ $P)$, updates $\mathsf{mpk} \gets \mathsf{mpk}'$, $\mathsf{aux} \gets \mathsf{aux}'$, adds $\mathsf{pk}$ to $\mathcal{C}$, and sets $D[\mathsf{pk}] \gets D[\mathsf{pk}] \cup \{P\}$. It returns $(\mathsf{mpk}', \mathsf{aux}')$ to $\mathcal{A}$.

        \item[$-$] \textbf{Register honest key query:} $\mathcal{A}$ specifies a policy $ P \in \mathcal{P}_{\tau}$. The challenger updates the counter $\mathsf{ctr} \gets \mathsf{ctr} + 1$ and generates a key pair $(\mathsf{pk}_{\mathsf{ctr}}, \mathsf{sk}_{\mathsf{ctr}}) \gets \mathsf{KeyGen}(\mathsf{crs}, \mathsf{aux}, P)$. Subsequently, it executes  $(\mathsf{mpk}', \mathsf{aux}')$ $\gets$ $\mathsf{RegPK}(\mathsf{crs},$ $\mathsf{aux},$ $\mathsf{pk}_{\mathsf{ctr}}, P)$, and updates $\mathsf{mpk} \gets \mathsf{mpk}'$, $\mathsf{aux} \gets \mathsf{aux}'$, and $D[\mathsf{pk}_{\mathsf{ctr}}] \gets D[\mathsf{pk}_{\mathsf{ctr}}] \cup \{P\}$. Finally, it returns $(\mathsf{ctr}, \mathsf{mpk}', \mathsf{aux}', \mathsf{pk}_{\mathsf{ctr}})$ to $\mathcal{A}$.   
        
       \item[$-$] \textbf{Corrupt honest key query:} $\mathcal{A}$ submits $i \in [\mathsf{ctr}]$. The challenger retrieves $(\mathsf{pk}_i, \mathsf{sk}_i)$, adds $\mathsf{pk}_i$ to $\mathcal{C}$, and returns $\mathsf{sk}_i$ to $\mathcal{A}$.
    \end{description}

    \item[$\bullet$] \textbf{Challenge:} $\mathcal{A}$ outputs two messages $\mu_0^\ast, \mu_1^\ast \in \mathcal{M}$ and an { attribute $X \in \mathcal{U_{\tau}}$}. The challenger samples a bit $b$ from $\{0,1\}$, computes $\mathsf{ct}^\ast \gets \mathsf{Encrypt}(\mathsf{mpk},$ $X^\ast,$ $\mu_b^\ast),$ and returns $\mathsf{ct}^\ast$ to $\mathcal{A}$.

    \item[$\bullet$] \textbf{Output:}  
    Finally, $\mathcal{A}$ outputs a bit $b' \in \{0,1\}$.
\end{description}

\noindent Let $S = \{ P' \in D[\mathsf{pk}] \mid \mathsf{pk} \in \mathcal{C} \}$ be the collection of policies of all corrupted keys. An adversary $\mathcal{A}$ is admissible if for all $P \in S$, it holds that $P(X^\ast)=0$. We say that $\prod_{\textsf{RABE}}$ scheme is secure if for all efficient admissible adversaries, there exists a negligible function $\mathsf{negl}(\cdot)$ such that for all $\lambda \in \mathbb{N}$,
\[
    \big| \Pr[\textsf{Exp}^{\textsf{RABE}}_{\mathcal{A}}(\lambda,0) = 1] - \Pr[\textsf{Exp}^{\textsf{RABE}}_{\mathcal{A}}(\lambda,1) = 1] \big| = \mathsf{negl}(\lambda).
\]

%% file: 3_RABE_CD-CED-Framework.tex
\section{\textsf{RABE} with Certified Deletion and Everlasting Security}\label{section:RABE-Cipher-Deletion}
In this section, we introduce two variants of ciphertext deletion for registered attribute-based encryption. We first formalize the Registered Attribute-Based Encryption with Certified Deletion (\textsf{RABE-CD}), specifying its syntax, correctness, and security guarantees. We then upgrade this framework to capture the notion of Registered Attribute-Based Encryption with Certified Everlasting Deletion (\textsf{RABE-CED}), together with its corresponding correctness and security definitions. For both \textsf{RABE-CD} and \textsf{RABE-CED} systems, we further distinguish between privately and publicly verifiable settings and establish their respective security guarantees. \vspace{0.2cm}

\subsection{System Model of \textsf{RABE-CD}}\label{def:RABE-CD}
Let $\lambda$ be the security parameter, and $\mathcal{U} = \{\mathcal{U}_{\tau}\}_{{\tau}\in \mathbb{N}}$ denote the attribute universe. Let $\mathcal{P} = \{\mathcal{P}_{\tau}\}_{{\tau} \in \mathbb{N}}$ represent the policy space and $\mathcal{M} = \{\mathcal{M}_\lambda\}_{\lambda \in \mathbb{N}}$ represent the message space. A \textsf{RABE-CD} scheme consists of the tuple of efficient algorithms $\prod^{\mathsf{RABE}}_{\textsf{CD}} =$ $(\mathsf{Setup},$ $\mathsf{KeyGen},$ $\mathsf{RegPK},$ $\mathsf{Encrypt},$ $\mathsf{Update},$ $\mathsf{Decrypt},$ $\mathsf{Delete},$ $\mathsf{Verify})$ with the following syntax:\vspace{0.2cm}

\begin{description}\setlength\itemsep{0.5em}
\item[$-$] $(\textsf{crs}) \leftarrow \textsf{Setup}(1^{\lambda}, 1^{{\tau}}).$  
On the input of the length of the security parameter $\lambda$ and the size of the attribute universe $\mathcal{U}$, the setup algorithm outputs a common reference string $\mathsf{crs}$.
\item[$-$] {(\textsf{pk}, \textsf{sk})} $\leftarrow \mathsf{KeyGen}(\mathsf{crs}, \mathsf{aux}, P).$  
Given the common reference string \textsf{crs}, an auxiliary state $\mathsf{aux}$ (possibly empty) and a policy $P\in \mathcal{P}_{\tau}$, the key generation algorithm outputs a public key $\mathsf{pk}$ and secret key $\mathsf{sk}$.
\item[$-$] $(\mathsf{mpk}, \mathsf{aux}') \leftarrow \mathsf{RegPK}(\mathsf{crs}, \mathsf{aux}, \mathsf{pk}, P).$  The registration algorithm takes as input a common reference string \textsf{crs}, an auxiliary state \textsf{aux}, a public key $\mathsf{pk}$, and a policy $P\in \mathcal{P}_{\tau}$ to return an updated master public key $\mathsf{mpk}$ and auxiliary state $\mathsf{aux}'$.
\item[$-$] $(\mathsf{vk}, \mathsf{ct}) \leftarrow \mathsf{Encrypt}\left( \mathsf{mpk}, X, \mu \right).$ The encryption algorithm takes as input the master public key $\mathsf{mpk}$, an attribute {$X \in \mathcal{U_{\tau}}$}, and a message $\mu$ to produce a verification key $\mathsf{vk}$ together with a ciphertext $\mathsf{ct}$.

\item[$-$] $(\mathsf{hsk}) \leftarrow \mathsf{Update}(\mathsf{crs}, \mathsf{aux}, \mathsf{pk}).$ Taking as input the common reference string \textsf{crs}, an auxiliary state \textsf{aux}, and a public key \textsf{pk}, the update algorithm deterministically outputs a helper secret key $\mathsf{hsk}$.
\item[$-$] $(\mu \cup \{\perp, \mathsf{GetUpdate}\}) \leftarrow \mathsf{Decrypt}(\mathsf{sk}, \mathsf{hsk},{X}, \mathsf{ct}).$ On input of the secret key \textsf{sk}, a helper secret key \textsf{hsk}, {an attribute $X \in \mathcal{U_{\tau}}$} and a ciphertext \textsf{ct}, the decryption algorithm outputs the message $\mu$, a failure symbol $\perp$, or the flag $\mathsf{GetUpdate}$ indicating that an updated helper secret key is needed.
\item[$-$] $(\mathsf{cert}) \leftarrow \mathsf{Delete}(\mathsf{ct}).$  
The deletion algorithm generates a deletion certificate $\mathsf{cert}$ for the given ciphertext \textsf{ct}.
\item[$-$] $(b \in \{0,1\}) \leftarrow \mathsf{Verify}(\mathsf{vk}, \mathsf{cert}).$  
Given a verification key \textsf{vk} and a deletion certificate \textsf{cert}, the verification algorithm outputs {$0$ or $1$}.
\end{description}
 
\subsection{Correctness and Efficiency of \textsf{RABE-CD}}\label{def:rabe-cd-correctness}
Let $\prod^{\mathsf{RABE}}_{\textsf{CD}}$ denote a registered attribute-based encryption scheme with certified deletion protocol defined over an attribute universe $\mathcal{U}$, a policy space $\mathcal{P}$, and a message space $\mathcal{M}$. To define the correctness of $\prod^{\mathsf{RABE}}_{\textsf{CD}}$, we specify the following two experiments, Decryption Game and Verification Game, played between an adversary $\mathcal{A}$ and a challenger.\vspace{0.2cm}

\begin{description}\setlength\itemsep{0.5em}
    \item[$\bullet$] \textbf{Setup phase:} The challenger starts by sampling the common reference string $\mathsf{crs} \leftarrow \mathsf{Setup}(1^\lambda, 1^{{\tau}})$. Then it initializes the auxiliary input as $\mathsf{aux} \leftarrow \bot$ and sets the initial master public key to $\mathsf{mpk}_0 \leftarrow \bot$. Next, it initializes a counter $\mathsf{ctr}_{\mathsf{reg}} \leftarrow 0$ to record the number of registration queries and another counter $\mathsf{ctr}_{\mathsf{enc}} \leftarrow 0$ to track the number of encryption queries. In addition, it sets $\mathsf{ctr}^\ast_{\mathsf{reg}} \leftarrow \infty$ as the index of the target key, which may be updated during the game. At the end of this phase, the challenger sends $\mathsf{crs}$ to $\mathcal{A}$.
    
    \item[$\bullet$] \textbf{Query phase:} The adversary $\mathcal{A}$ may adaptively issue the following queries.\vspace{0.2cm}
    
    \begin{description}\setlength\itemsep{0.5em}
        \item[$-$] \textbf{Register non-target key query:} In this query, $\mathcal{A}$ provides a public key $\mathsf{pk}$ and a policy $P \in \mathcal{P}$. The challenger first increments the counter $\mathsf{ctr}_{\mathsf{reg}} \gets \mathsf{ctr}_{\mathsf{reg}} + 1$, then registers the key by computing $(\mathsf{mpk}_{\mathsf{ctr}_{\mathsf{reg}}}, \mathsf{aux}') \gets \mathsf{RegPK}(\mathsf{crs}, \mathsf{aux}, \mathsf{pk}, P)$. It updates the auxiliary data by setting $\mathsf{aux} \gets \mathsf{aux}'$, and replies to $\mathcal{A}$ with $(\mathsf{ctr}_{\mathsf{reg}}, \mathsf{mpk}_{\mathsf{ctr}_{\mathsf{reg}}}, \mathsf{aux})$.
        
        \item[$-$] \textbf{Register target key query:} In the register target key query, $\mathcal{A}$ specifies a policy $P^\ast \in \mathcal{P}$. If a target key has already been registered, the challenger replies with $\bot$. Otherwise, it increments the counter $\mathsf{ctr}_{\mathsf{reg}} \gets \mathsf{ctr}_{\mathsf{reg}} + 1$, samples $(\mathsf{pk}^\ast, \mathsf{sk}^\ast) \gets \mathsf{KeyGen}(\mathsf{crs}, \mathsf{aux}, P^\ast)$, and registers the key by computing $(\mathsf{mpk}_{\mathsf{ctr}_{\mathsf{reg}}}, \mathsf{aux}') \gets \mathsf{RegPK}(\mathsf{crs}, \mathsf{aux}, \mathsf{pk}^\ast, P^\ast)$. Then it computes the helper decryption key $\mathsf{hsk}^\ast \gets \mathsf{Update}(\mathsf{crs}, \mathsf{aux}, \mathsf{pk}^\ast)$, updates $\mathsf{aux} \gets \mathsf{aux}'$, and records the index of the target key as $\mathsf{ctr}^\ast_{\mathsf{reg}} \gets \mathsf{ctr}_{\mathsf{reg}}$. At the end, it responds to $\mathcal{A}$ with $(\mathsf{ctr}_{\mathsf{reg}}, \mathsf{mpk}_{\mathsf{ctr}_{\mathsf{reg}}}, \mathsf{aux}, \mathsf{pk}^\ast, \mathsf{hsk}^\ast, \mathsf{sk}^\ast)$.

        \item[$-$] \textbf{Encryption query:} In an encryption query, $\mathcal{A}$ provides an index $\mathsf{ctr}^\ast_{\mathsf{reg}} \leq i \leq \mathsf{ctr}_{\mathsf{reg}}$, a message $\mu_{\mathsf{ctr}_{\mathsf{enc}}} \in \mathcal{M}$, and  {an attribute $X_{\mathsf{ctr}_{\mathsf{enc}}} \in \mathcal{U_{\tau}}$}. If no target key has been registered or if $P(X_{\mathsf{ctr}_{\mathsf{enc}}})=0$, the challenger replies with $\bot$. Otherwise, it increments $\mathsf{ctr}_{\mathsf{enc}} \gets \mathsf{ctr}_{\mathsf{enc}} + 1$ and computes $\mathsf{ct}_{\mathsf{ctr}_{\mathsf{enc}}} \gets \mathsf{Encrypt}(\mathsf{mpk}_i, X_{\mathsf{ctr}_{\mathsf{enc}}}, \mu_{\mathsf{ctr}_{\mathsf{enc}}})$. The challenger then replies with $(\mathsf{ctr}_{\mathsf{enc}}, \mathsf{ct}_{\mathsf{ctr}_{\mathsf{enc}}})$.

        \item[$-$] \textbf{{Decryption query:}}  In a decryption query, the adversary submits a ciphertext index $1 \leq j \leq \mathsf{ctr}_{\mathsf{enc}}$. The challenger computes $m'_j \gets$ $\mathsf{Decrypt}(\mathsf{sk}^\ast,$ $\mathsf{hsk}^\ast,$ ${X}_j,$ $\mathsf{ct}_j)$. If $m'_j = \mathsf{GetUpdate}$, it updates $\mathsf{hsk}^\ast \gets \mathsf{Update}(\mathsf{crs}, \mathsf{aux}, \mathsf{pk}^\ast)$ and recomputes $m'_j \gets \mathsf{Decrypt}(\mathsf{sk}^\ast, \mathsf{hsk}^\ast, {X}_j, \mathsf{ct}_j)$. If $m'_j \neq \mu_j$, the experiment halts with output $b=0$ (Decryption Game).

        \item[$-$] \textbf{Verification query:} In a verification query, the adversary submits a ciphertext index $j$. The challenger computes $\mathsf{cert}_j \gets \mathsf{Delete}(\mathsf{ct}_j)$ and $v_j \gets \mathsf{Verify}(\mathsf{vk}_j, \mathsf{cert}_j)$. If $v_j \neq 1$, the experiment halts with output $b=0$ (Verification Game).
    \end{description}
    If the adversary finishes making queries without halting the Decryption Game, the experiment outputs $b=1$. Similarly, if the Verification Game does not halt, the output is $b=1$.
\end{description}

\noindent We say that the scheme $\prod^{\mathsf{RABE}}_{\textsf{CD}}$ is correct and efficient if for all (possibly unbounded) $\mathcal{A}$ making at most polynomially many queries, the following hold:
\begin{description}\setlength\itemsep{0.5em}
    \item[$\bullet$] \textbf{Decryption correctness:} There exist a negligible function $\mathsf{negl}$ in $\lambda$ such that $\Pr[b = 0] = \mathsf{negl}(\lambda)$ in the Decryption Game.

    \item[$\bullet$] \textbf{Verification correctness:} There exists a negligible function $\mathsf{negl}$ in $\lambda$ such that $\Pr[b = 0] = \mathsf{negl}(\lambda)$ in the Verification Game.

    \item[$\bullet$] \textbf{Compactness:} Let $N$ be the number of registration queries. There exists a polynomial $\mathsf{poly}$ such that for all $i\in[N]$,
    \[
        |\mathsf{mpk}_i| = \mathsf{poly}(\lambda, |\mathcal{U}|, \log i), \quad
        |\mathsf{hsk}^\ast| = \mathsf{poly}(\lambda, |\mathcal{U}|, \log N).
    \]

    \item[$\bullet$] \textbf{Update efficiency:} Let $N$ denote the number of registration queries issued by $\mathcal{A}$ in the above game. During the game, the challenger invokes $\mathsf{Update}$ at most $\mathcal{O}(\log N)$ times, and each invocation runs in $\mathsf{poly}(\log N)$ time in the \textsf{RAM} model. In particular, we model $\mathsf{Update}$ as a \textsf{RAM} program with random access to its input, then its running time may be smaller than the input length.
\end{description}

\subsection{Certified Deletion Security of \textsf{RABE-CD}}\label{Security-def:RABE-CD}
Let $\prod^{\mathsf{RABE}}_{\textsf{CD}}$ denote a registered attribute-based encryption with certified deletion protocol defined over an attribute universe $\mathcal{U}$, a policy space $\mathcal{P}$, and a message space $\mathcal{M}$. For a security parameter $\lambda$, a bit $b \in \{0, 1\}$, and an adversary $\mathcal{A}$, we define the following security game $\mathsf{Exp}^{\textsf{RABE-CD}}_{\mathcal{A}}(\lambda, b)$ between $\mathcal{A}$ and a challenger:

\begin{description}\setlength\itemsep{0.5em}
    \item[$\bullet$] \textbf{Setup phase:} The challenger samples the common reference string $\mathsf{crs} \leftarrow \mathsf{Setup}(1^{\lambda}, 1^{{\tau}})$. It then initializes the internal state as follows: auxiliary input $\mathsf{aux} \leftarrow \perp$, master public key $\mathsf{mpk} \leftarrow \perp$, and counter $\mathsf{ctr} \leftarrow 0$ to track the number of honest key registration queries. It also initializes empty sets $C \leftarrow \emptyset, H \leftarrow \emptyset$ to record corrupted and honest public keys, respectively, and an empty dictionary $D \leftarrow \emptyset$ to map public keys to their associated attribute sets. For notational convenience, if $\mathsf{pk} \notin D$, we define $D[\mathsf{pk}] :=\emptyset$. Finally, the challenger sends $\mathsf{crs}$ to the adversary $\mathcal{A}$.

    \item[$\bullet$] \textbf{Query phase:} The adversary $\mathcal{A}$ can issue the following queries: \vspace{0.2cm}

    \begin{description}\setlength\itemsep{0.5em}
        \item[$-$] \textbf{Register corrupted key query:} $\mathcal{A}$ submits a public key $\mathsf{pk}$ along with a policy $P \in \mathcal{P}$. The challenger then executes
        \begin{align*}
            &(\mathsf{mpk}', \mathsf{aux}') \leftarrow\mathsf{RegPK}(\mathsf{crs},\mathsf{aux},\mathsf{pk},P)
        \end{align*}
        and updates the following
        \begin{align*}
            &\mathsf{mpk} \leftarrow \mathsf{mpk}',~C \leftarrow C \cup \{\mathsf{pk}\}\\
            &\mathsf{aux} \leftarrow \mathsf{aux}',~D[\mathsf{pk}] \leftarrow D[\mathsf{pk}] \cup \{P\}.
        \end{align*}
        Finally, the challenger replies to $\mathcal{A}$ with $(\mathsf{mpk}', \mathsf{aux}')$.

        \item[$-$] \textbf{Register honest key query:} In an honest-key-registration query, the adversary specifies a policy $P \in \mathcal{P}$. The challenger increments the counter $\mathsf{ctr} \leftarrow \mathsf{ctr} + 1$, samples a key pair $(\mathsf{pk}_{\mathsf{ctr}}, \mathsf{sk}_{\mathsf{ctr}}) \leftarrow \mathsf{KeyGen}(\mathsf{crs}, \mathsf{aux}, P)$, and registers it by computing $(\mathsf{mpk}', \mathsf{aux}')$ $ \leftarrow$ $\mathsf{RegPK}(\mathsf{crs},$ $ \mathsf{aux},$ $\mathsf{pk}_{\mathsf{ctr}}, P)$. The challenger then updates its state as follows:
        \begin{align*}
            &\mathsf{mpk} \leftarrow \mathsf{mpk}',~H \leftarrow H \cup \{\mathsf{pk}_{\mathsf{ctr}}\},\\
            &\mathsf{aux} \leftarrow \mathsf{aux}',~D[\mathsf{pk}_{\mathsf{ctr}}] \leftarrow D[\mathsf{pk}_{\mathsf{ctr}}] \cup \{P\}.
        \end{align*}
        Finally, it replies to the adversary with $(\mathsf{ctr}, \mathsf{mpk}', \mathsf{aux}', \mathsf{pk}_{\mathsf{ctr}})$.

        \item[$-$] \textbf{Corrupt honest key query:} $\mathcal{A}$ specifies an index $1 \leq i \leq \mathsf{ctr}$. Let $(\mathsf{pk}_i, \mathsf{sk}_i)$ be the key pair generated during the $i$-th honest key query. The challenger updates its state by removing $\mathsf{pk}_i$ from $H$ and adding it to $C$:
        \begin{align*}
            &H \gets H \setminus \{\mathsf{pk}_i\}, ~C \gets C \cup \{\mathsf{pk}_i\}.
        \end{align*}
        It then returns $\mathsf{sk}_i$ to $\mathcal{A}$.
    \end{description}
    
    \item[$\bullet$] \textbf{Challenge phase:} In the challenge phase, $\mathcal{A}$ chooses two messages $\mu_0^*, \mu_1^* \in \mathcal{M}$ together with an {attribute $X^* \in \mathcal{U_{\tau}}$ } . Then it sends the tuple ($\mu_0^*, \mu_1^*, X^\ast$) to the challenger. The challenger first verifies that 
    \[
        P(X^\ast) = 0 \quad \text{for all } P \in \{D[\mathsf{pk}] : \mathsf{pk} \in C\}.
    \]
    Subsequently, the challenger selects a random bit $b\in\{0,1\}$ and executes the encryption algorithm
    \[
        (\mathsf{vk}^*, \mathsf{ct}_b^*) \gets \mathsf{Encrypt}(\mathsf{mpk}, X^*, \mu_b^*),
    \]
    and returns $\mathsf{ct}_b^*$ to $\mathcal{A}$. 
    
    \item[$\bullet$] \textbf{Deletion phase:} At some point, the adversary $\mathcal{A}$ produces a deletion certificate $\mathsf{cert}$ and submits it to the challenger. The challenger verifies its validity by computing
    \[
        \mathsf{valid} \gets \mathsf{Verify}(\mathsf{vk}^*, \mathsf{cert}).
    \]
    If $\mathsf{valid} = 0$, the challenger aborts the experiment and outputs $\perp$.
    
    If $\mathsf{valid} = 1$, the challenger discloses the following information:
    \begin{description}
        \item[$-$] All honest secret keys $\{\mathsf{sk}_i : \mathsf{pk}_i \in H\},$
        \item[$-$] All helper keys generated via the update algorithm $$\{\mathsf{hsk}_i :\mathsf{hsk}_i \gets \mathsf{Update}(\mathsf{crs},\mathsf{aux},\mathsf{pk}_i),~ \mathsf{pk}_i \in H\}.$$
    \end{description}
    \item[$\bullet$] \textbf{Output phase:} The adversary outputs a bit $b' \in \{0,1\}$.
\end{description}

\noindent We say that $\prod^{\mathsf{RABE}}_{\textsf{CD}}$ is secure if for all efficient and admissible \textsf{QPT} adversaries $\mathcal{A}$, there exists a negligible function $\mathsf{negl}$ such that for all $\lambda \in \mathbb{N}$
\[
\left| \Pr\left[\mathsf{Exp}^{\mathsf{RABE-CD}}_{ \mathcal{A}}(\lambda, 0) = 1\right] - \Pr\left[\mathsf{Exp}^{\mathsf{RABE-CD}}_{\mathcal{A}}(\lambda, 1) = 1\right] \right| \leq \mathsf{negl}(\lambda).
\]
\vspace{0.2cm}

\noindent Depending on the nature of the verification key, the \textsf{RABE-CD} framework can be classified into two distinct variants: registered attribute-based encryption with privately verifiable certified deletion (\textsf{RABE-PriVCD}) and registered attribute-based encryption with publicly verifiable certified deletion (\textsf{RABE-PubVCD}). \vspace{0.2cm}

\noindent\textbf{Privately Verifiable Certified Deletion.} The syntax, correctness, and security requirements of \textsf{RABE-PriVCD} follow directly from the definitions of \textsf{RABE-CD} provided in Section \ref{def:RABE-CD}. It is important to note that the verification of a deletion certificate is restricted to the sender, who is the sole possessor of the verification key. The security experiment for \textsf{RABE-PriVCD}, denoted by $\mathsf{Exp}^{\textsf{RABE-PriVCD}}{\mathcal{A}}(\lambda, b)$, is identical to the experiment $\mathsf{Exp}^{\textsf{RABE-CD}}{\mathcal{A}}(\lambda, b)$. \vspace{0.2cm}

\noindent\textbf{Publicly Verifiable Certified Deletion.} The syntax and correctness of \textsf{RABE-PubVCD} follow the general framework of \textsf{RABE-CD}, described in Section~\ref{def:RABE-CD}, with the only modification in the encryption algorithm $\mathsf{Encrypt}$. In this setting, encryption is realized through an interactive protocol between a sender $\mathsf{Sndr}$ and a receiver $\mathsf{Rcvr}$. On input $(\mathsf{mpk}, X, \mu)$ to $\mathsf{Sndr}$, where $\mathsf{mpk}$ denotes the master public key, $X \subseteq \mathcal{U}$ stands an attribute set and $\mu$ the message, and no input to $\mathsf{Rcvr}$, the protocol outputs a ciphertext and an associated verification key as follows:
\[
(\mathsf{vk}, \mathsf{ct}) \leftarrow \mathsf{Encrypt}\bigl(\mathsf{Sndr}(\mathsf{mpk}, X, \mu),\, \mathsf{Rcvr}\bigr).
\]  

\noindent It is important to note that in the publicly verifiable model, anyone can check whether a deletion certificate in the verification algorithm is valid or not. \vspace{0.2cm}

\noindent The security of \textsf{RABE-PubVCD} is evaluated using the same security game as $\mathsf{Exp}^{\textsf{RABE-CD}}_{\mathcal{A}}(\lambda, b)$. The defining characteristic of the \textsf{RABE-PubVCD} security model is its resilience against an adversary who possesses the verification key \textsf{vk}. In other words, the confidentiality of the system remains intact even if \textsf{vk} is made publicly available. In this context, the security game for \textsf{RABE-PubVCD} is referred to as $\mathsf{Exp}^{\textsf{RABE-PubVCD}}_{\mathcal{A}}(\lambda, b)$. \vspace{0.2cm}


\noindent {In the subsequent subsection, we consider the realization of ciphertext deletion functionalities within the \textsf{RABE} framework under certified everlasting security. This notion extends certified deletion by providing information-theoretic security guarantees after deletion, however, without revealing honest secret keys in the post-deletion phase.}.

\subsection{System Model for \textsf{RABE-CED}}\label{def:syntax-rabe-ce-security} 
The syntax and correctness conditions of \textsf{RABE} with Certified Everlasting Deletion (\textsf{RABE-CED}) follow identically from those of \textsf{RABE-CD}, as defined in Sections~\ref{def:RABE-CD} and~\ref{def:rabe-cd-correctness}. The only substantive difference lies in the notion of security. In \textsf{RABE-CED}, the adversary is assumed to be polynomial-time during protocol execution but may become computationally unbounded afterwards. Even with such post-execution power, any message whose deletion has been certified remains secure. In the following section, we now formalize the certified everlasting deletion security in the \textsf{RABE} framework. 

\subsection{Certified Everlasting Security for \textsf{RABE-CED}}\label{def:rabe-ce-security} 

Let $\prod^{\mathsf{RABE}}_{\textsf{CED}}$ denote a registered attribute-based encryption scheme with certified everlasting deletion framework, defined over an attribute universe $\mathcal{U}$, a policy space $\mathcal{P}$, and a message space $\mathcal{M}$. Note that the syntax is same as $\prod^{\mathsf{RABE}}_{\textsf{CD}}$. For a security parameter $\lambda$, a challenge bit $b \in \{0,1\}$, and an adversary $\mathcal{A}$, we formalize the certified everlasting security experiment $\mathsf{Exp}^{\textsf{RABE-CED}}_{\mathcal{A}}(\lambda, b)$, played between $\mathcal{A}$ and a challenger, as follows.\vspace{0.2cm}

\begin{description}\setlength\itemsep{0.5em}
    \item[$\bullet$] \textbf{\textsf{Setup:}} The challenger samples $\mathsf{crs} \leftarrow \mathsf{Setup}(1^{\lambda}, 1^{{\tau}})$, and initializes the internal state as follows:
    \begin{align*}
        &\mathsf{aux} \leftarrow \perp,~C \leftarrow \emptyset,~ D \leftarrow \emptyset,\\
        &\mathsf{mpk} \leftarrow \perp,~\mathsf{ctr} \leftarrow 0, ~H \leftarrow \emptyset,
    \end{align*}
    where $C$ maintains the set of corrupted public keys, $H$ maintains the set of honest public keys, and $D$ records the attribute sets registered to each public key. Finally, the challenger provides $\mathsf{crs}$ to the adversary $\mathcal{A}$.

    \item[$\bullet$] \textbf{\textsf{Query phase:}} The adversary $\mathcal{A}$ can issue the following queries:\vspace{0.2cm}
      
    \begin{itemize}\setlength\itemsep{0.5em}
        \item[$-$] \textbf{Register Corrupted Key:} Upon receiving a submission $(\mathsf{pk}, P \in \mathcal{P})$ from $\mathcal{A}$, the challenger computes
        \begin{align*}
            &(\mathsf{mpk}', \mathsf{aux}') \leftarrow \mathsf{RegPK}(\mathsf{crs}, \mathsf{aux}, \mathsf{pk}, P),
        \end{align*}
        updates its internal state as follows:
        \begin{align*}
            &\mathsf{mpk} \leftarrow \mathsf{mpk}', ~ D[\mathsf{pk}] \leftarrow D[\mathsf{pk}] \cup \{P\}\\
            &C \leftarrow C \cup \{\mathsf{pk}\}, ~\mathsf{aux} \leftarrow \mathsf{aux}',
        \end{align*}
        and returns $(\mathsf{mpk}', \mathsf{aux}')$ to $\mathcal{A}$.
        
        \item[$-$] \textbf{Register Honest Key:} When $\mathcal{A}$ specifies a policy $P \in \mathcal{P}$, the challenger increments the counter $\mathsf{ctr} \leftarrow \mathsf{ctr} + 1$, generates a fresh key pair
        \begin{align*}
            &(\mathsf{pk}_{\mathsf{ctr}}, \mathsf{sk}_{\mathsf{ctr}}) \leftarrow \mathsf{KeyGen}(\mathsf{crs}, \mathsf{aux}, P)
        \end{align*}
        and executes the registration procedure
        \begin{align*}
            &(\mathsf{mpk}', \mathsf{aux}') \leftarrow \mathsf{RegPK}(\mathsf{crs}, \mathsf{aux}, \mathsf{pk}_{\mathsf{ctr}}, P).
        \end{align*}
        The challenger then updates its states:
        \begin{align*}
            &\mathsf{mpk} \leftarrow \mathsf{mpk}',~D[\mathsf{pk}_{\mathsf{ctr}}] \leftarrow D[\mathsf{pk}_{\mathsf{ctr}}] \cup \{P\},\\
            &\mathsf{aux} \leftarrow \mathsf{aux}',~H \leftarrow H \cup \{\mathsf{pk}_{\mathsf{ctr}}\},
        \end{align*}
        and responds with $(\mathsf{ctr}, \mathsf{mpk}', \mathsf{aux}', \mathsf{pk}_{\mathsf{ctr}})$.
        
        \item[$-$] \textbf{Corrupt Honest Key:} On input $i \in [1,\mathsf{ctr}]$, the challenger marks $\mathsf{pk}_i$ as corrupted:
        \begin{align*}
            &C \leftarrow C \cup \{\mathsf{pk}_i\},~H \leftarrow H \setminus \{\mathsf{pk}_i\},
        \end{align*}
        and outputs the corresponding secret key $\mathsf{sk}_i$.
    \end{itemize}

    \item[$\bullet$] \textbf{\textsf{Challenge Phase:}} The adversary $\mathcal{A}$ submits a challenge  an {attribute $X^* \in \mathcal{U_{\tau}}$ } . The challenger verifies that $P(X^*) = 0$ for all $P\in S$, where $S=$ $\{P \in D[\mathsf{pk}]: \mathsf{pk} \in C\}$ and computes
    \begin{align*}
        &(\mathsf{vk}^*, \mathsf{ct_b}^*) \leftarrow \mathsf{Encrypt}(\mathsf{mpk}, X^*, b),
    \end{align*}
    and returns $\mathsf{ct_b}^*$ to the adversary.     

    \item[$\bullet$] \textbf{\textsf{Output Phase:}} The adversary $\mathcal{A}$ produces a deletion certificate $\mathsf{cert}$ together with a residual state $\rho$. The challenger verifies the certificate by computing
    \begin{align*}
        &\mathsf{valid} \leftarrow \mathsf{Verify}(\mathsf{vk}^*, \mathsf{cert})
    \end{align*}
    If $\mathsf{valid} = 0$, the experiment terminates with output $\perp$. Otherwise, if $\mathsf{valid} = 1$, the experiment concludes with output $\rho$.    
\end{description}

\noindent We say that $\prod^{\mathsf{RABE}}_{\textsf{CED}}$ is secure if for all \textsf{QPT} adversaries $\mathcal{A}$, there exists a negligible function $\mathsf{negl}(\cdot)$ such that
\begin{align*}
    &\mathsf{TD}\left(\mathsf{Exp}^{\textsf{RABE-CED}}_{\mathcal{A}}(\lambda, 0), \mathsf{Exp}^{\textsf{RABE-CED}}_{\mathcal{A}}(\lambda, 1)\right) \leq \mathsf{negl}(\lambda),
\end{align*}
where $\mathsf{TD}(\cdot,\cdot)$ denotes the trace distance.
\vspace{0.2cm}

\noindent The experiment considers the challenge messages restricted to a single bit. The definition naturally extends to multi-bit messages by encrypting each bit independently and concatenating the resulting ciphertexts, where indistinguishability is preserved through a standard hybrid argument. \vspace{0.2cm}

\noindent In line with the \textsf{RABE-CD} framework, the \textsf{RABE-CED} model also encompasses two variants, distinguished by the nature of the verification key: registered attribute-based encryption with privately verifiable certified everlasting deletion (\textsf{RABE-PriVCED}) and registered attribute-based encryption with publicly verifiable certified everlasting deletion (\textsf{RABE-PubVCED}). \vspace{0.2cm}

\noindent\textbf{Privately Verifiable Certified Everlasting Deletion.} The syntax and correctness of \textsf{RABE-PriVCED} can be directly derived from the general framework of \textsf{RABE-CD}, as specified in Section \ref{def:RABE-CD}. In addition, the security model for \textsf{RABE-PriVCED} is also identical to the experiment $\mathsf{Exp}^{\textsf{RABE-CED}}_{\mathcal{A}}(\lambda, b)$. \vspace{0.2cm}

\noindent\textbf{Publicly Verifiable Certified Everlasting Deletion.} The syntax and correctness of \textsf{RABE-PubVCED} also follow directly from the general framework of \textsf{RABE-CD}, as introduced in Section \ref{def:RABE-CD}. Its security model coincides with the experiment $\mathsf{Exp}^{\textsf{RABE-CED}}_{\mathcal{A}}(\lambda, b)$, except that during the challenge phase, the adversary additionally receives $(\mathsf{vk}^*, \mathsf{ct}_b^*)$ to enable public verifiability. \vspace{0.2cm}

\noindent To realize the \textsf{RABE-CD} scheme under private and public verification settings, we need to introduce a fundamental cryptographic primitive termed Shadow Registered Attribute-Based Encryption (\textsf{Shad-RABE}). This primitive serves as the core building block that underpins our certified deletion constructions. In the subsequent section, we formalize the syntax and security of \textsf{Shad-RABE} and present its construction. \vspace{0.2cm}

%% file: 4_Shadow-RABE.tex
\section{Shadow Registered Attribute-Based Encryption}\label{section:shad-RABE}
In this section, we present the construction of our generic (key-policy) \textsf{Shad-RABE} protocol by composing registered attribute-based encryption, zero-knowledge argument and indistinguishability obfuscation in a modular manner. We formally analyze its security to establish that the resulting construction satisfies the desired privacy and correctness properties. Furthermore, we present a lattice-based instantiation of \textsf{Shad-RABE}, ensuring post-quantum security of the framework. \vspace{0.2cm}

\subsection{System Model of \textsf{Shad-RABE}}\label{def:shadow-rabe}

A (key-policy) shadow registered attribute-based encryption (\textsf{Shad-RABE}) scheme with attribute universe $\mathcal{U}$, policy space $\mathcal{P}$, and message space $\mathcal{M}$ is a tuple of efficient algorithms $\prod_{\mathsf{Shad}\text{-}\mathsf{RABE}} =$ $ (\mathsf{Setup},$ $\mathsf{KeyGen},$ $\mathsf{RegPK},$ $\mathsf{Encrypt},$ $\mathsf{Update}, \mathsf{Decrypt},$ $\mathsf{SimKeyGen},$ $\mathsf{SimRegPK},$ $\mathsf{SimCT},$ $\mathsf{SimCorrupt},$ $\mathsf{Reveal})$ with the following syntax:

\begin{description}\setlength\itemsep{0.5em}
\item$\bm{-}$~{(\textsf{crs})} $\bm{\leftarrow}$ ${\textsf{Setup}}{(1^{\lambda}, 1^{{\tau}}).}$ On input of the length of the security parameter $\lambda$ and a {policy-family parameter $\tau$} , the setup algorithm outputs a common reference string $\mathsf{crs}$.
\item$-$~{(\textsf{pk}, \textsf{sk})} $\leftarrow$ ${\textsf{KeyGen}(\textsf{crs}, \textsf{aux}, P).}$ Given the common reference string \textsf{crs}, auxiliary state $\mathsf{aux}$, and a policy $P\in \mathcal{P},$ this algorithm produces a public/secret key pair $(\mathsf{pk}, \mathsf{sk})$.
\item$\bm{-}$~$(\mathsf{mpk}, \mathsf{aux}') \leftarrow \textsf{RegPK}(\mathsf{crs}, \mathsf{aux}, \mathsf{pk}, P).$ The registration algorithm takes as input a common reference string \textsf{crs}, an auxiliary state \textsf{aux}, a public key $\mathsf{pk},$ and a policy $P\in \mathcal{P}$ to return a master public key $\mathsf{mpk}$ and an updated auxiliary state $\mathsf{aux}'$.
\item$-$ $(\mathsf{ct})\leftarrow \textsf{Encrypt}(\mathsf{mpk}, X, \mu).$ The sender encrypts a message $\mu \in \mathcal{M}$ under the master public key $\mathsf{mpk}$ and { attribute $X \in \mathcal{U_{\tau}}$}, producing a ciphertext $\mathsf{ct}$.
\item$-$ $(\mathsf{hsk})\leftarrow \textsf{Update}(\mathsf{crs}, \mathsf{aux}, \mathsf{pk}).$ This algorithm outputs a helper secret key $\mathsf{hsk}$ given the common reference string $\mathsf{crs}$, an auxiliary state $\mathsf{aux}$, and a registered public key $\mathsf{pk}$ as inputs.
\item$-$ $(\mu \cup \{\perp, \mathsf{GetUpdate}\})\leftarrow \textsf{Decrypt}(\mathsf{ct}, \mathsf{sk}, \mathsf{hsk},{X}).$ The decryptor executes this algorithm using the ciphertext $\mathsf{ct}$, the secret key $\mathsf{sk}$, { $X \in \mathcal{U_{\tau}}$} and the helper secret key $\mathsf{hsk}$ as input. The algorithm outputs the message $\mu$, the failure symbol $\perp$, or the instruction $\mathsf{GetUpdate}$ indicating that an updated helper secret key is required.
\item$-$ $({\mathsf{pk}}, \mathsf{B'})\leftarrow \textsf{SimKeyGen}(\mathsf{crs}, \mathsf{aux}, P, \mathsf{B}).$ Given the common reference string $\mathsf{crs}$, an auxiliary state $\mathsf{aux}$, a policy $P\in\mathcal{P}$, and an ancillary dictionary $\mathsf{B}$ (possibly empty) indexed by public keys, the simulated key generation algorithm outputs a public key {${\mathsf{pk}}$ }together with an updated ancillary dictionary $\mathsf{B'}$.
\item$-$ $(\widetilde{\mathsf{mpk}}, \widetilde{\mathsf{aux}})\leftarrow \textsf{SimRegPK}(\mathsf{crs}, \mathsf{aux}, \mathsf{pk}, P, \textsf{B}).$ The simulated key registration algorithm, given the common reference string $\mathsf{crs}$, an auxiliary state $\mathsf{aux}$, a public key $\mathsf{pk},$ and a policy $P$, produces a simulated master public key $\widetilde{\mathsf{mpk}}$ and an updated auxiliary state $\widetilde{\mathsf{aux}}$.
\item$-$ $(\widetilde{\mathsf{sk}})\leftarrow \textsf{SimCorrupt}({\mathsf{crs}}, \mathsf{pk}, \mathsf{B}).$ The simulated corruption algorithm takes as input the common reference string ${\mathsf{crs}}$, the public key $\mathsf{pk}$ and an ancillary dictionary $\mathsf{B}$ to return a simulated secret key $\widetilde{\mathsf{sk}}$ corresponding to $\mathsf{pk}$.
\item$-$ $(\widetilde{\mathsf{ct}})\leftarrow \textsf{SimCT}(\widetilde{\mathsf{mpk}},  \mathsf{B}, X).$ The simulated ciphertext generation algorithm simulates the entire encryption algorithm under the {attribute  $X \in \mathcal{U_{\tau}}$ } using the simulated master public key $\widetilde{\mathsf{mpk}}$, and {an} ancillary dictionary $\mathsf{B}$, and outputs a simulated ciphertext $\widetilde{\mathsf{ct}}$.
\item$-$ $(\widetilde{\mathsf{sk}})\leftarrow \textsf{Reveal}(\mathsf{pk}, \mathsf{B}, \widetilde{\mathsf{ct}}, \mu).$ Given a public key $\mathsf{pk},$ a dictionary $\mathsf{B}$, a simulated ciphertext $\widetilde{\mathsf{ct}}$, and a target message $\mu$, this algorithm generates a simulated secret key $\widetilde{\mathsf{sk}}$.
\end{description}

\noindent\textbf{Correctness and Efficiency of \textsf{Shad-RABE}.}\label{def:shadow-rabe-properties} The \textsf{Shad-RABE} protocol preserves the same correctness and efficiency guarantees as the underlying registered \textsf{ABE} scheme. 
\vspace{0.2cm}

\noindent\textbf{Security of \textsf{Shad-RABE}.}\label{def:shadow-rabe-security} Let a (key-policy) shadow registered attribute-based encryption scheme $\prod_{\textsf{Shad}\text{-}\textsf{RABE}}$ be defined over an attribute universe $\mathcal{U}$, policy space $\mathcal{P}$ and message space $\mathcal{M}$. For a security parameter $\lambda$, a bit $b \in \{0, 1\}$, and an adversary $\mathcal{A}$, we define the following security game $\mathsf{Exp}^{\textsf{Shad-RABE}}_{\mathcal{A}}(\lambda, b)$ between $\mathcal{A}$ and a challenger:

\begin{description}\setlength\itemsep{0.5em}
   \item[$\bullet$ \textsf{Setup.}] The challenger begins by generating a common reference string $\mathsf{crs} \leftarrow \mathsf{Setup}(1^{\lambda}, 1^{|\mathcal{U}|})$. It initializes its internal variables as follows: the auxiliary state $\mathsf{aux} \leftarrow \perp$, the master public key $\mathsf{mpk} \leftarrow \perp$, and a counter $\mathsf{ctr} \leftarrow 0$ to track the number of honestly registered keys. Additionally, two empty sets $C \leftarrow \emptyset$ and $H \leftarrow \emptyset$ are initialized to store corrupted and honest public keys, respectively. The challenger also initializes empty dictionaries, $D \leftarrow \emptyset$ and $\mathsf{B}$$ \leftarrow \emptyset$, indexed by the public keys. For ease of notation, it is assumed that if $\mathsf{pk} \notin D$, then $D[\mathsf{pk}] := \emptyset$ and same for $\mathsf{B}$. Finally, the challenger sends the common reference string $\mathsf{crs}$ to the adversary $\mathcal{A}$.
    \item[$\bullet$ \textsf{Query Phase:}] During this phase, the adversary $\mathcal{A}$ is allowed to make the following types of queries:\vspace{0.2cm}
    
    \begin{itemize}
        \item[$-$] \textbf{Register corrupted key query.} Upon receiving a public key $\mathsf{pk}$ and a policy $P \in\mathcal{P}$ from $\mathcal{A}$, the challenger proceeds as follows:   \[
        (\mathsf{mpk}', \mathsf{aux}') \leftarrow 
        \begin{cases}
            \mathsf{RegPK}(\mathsf{crs}, \mathsf{aux}, \mathsf{pk}, P), & \text{if } b = 0 \\
            \mathsf{SimRegPK}(\mathsf{crs}, \mathsf{aux}, \mathsf{pk}, P, \textsf{B}), & \text{if } b = 1
        \end{cases}
        \]
        It then updates its internal state as:
        \begin{align*}
            &\mathsf{mpk} \leftarrow \mathsf{mpk}', \mathsf{aux} \leftarrow \mathsf{aux}', C \leftarrow C \cup \{\mathsf{pk}\}, D[\mathsf{pk}] \leftarrow D[\mathsf{pk}] \cup \{P\}.
        \end{align*}
        The challenger returns $(\mathsf{mpk}', \mathsf{aux}')$ to $\mathcal{A}$.
        \item[$-$] \textbf{Register honest key query:} When $\mathcal{A}$ specifying a policy $P \in \mathcal{P}$, the challenger increments its counter $\mathsf{ctr} \leftarrow \mathsf{ctr} + 1$, and performs the following based on the bit $b$:
        \[
        \left.
        \begin{aligned}
            ~~~(\mathsf{pk}_{\mathsf{ctr}}, \mathsf{sk}_{\mathsf{ctr}}) &\leftarrow \mathsf{KeyGen}(\mathsf{crs}, \mathsf{aux},P) \\
            ~~~(\mathsf{mpk}', \mathsf{aux}') &\leftarrow \mathsf{RegPK}(\mathsf{crs}, \mathsf{aux}, \mathsf{pk}_{\mathsf{ctr}}, P)
        \end{aligned}
        \right\}~~~\text{if}~~ b = 0 
        \]
        \[
        \left.
        \begin{aligned}
            (\mathsf{pk}_{\mathsf{ctr}}, \mathsf{B'}) &\leftarrow \mathsf{SimKeyGen}(\mathsf{crs}, \mathsf{aux},\mathsf{B}) \\
            (\mathsf{mpk}', \mathsf{aux}') &\leftarrow \mathsf{SimRegPK}(\mathsf{crs}, \mathsf{aux}, \mathsf{pk}_{\mathsf{ctr}}, P,\mathsf{B})
        \end{aligned}
        \right\}~~~\text{if}~~ b = 1. 
        \]

        It then updates:
        \begin{align*}
        &\mathsf{mpk} \leftarrow \mathsf{mpk}', \mathsf{aux} \leftarrow \mathsf{aux}', H \leftarrow H \cup \{\mathsf{pk}_{\mathsf{ctr}}\}, D[\mathsf{pk}_{\mathsf{ctr}}] \leftarrow D[\mathsf{pk}_{\mathsf{ctr}}] \cup \{P\}, {\mathsf{B}\leftarrow\mathsf{B'}}.
        \end{align*}
        It then replies with $(\mathsf{ctr}, \mathsf{mpk}', \mathsf{aux}', \mathsf{pk}_{\mathsf{ctr}})$.
        \item[$-$] \textbf{Corrupt honest key query:} The adversary may request the corruption of an honest key by specifying an index $i$ such that $1 \leq i \leq \mathsf{ctr}$. The challenger responds based on the bit $b$:\vspace{0.15cm}
        
        \noindent(a) If $b = 0$, it returns the actual secret key $\mathsf{sk}_i$.\vspace{0.15cm}
        
        \noindent(b) If $b = 1$, it computes a simulated secret key $\widetilde{\mathsf{sk}}_i$ $\leftarrow$ \textsf{SimCorrupt}($\mathsf{crs},$ $\mathsf{pk}_i,$ $\mathsf{B}$).\vspace{0.15cm}
        
        In either case, $\mathsf{pk}_i$ is removed from the honest set $H$ and added to the corrupted set $C$.
    \end{itemize}
\item[$\bullet$ \textsf{Challenge phase:}] The adversary $\mathcal{A}$ selects a message $\mu \in \mathcal{M}$ and an {attribute  $X^* \in \mathcal{U_{\tau}}$ } such that $P(X^\ast) = 0$ for all $P \in S,$ where $S=\{P\in D[\mathsf{pk}] \mid \mathsf{pk} \in C\}$. The challenger then prepares the challenge ciphertext as follows:
\begin{align*}
\mathsf{ct}^* &\leftarrow 
\begin{cases}
    \mathsf{Encrypt}(\mathsf{mpk}, X^*, \mu) & \text{if } b = 0, \\
    \mathsf{SimCT}(\mathsf{mpk}, \mathsf{aux}, \mathsf{B}, X^*) & \text{if } b = 1.
\end{cases}
\end{align*}

For every honest public key $\mathsf{pk}_i \in H$, the challenger generates the corresponding secret and helper keys as follows:
\begin{align*}
\mathsf{sk}_i^* &\leftarrow 
\begin{cases}
    \mathsf{sk}_i & \text{if } b = 0, \\
    \mathsf{Reveal}(\mathsf{pk}_i,\mathsf{B},  \mathsf{ct}^*, \mu) & \text{if } b = 1,
\end{cases}\\
\mathsf{hsk}_i &\leftarrow \mathsf{Update}(\mathsf{crs}, \mathsf{aux}, \mathsf{pk}_i).
\end{align*}
The challenger then returns to the adversary the tuple: ($\mathsf{ct}^*$, $\{\mathsf{sk}_i^*\}_{\mathsf{pk}_i \in H}$, $\{\mathsf{hsk}_i\}_{\mathsf{pk}_i \in H}$). 
\item [$\bullet$ \textsf{Output phase:}] The adversary finally outputs a bit $b' \in \{0, 1\}$.
\end{description}
\noindent A \textsf{Shad-RABE} scheme is said to be secure if for every efficient quantum polynomial-time (\textsf{QPT}) adversary $\mathcal{A}$, there exists a negligible function $\mathsf{negl}(\cdot)$ such that for all $\lambda \in \mathbb{N}$,
\[
\left| \Pr\left[\mathsf{Exp}^{\textsf{Shad-RABE}}_{\mathcal{A}}(\lambda, 0) = 1\right] - \Pr\left[\mathsf{Exp}^{\textsf{Shad-RABE}}_{\mathcal{A}}(\lambda, 1) = 1\right] \right| \leq \mathsf{negl}(\lambda).
\]
\subsection{Generic Construction of \textsf{Shad-RABE} Protocol}\label{generic-construct:Shadow-RABE}
In this section, we construct a shadow registered attribute-based encryption (\textsf{Shad-RABE}) scheme with plaintext space $\{0,1\}^{\ell_m}$, attribute universe $\mathcal{U}$, and policy space $\mathcal{P}$. Our construction builds upon the following cryptographic components: 
\begin{description}
    \item[$-$] A registered attribute-based encryption scheme with the tuple of algorithms $\prod_{\textsf{RABE}}=\textsf{RABE.}$($\mathsf{Setup},$ $\mathsf{KeyGen},$ $\mathsf{RegPK},$ $\mathsf{Encrypt},$ $\mathsf{Update},$ $\mathsf{Decrypt})$ defined over the same $\mathcal{U}$ and $\mathcal{P}$.
    \item[$-$] An interactive proof system given by the interactive machines $\langle \mathcal{P},\mathcal{V}\rangle$ for an \textsf{NP} language $\mathcal{L}$ characterized by the relation $R$ defined as:
    \begin{align*}
        R := &\left\{ \left((\{\mathsf{mpk}_{i,0}, \mathsf{mpk}_{i,1}\}_{i \in [\ell_m]}, \{\mathsf{CT}_{i,0}, \mathsf{CT}_{i,1}\}_{i \in [\ell_m]}, X), \{(\mu[i], r_{i,0}, r_{i,1})\}_{i \in [\ell_m]} \right) \right.\\
        &\left.\mid \forall i \in [\ell_m],\ \forall b \in \{0,1\},\ \mathsf{CT}_{i,b} = \mathsf{RABE}.\mathsf{Encrypt}(\mathsf{mpk}_{i,b}, X, \mu[i]; r_{i,b}) \right\}.
    \end{align*}
    Each $r_{i,b}$ is sampled uniformly from the randomness space of the $\mathsf{RABE}.\mathsf{Encrypt}$ algorithm. 
    \item[$-$] An indistinguishability obfuscator $i\mathcal{O}$ for polynomial-size circuits.
\end{description}
 
\noindent The algorithms of \textsf{Shad-RABE} are detailed below.

\begin{description}\setlength\itemsep{0.5em}
\item[]\textsf{Setup}($1^{\lambda}, 1^{{\tau}}$):
  \leavevmode
  \begin{itemize}
    \item[$-$] For every $i \in [\ell_m]$ and $b \in \{0,1\}$, generate $\mathsf{crs}_{i,b} \leftarrow \mathsf{RABE}.\mathsf{Setup}(1^{\lambda}, 1^{{\tau}})$.
    \item[$-$] Sample $y \leftarrow \{0,1\}^{*}$ uniformly at random.
    \item[$-$] Output $\mathsf{crs} := (\{\mathsf{crs}_{i,b}\}_{i \in [\ell_m],\ b \in \{0,1\}},y)$.
    
  \end{itemize}

\item[]\textsf{KeyGen}(\textsf{crs}, \textsf{aux}, $P$):
\leavevmode
\begin{itemize}
    \item[$-$] Parse $\mathsf{crs} = (\{\mathsf{crs}_{i,b}\}_{i,b},y)$.
    \item[$-$] Sample $z \leftarrow \{0,1\}^{\ell_m}$ uniformly at random.
    \item[$-$] For all $i \in [\ell_m]$, $b \in \{0,1\}$, generate $(\mathsf{pk}_{i,b}, \mathsf{sk}_{i,b}) \leftarrow \mathsf{RABE}.\mathsf{KeyGen}(\mathsf{crs}_{i,b},$ $\mathsf{aux}_{i,b},P)$.
    \item[$-$] Set $\mathsf{pk} := (\{\mathsf{pk}_{i,b}\}_{i,b})$.
    \item[$-$] Construct the secret key $\mathsf{sk} := iO(\mathcal{D}[\{\mathsf{sk}_{i,z[i]}\}_{i}, z])$ where the circuit is defined in Fig. \ref{Figure:UpdatedCircuit-1}.
    \item[$-$] Output $(\mathsf{pk}, \mathsf{sk})$.
\end{itemize}

\item[]\textsf{RegPK}(\textsf{crs}, \textsf{aux}, \textsf{pk}, $P$):
\leavevmode
\begin{itemize}
    \item[$-$] Parse $\mathsf{crs} = (\{\mathsf{crs}_{i,b}\}_{i,b},y)$ and $\mathsf{pk} = (\{\mathsf{pk}_{i,b}\}_{i,b})$.
    \item[$-$] If $\mathsf{aux} = \perp$, initialize $\mathsf{aux}_{i,b} := \perp$ for all $i, b$.
    \item[$-$] For each $i, b$, compute $(\mathsf{mpk}_{i,b}, \mathsf{aux}'_{i,b}) \leftarrow \mathsf{RABE}.\mathsf{RegPK}(\mathsf{crs}_{i,b},$ $\mathsf{aux}_{i,b},$ $\mathsf{pk}_{i,b},$ $P)$.
  
    \item[$-$] Set $\mathsf{mpk} := (\{\mathsf{mpk}_{i,b}\}_{i,b},y)$ and $\mathsf{aux}' := \{\mathsf{aux}'_{i,b}\}_{i,b}$.
    \item[$-$] Output $(\mathsf{mpk}, \mathsf{aux}')$.
\end{itemize}

\item[]\textsf{Encrypt}(\textsf{mpk}, X, $\mu$):
\leavevmode
\begin{itemize}
    \item[$-$] Parse $\mathsf{mpk} = (\{\mathsf{mpk}_{i,b}\}_{i,b},y)$ 
    \item[$-$] For each $i, b$, compute $\mathsf{CT}_{i,b} \leftarrow \mathsf{RABE}.\mathsf{Encrypt}(\mathsf{mpk}_{i,b}, X, \mu[i]; r_{i,b})$.
    \item[$-$] Let $d := (\{\mathsf{mpk}_{i,b}, \mathsf{CT}_{i,b}\}_{i,b}, X)$ and $w := (\mu, \{r_{i,b}\}_{i,b})$.
    \item[$-$] Compute $\pi \leftarrow \langle \mathcal{P}(w),\mathcal{V}(y)\rangle(d)$.
    \item[$-$] Output $\mathsf{ct} := (\{\mathsf{CT}_{i,b}\}_{i,b}, \pi)$.
\end{itemize}

\item[]\textsf{Update}(\textsf{crs}, \textsf{aux}, \textsf{pk}):
\leavevmode
\begin{itemize}
    \item[$-$] Parse $\mathsf{crs} = (\{\mathsf{crs}_{i,b}\}_{i,b},y)$, $\mathsf{aux} = \{\mathsf{aux}_{i,b}\}_{i,b}$, and $\mathsf{pk} = \{\mathsf{pk}_{i,b}\}_{i,b}$.
    \item[$-$] For each $i \in [\ell_m],$ $b \in \{0,1\}$, compute $\mathsf{hsk}_{i,b}$ $ \leftarrow$ $\mathsf{RABE}.\mathsf{Update}(\mathsf{crs}_{i,b},$ $ \mathsf{aux}_{i,b},$ $ \mathsf{pk}_{i,b})$.
    \item[$-$] Output $\mathsf{hsk} := \{\mathsf{hsk}_{i,b}\}_{i,b}$.
\end{itemize}

\item[]\textsf{Decrypt}(\textsf{sk}, \textsf{hsk}, {X}, \textsf{ct}):
\leavevmode
\begin{itemize}
    \item[$-$] Parse $\mathsf{sk} = \mathcal{D}_e$ and $\mathsf{hsk} = \{\mathsf{hsk}_{i,b}\}_{i,b}$.
    \item[$-$] Compute and return $\mu := \mathcal{D}_e(\mathsf{hsk}, \mathsf{ct})$.
\end{itemize}

\begin{center}
\begin{tcolorbox}[
  colback=white, title=Left or Right Decryption Circuit $\mathcal{D}$,
  arc=4mm, boxrule=0.8pt, width=0.9\linewidth]

\textbf{Input:} \textsf{hsk}, \textsf{CT}

\textbf{Hardcoded:} $\{\mathsf{sk}_{i,z[i]}\}_{i \in [\ell_m]}$, $z \in \{0,1\}^{\ell_m}$

\vspace{2mm}
\begin{enumerate}
    \item Parse $\mathsf{CT} = \left( \{ \mathsf{CT}_{i,0}, \mathsf{CT}_{i,1} \}_{i \in [\ell_m]}, \pi \right)$
    \item Parse $\mathsf{hsk} = \left( \{ \mathsf{hsk}_{i,0}, \mathsf{hsk}_{i,1} \}_{i \in [\ell_m]}  \right)$
    
    \item If $\pi\neq 1$, then output $\perp$
    \item For each $i \in [\ell_m]$, compute:
    \[
    m[i] \leftarrow \mathsf{RABE}.\mathsf{Decrypt}(\mathsf{sk}_{i,z[i]}, \mathsf{hsk}_{i, z[i]},{X}, \mathsf{CT}_{i, z[i]})
    \]
    \item Output: $m := m[1] \parallel m[2] \parallel \cdots \parallel m[\ell_m]$
\end{enumerate}
\end{tcolorbox}

\vspace{2mm}
\captionof{figure}{ Left or Right Decryption Circuit}
\label{Figure:UpdatedCircuit-1}
\end{center}
\item[]\textsf{SimKeyGen}(\textsf{crs}, \textsf{aux}, $P,\mathsf{B}$):
\leavevmode
\begin{itemize}
    \item[$-$] Parse $\mathsf{crs} = (\{\mathsf{crs}_{i,b}\}_{i,b},y)$.
    
    \item[$-$] For all $i \in [\ell_m], b \in \{0,1\}$, compute $(\mathsf{pk}_{i,b}, \mathsf{sk}_{i,b})$ $ \leftarrow $ $\mathsf{RABE}.\mathsf{KeyGen}(\mathsf{crs}_{i,b},$ $ \mathsf{aux}_{i,b},$ $P)$.
    
    \item[$-$] Sample $z^*_{{\mathsf{{pk}}}} \leftarrow \{0,1\}^{\ell_m}$ uniformly at random.

    \item[$-$] Set ${\mathsf{pk}} := (\{\mathsf{pk}_{i,b}\}_{i,b})$ and update $\mathsf{B}[\mathsf{{pk}}] = (\{\mathsf{sk}_{i,b}\}_{i,b}, z^*_{{\mathsf{{pk}}}})$.

    \item[$-$] Output $(\widetilde{\mathsf{pk}}, \mathsf{B})$.
\end{itemize}

\item[]\textsf{SimRegPK}(\textsf{crs}, \textsf{aux}, ${\textsf{pk}},$ $P$):
\leavevmode
\begin{itemize}
    \item[$-$] Parse $\mathsf{crs} = (\{\mathsf{crs}_{i,b}\}_{i,b},y)$ and ${\mathsf{pk}} = (\{\mathsf{pk}_{i,b}\}_{i,b})$.
    
    \item[$-$] For each $i, b$, compute $(\widetilde{\mathsf{mpk}}_{i,b}, \widetilde{\mathsf{aux}}_{i,b}) $ $\leftarrow$ $\mathsf{RABE}.\mathsf{RegPK}(\mathsf{crs}_{i,b},$ $ \mathsf{aux}_{i,b},$ $ \mathsf{pk}_{i,b},$ $ P)$.  
    
    \item[$-$] Set $\widetilde{\mathsf{mpk}} := (\{\widetilde{\mathsf{mpk}}_{i,b}\}_{i,b}, y)$ and $\widetilde{\mathsf{aux}} := \{\widetilde{\mathsf{aux}}_{i,b}\}_{i,b}$.
    \item[$-$] Output $(\widetilde{\mathsf{mpk}}, \widetilde{\mathsf{aux}})$.
\end{itemize}

\item[]\textsf{SimCorrupt}(\textsf{crs}, $\textsf{pk}$, $\mathsf{B}$):
\leavevmode
\begin{itemize}
    \item[$-$] Parse $\mathsf{pk} = (\{\mathsf{pk}_{i,b}\}_{i,b})$ and $\mathsf{B}[\mathsf{pk}]= (\{\mathsf{sk}_{i,b}\}_{i,b}, z^*_{\mathsf{pk}})$.
    
    \item[$-$] Compute $\widetilde{\mathsf{sk}} := iO(\mathcal{D}_0[\{\mathsf{sk}_{i,0}\}_{i}]),$ where the circuit is defined in Fig. \ref{Figure:UpdatedCircuit-2}.
    \item[$-$] Output $\widetilde{\mathsf{sk}}$.
\end{itemize}

\begin{center}
\begin{tcolorbox}[
  colback=white, title=Left Decryption Circuit $\mathcal{D}_0$,
  arc=4mm, boxrule=0.8pt, width=0.9\linewidth]

\textbf{Input:} \textsf{hsk}, \textsf{CT}

\textbf{Hardcoded:} $\{\mathsf{sk}_{i,0}\}_{i \in [\ell_m]}$

\vspace{2mm}
\begin{enumerate}
    \item Parse $\mathsf{CT} = \left( \{ (\mathsf{CT}_{i,0}, \mathsf{CT}_{i,1}) \}_{i \in [\ell_m]}, \pi  \right)$
    
    \item Parse $\mathsf{hsk} = \left( \{ \mathsf{hsk}_{i,0}, \mathsf{hsk}_{i,1} \}_{i \in [\ell_m]} \right)$
    
    \item If $\pi \neq 1$, output $\perp$
    
    \item For each $i \in [\ell_m]$, compute:
    $$m[i] \leftarrow \mathsf{RABE}.\mathsf{Decrypt}(\mathsf{sk}_{i,0}, \mathsf{hsk}_{i,0}, {X}, \mathsf{CT}_{i,0})$$
    
    \item Output: $m = m[1] \parallel m[2] \parallel \cdots \parallel m[\ell_m]$
\end{enumerate}
\end{tcolorbox}

\vspace{1mm}
\captionof{figure}{Left Decryption Circuit}
\label{Figure:UpdatedCircuit-2}
\end{center}

\item[]\textsf{SimCT}($\widetilde{\textsf{mpk}}$, $\mathsf{B}$, $X$):
\leavevmode
\begin{itemize}
    \item[$-$] Parse $\widetilde{\mathsf{mpk}} = (\{\widetilde{\mathsf{mpk}}_{i,b}\}_{i,b},y)$.
    
    \item[$-$] Compute $z^* := \bigoplus\limits_{\mathsf{pk} \in \mathsf{B}} z^*_{\mathsf{pk}}$ where $\mathsf{B}[\mathsf{pk}] = (\{\mathsf{sk}_{i,b}\}_{i,b}, z^*_{\mathsf{pk}})$.
    \item[$-$] For each $i \in [\ell_m]$, compute:
    \begin{align*}
        \mathsf{CT}^*_{i,z^*[i]} &\leftarrow \mathsf{RABE}.\mathsf{Encrypt}(\widetilde{\mathsf{mpk}}_{i,z^*[i]}, X, 0), \\
        \mathsf{CT}^*_{i,1 - z^*[i]} &\leftarrow \mathsf{RABE}.\mathsf{Encrypt}(\widetilde{\mathsf{mpk}}_{i,1 - z^*[i]}, X, 1).
    \end{align*}
    \item[$-$] Define $d^* := (\{\widetilde{\mathsf{mpk}}_{i,0}, \widetilde{\mathsf{mpk}}_{i,1}, \mathsf{CT}^*_{i,0}, \mathsf{CT}^*_{i,1}\}_{i}, X)$.
    \item[$-$] Compute $\widetilde{\pi} \leftarrow \langle \mathsf{Sim}\rangle(d^*,y)$.
    \item[$-$] Output $\widetilde{\mathsf{ct}} := (\{\mathsf{CT}^*_{i,b}\}_{i,b}, \widetilde{\pi})$.
\end{itemize}

\item[]\textsf{Reveal}(\textsf{pk}, $\mathsf{B}$, $\widetilde{\textsf{ct}}$, $\mu$):
\leavevmode
\begin{itemize}
    \item[$-$] Parse $\mathsf{B}[\mathsf{pk}] = (\{\mathsf{sk}_{i,b}\}_{i,b}, z^*)$.
    \item[$-$] Parse $\mu = \mu[1] \parallel \mu[2] \parallel \cdots \parallel \mu[\ell_m]$.
    \item[$-$] Compute $\widetilde{\mathsf{sk}} := iO(\mathcal{D}[\{\mathsf{sk}_{i,z^*[i] \oplus \mu[i]}\}_{i}, z^* \oplus \mu])$.
    \item[$-$] Output $\widetilde{\mathsf{sk}}$.
\end{itemize}
\end{description}
\begin{theorem}\label{thm:shadow-security}
If the registered attribute-based encryption scheme $\prod_{\textsf{RABE}}$ is secure and assuming the existence of an indistinguishability obfuscator $i\mathcal{O}$ for \textsf{P/poly}, and an interactive zero-knowledge argument system $\prod_{\mathsf{ZKA}}$ for \textsf{NP} language, then the \textsf{Shad-RABE} construction is secure.
\end{theorem}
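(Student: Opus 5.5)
We argue by a sequence of hybrids $\mathsf{Hyb}_0,\dots,\mathsf{Hyb}_4$. Here $\mathsf{Hyb}_0$ is $\mathsf{Exp}^{\textsf{Shad-RABE}}_{\mathcal{A}}(\lambda,0)$ and $\mathsf{Hyb}_4$ is $\mathsf{Exp}^{\textsf{Shad-RABE}}_{\mathcal{A}}(\lambda,1)$. The overview in Section~\ref{tech-overview} already suggests the order of the transitions.

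\textbf{Step 0 (syntactic switch).} Note that $\mathsf{SimKeyGen}$ and $\mathsf{SimRegPK}$ produce public keys, master public keys and auxiliary states with exactly the real distribution. The only difference is that $\mathsf{SimKeyGen}$ keeps all of $\{\mathsf{sk}_{i,b}\}_{i,b}$ and a fresh selector $z^*_{\mathsf{pk}}$ in $\mathsf{B}$. We can therefore move to a hybrid that runs these simulated procedures while still answering corruptions and the challenge with real objects. The view is identical, and the challenger now knows both $\mathsf{sk}_{i,0}$ and $\mathsf{sk}_{i,1}$ for every honest user.

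\textbf{Step 1 ($i\mathcal{O}$: left-or-right to left-only).} For each honest key we replace the real key $i\mathcal{O}(\mathcal{D}[\{\mathsf{sk}_{i,z[i]}\}_i,z])$ by $i\mathcal{O}(\mathcal{D}_0[\{\mathsf{sk}_{i,0}\}_i])$. This covers both keys returned on corruption queries and keys revealed in the challenge phase, and we do it one key at a time. Functional equivalence is justified as follows. Any input on which $\mathcal{D}$ does not output $\perp$ carries an accepting proof for a statement $d$. By statistical soundness of $\prod_{\mathsf{ZKA}}$ with respect to the verifier string $y$, such a $d$ lies in $\mathcal{L}$ except with negligible probability over $y$. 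For $d\in\mathcal{L}$ both $\mathsf{CT}_{i,0}$ and $\mathsf{CT}_{i,1}$ encrypt the same bit $\mu[i]$, so by RABE correctness decrypting either side gives the same answer. After this hybrid, corrupted keys equal $\mathsf{SimCorrupt}$.

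\textbf{Step 2 (ZK).} We replace the challenge proof $\pi$ by $\widetilde{\pi}\leftarrow\langle\mathsf{Sim}\rangle(d^*,y)$, using computational zero knowledge. The reduction holds $\mathsf{B}$ and generates everything else itself.

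\textbf{Step 3 (RABE security, position by position).} For each $i$ we change $\mathsf{CT}^*_{i,1-z[i]}$ to encrypt $1-\mu[i]$. Here $z$ is a global selector; I would set $z=z^*\oplus\mu$ with $z^*=\bigoplus_{\mathsf{pk}}z^*_{\mathsf{pk}}$. The reduction embeds the external RABE instance in slot $(i,1-z[i])$. It is admissible for the following reason. Corrupted users' policies fail on $X^*$ by the challenge condition. Honest users, even when revealed, only receive obfuscations hardcoding $\mathsf{sk}_{i,z[i]}$, while the $\mathsf{sk}_{i,1-z[i]}$ secret keys stay with the RABE challenger, since they are honest keys in that instance.

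Once the revealed keys are the left-or-right circuit with selector $z^*\oplus\mu$, a further $i\mathcal{O}$ step is needed. We switch the revealed keys from $\mathcal{D}_0$ back to $\mathcal{D}[\{\mathsf{sk}_{i,z^*[i]\oplus\mu[i]}\}_i,z^*\oplus\mu]$, which is $\mathsf{Reveal}$. This switch must be ordered \emph{before} the ciphertext becomes asymmetric, and must be argued on the inputs actually evaluated, because $\mathcal{D}$ and $\mathcal{D}_0$ are no longer equivalent on false statements.

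\textbf{Step 4 (the encrypted values).} The slot $z^*[i]\oplus\mu[i]$ encrypts $\mu[i]$ and the opposite slot encrypts $1-\mu[i]$. Re-indexed, slot $z^*[i]$ encrypts $0$ and slot $1-z^*[i]$ encrypts $1$, which is exactly $\mathsf{SimCT}$. This holds because $z^*$ is uniform and independent of $\mu$, so this last step is a change of variables.

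\textbf{Main obstacle.} The hard part is the $i\mathcal{O}$ steps. With simulated proofs and asymmetric ciphertexts, $\mathcal{D}$ and $\mathcal{D}_0$ genuinely differ on the challenge ciphertext, so equivalence must be argued only for all inputs before the ZK switch. Everything after that has to be handled by keeping the revealed keys fixed as $\mathsf{Reveal}$-style circuits. The ordering I would try first is: $i\mathcal{O}$ to $\mathcal{D}_0$ for corrupted keys only, leaving revealed keys real; then ZK; then RABE hybrids; then re-indexing. I would also exploit that real revealed keys with selector $z$ already coincide in form with $\mathsf{Reveal}$ once $z$ is identified with $z^*\oplus\mu$. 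A secondary subtlety is making soundness hold over the interactive verifier randomness $y$ fixed in $\mathsf{crs}$.
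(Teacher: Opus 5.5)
Your hybrid order is the paper's: $\mathcal{D}\to\mathcal{D}_0$ via $i\mathcal{O}$, then ZK, then position-wise RABE hybrids, then $z:=z^*\oplus\mu$. Your decision never to send the revealed keys to $\mathcal{D}_0$ and back improves on the paper, whose Lemma~\ref{lem:hyb3-hyb4} changes those keys while calling the hybrids identical. (Note that $i\mathcal{O}$ cannot be applied ``on the inputs actually evaluated''.) Two steps nevertheless fail.

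\textbf{Step 1.} The claim ``for $d\in\mathcal{L}$ both sides give the same bit by RABE correctness'' is false. $\mathcal{D}$ is run on an arbitrary $\mathsf{hsk}=\{\mathsf{hsk}_{i,0},\mathsf{hsk}_{i,1}\}_i$ chosen by whoever holds the obfuscated key, and RABE correctness says nothing about decryption under a malformed helper key. Take an honest ciphertext with a valid proof, an honest $\mathsf{hsk}_{i,0}$ and a garbage $\mathsf{hsk}_{i,1}$. Then $\mathcal{D}_0$ returns $\mu[i]$, while $\mathcal{D}[\cdot,z]$ with $z[i]=1$ in general does not. So the circuits are not functionally equivalent however sound the proof is, and $i\mathcal{O}$ gives nothing. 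In fact this test reads off the selector of a corrupted key, which is uniform under $\mathsf{KeyGen}$ but $0^{\ell_m}$ under $\mathsf{SimCorrupt}$. Separately, the hypothesis only provides a computationally sound argument. As written, $\pi$ is just the verifier's output bit and $\mathcal{D}$ merely tests $\pi\neq 1$, so the statistical soundness over $y$ that you invoke is not available either. The paper's Lemma~\ref{lem:hyb0-hyb1-updated} rests on the same equivalence claim.

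\textbf{Steps 3--4.} You treat $z$ as a single selector shared by all honest users. In fact $\mathsf{KeyGen}$ samples an independent $z$ per user, $\mathsf{Reveal}$ uses that user's own $z^*_{\mathsf{pk}}$, and $\mathsf{SimCT}$ uses $z^*=\bigoplus_{\mathsf{pk}}z^*_{\mathsf{pk}}$. Suppose two honest users have policies satisfying $X^*$ and selectors that differ at position $i$. Then one user's slot-$(i,0)$ key and the other's slot-$(i,1)$ key both sit inside revealed keys, so embedding the RABE challenge in either slot at position $i$ is inadmissible. Worse, your endpoint is not $\mathsf{Exp}^{\textsf{Shad-RABE}}_{\mathcal{A}}(\lambda,1)$. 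Let $\mathsf{pk}_1$ be honest with a policy satisfying $X^*$, and let a second key $\mathsf{pk}_2$ be registered honestly, corrupted or not, since $\mathsf{B}$ is never pruned. In that game the revealed key of $\mathsf{pk}_1$ decrypts $\mathsf{SimCT}$ to $\mu\oplus z^*_{\mathsf{pk}_2}$, whereas in the real game it decrypts to $\mu$. No reordering of hybrids repairs this. The simulator itself would have to use one common selector for all honest keys, switched in by $i\mathcal{O}$ before the ZK step, which again needs the missing functional equivalence. The paper's ``combined randomness $z:=\bigoplus z_{\text{user}}$'' in $\mathsf{Hyb}_3$ makes the same unjustified identification, so following it does not close the gap.
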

\begin{proof}
We define a sequence of hybrid experiments to transition between real and ideal security games.\vspace{0.2cm}
\begin{itemize}\setlength\itemsep{1em}
\item[$\bullet$] $\textsf{Hyb}_0$: This corresponds precisely to the security experiment $\mathsf{Exp}^{\textsf{Shad-RABE}}_{\mathcal{A}}(\lambda, 0)$. Throughout the analysis, $x^*$ denotes the challenge attribute set and $\mu$ refers to the target message chosen by the adversary.

\item \textbf{$\mathsf{Hyb}_1$}: This hybrid modifies $\mathsf{Hyb}_0$ by altering the generation of secret keys during the query phase. Instead of using the left-or-right decryption circuit
\begin{align*}
    &~~~\mathcal{D}[\{\mathsf{sk}_{i,z[i]}\}_{i \in [\ell_m]}, z], 
\end{align*}
the challenger now employs the left-only decryption circuit
\begin{align*}
    &\mathcal{D}_0[\{\mathsf{sk}_{i,0}\}_{i \in [\ell_m]}].
\end{align*}
Consequently, for each key-generation query, it provides
\begin{align*}
&\widetilde{\mathsf{sk}} = iO\left(\mathcal{D}_0[\{\mathsf{sk}_{i,0}\}_{i \in [\ell_m]}]\right),
\end{align*}
instead of
\begin{align*}
&~~~~\mathsf{sk} = iO\left(\mathcal{D}[\{\mathsf{sk}_{i,z[i]}\}_{i \in [\ell_m]}, z]\right).
\end{align*}
As a result, the secret keys in this hybrid no longer depend on the random selection bit string $z$.\vspace{0.2cm}

\item \textbf{$\mathsf{Hyb}_2$}: This game is the same as $\mathsf{Hyb}_1$ apart from the fact that it transitions from real to simulated machine $\langle Sim \rangle$ of  an interactive proof system given by the interactive machines $\langle \mathcal{P},\mathcal{V}\rangle$ for an \textsf{NP} language $\mathcal{L}$ characterized by the relation $R$ in the ciphertext.

\item \textbf{$\mathsf{Hyb}_3$}: This game is the same as $\mathsf{Hyb}_2$ except that it introduces asymmetric encryption patterns in the challenge ciphertext. Rather than encrypting the same message bit under both master public keys, the challenger now generates position-dependent encryptions. For the combined randomness $z:= \bigoplus z_{\text{user}}$, where $z_{\text{user}}$ is sampled in $\textsf{KeyGen}$ for every user  and each bit position $i \in [\ell_m]$, it computes:
\begin{align*}
\mathsf{CT}^*_{i,z[i]} &\leftarrow \mathsf{RABE}.\mathsf{Encrypt}(\widetilde{\mathsf{mpk}}_{i,z[i]}, X^*, \mu[i]), \\
\mathsf{CT}^*_{i,1-z[i]} &\leftarrow \mathsf{RABE}.\mathsf{Encrypt}(\widetilde{\mathsf{mpk}}_{i,1-z[i]}, X^*, 1-\mu[i]).
\end{align*}

\item \textbf{$\mathsf{Hyb}_4$}: This game is the same as $\mathsf{Hyb}_3$, except that it is completely independent of the target message $\mu$. The challenger uses the same combined randomness $z^*:= \bigoplus z^*_{\text{user}}$, where $z^*_{\text{user}}$ is sampled in $\textsf{KeyGen}$ for every user and generates challenge ciphertexts according to:
\begin{align*}
\mathsf{CT}^*_{i,z^*[i]} &\leftarrow \mathsf{RABE}.\mathsf{Encrypt}(\widetilde{\mathsf{mpk}}_{i,z^*[i]}, X^*, 0), \\
\mathsf{CT}^*_{i,1-z^*[i]} &\leftarrow \mathsf{RABE}.\mathsf{Encrypt}(\widetilde{\mathsf{mpk}}_{i,1-z^*[i]}, X^*, 1).
\end{align*}

When revealing secret keys in the final phase, the challenger computes for each honest user with randomness $z^*_{\text{user}}$:
$$\mathsf{sk} := iO(\mathcal{D}[\{\mathsf{sk}_{i,z^*_{\text{user}}[i] \oplus \mu[i]}\}_{i \in [\ell_m]}, z^*_{\text{user}} \oplus \mu]).$$

Note that the game $\mathsf{Hyb}_4$ is identical to the simulated experiment $\mathsf{Exp}^{\textsf{Shad-RABE}}_{\mathcal{A}}$ $(\lambda,$ $ 1)$.
\end{itemize}
\begin{table}[htbp]
    \centering
    \renewcommand{\arraystretch}{2.3} 
    \setlength{\tabcolsep}{6pt} 

    \resizebox{\textwidth}{!}{
    \begin{tabular}{c|c|c|c|c}
        \hline
        {\textbf{Hybrid}}  
        & \multicolumn{1}{c|}{\textbf{Secret Keys}} 
        & \multicolumn{1}{c|}{\textbf{uNIZK Proofs}} 
        & \multicolumn{1}{c|}{\textbf{Challenge Ciphertext}} 
        & \multicolumn{1}{c}{\textbf{Dependence on $\mu$}} \\
        \hline

        \rule{0pt}{3.8ex}$\textsf{Hyb}_0$ 
        & \fbox{$iO(\mathcal{D}[\{\mathsf{sk}_{i,z[i]}\}_{i \in [\ell_m]}, z])$}
        & \fbox{Real proofs} 
        & \fbox{$\mathsf{CT}^*_{i,0}, \mathsf{CT}^*_{i,1}$ encrypt $\mu[i]$} 
        & \fbox{Yes (depends on $\mu$)} \\
        
        \rule{0pt}{3.8ex}$\textsf{Hyb}_1$ 
        & \colorbox{black!20}{$iO(\mathcal{D}_0[\{\mathsf{sk}_{i,0}\}_{i \in [\ell_m]}])$}
        & $\bm{\downarrow}$ 
        & $\bm{\downarrow}$ 
        & $\bm{\downarrow}$ \\

        \rule{0pt}{3.8ex}$\textsf{Hyb}_2$ 
        & $\bm{\downarrow}$ 
        & \colorbox{black!20}{Simulated proofs using $\langle Sim\rangle$} 
        & $\bm{\downarrow}$ 
        & $\bm{\downarrow}$ \\

        \rule{0pt}{3.8ex}$\textsf{Hyb}_3$ 
        & $\bm{\downarrow}$ 
        & $\bm{\downarrow}$ 
        & \colorbox{black!20}{Asymmetric encryptions: $\mathsf{CT}^*_{i,z[i]}, \mathsf{CT}^*_{i,1-z[i]}$} 
        & $\bm{\downarrow}$ \\

        \rule{0pt}{3.8ex}$\textsf{Hyb}_4$ 
        & \colorbox{black!20}{$iO(\mathcal{D}[\{\mathsf{sk}_{i,z^*_{\text{user}}[i]\oplus \mu[i]}\}, z^*_{\text{user}}\oplus \mu])$}
        & $\bm{\downarrow}$ 
        & \colorbox{black!20}{Encrypt fixed bits $0/1$ under $z^*$ randomness} 
        & \colorbox{black!20}{No (independent of $\mu$)} \\
        \hline
    \end{tabular}
    }
    
    \vspace{0.2cm}
    \caption{\justifying \noindent Summary of the hybrid sequence used in the security proof of \textsf{Shad-RABE}. The decryption circuit column shows the obfuscated program used to answer key queries. Simulated proof $\langle Sim \rangle$ appear in $\mathsf{Hyb}_2$, and asymmetric challenge ciphertexts are introduced in $\mathsf{Hyb}_3$. In $\mathsf{Hyb}_4$, the ciphertext becomes entirely independent of the challenge message $\mu$. Shaded entries mark the first point of deviation from the previous hybrid. Arrows ($\downarrow$) indicate no change from the previous hybrid.}
    \label{Summary-hybrid-shadRABE}
\end{table}

\noindent Let the distribution of the output of an execution of $\textsf{Hyb}_j$, with adversary $\mathcal{A}$, be denoted as $\textsf{Hyb}_j(\mathcal{A})$. We establish computational indistinguishability between consecutive hybrid games through the following sequence of reductions, overview shown in Fig. \ref{fig:shad-rabe-hybrids-arranged}.
\begin{figure}[htbp]
\begin{center}
\resizebox{1\linewidth}{!}{%
\begin{tikzpicture}[
    node distance=1.5cm and 1.5cm,
    box/.style={draw, rectangle, rounded corners=2mm, minimum height=1cm, minimum width=2.3cm, thick, align=center},
    dashedbox/.style={dashed, draw, thick, rounded corners=4mm, inner sep=5mm},
    arrow/.style={
      thick,
      postaction={decorate},
      decoration={markings, mark=at position 0.6 with {\arrow{Latex[length=3mm,width=2mm]}}}
    },
    parrow/.style={<->, thick},
    dots/.style={very thick, dotted}
]

\node[box] (H0) {\textsf{Hyb}$_0$($\mathcal{A}$)};

\node[box, right=2cm of H0] (S00) {\textsf{SubHyb}$_0^{\,0}$($\mathcal{A}$)};
\node[box, right=1.5cm of S00] (S01) {\textsf{SubHyb}$_0^{\,1}$($\mathcal{A}$)};
\node (S0mid) [right=1.5cm of S01] {};
\node[box, right=3cm of S01] (S0q) {\textsf{SubHyb}$_0^{\,q}$($\mathcal{A}$)};
\node[box, right=2cm of S0q] (H1) {\textsf{Hyb}$_1$($\mathcal{A}$)};

\node[box, below=2cm of H1] (H2) {\textsf{Hyb}$_2$($\mathcal{A}$)};

\node[box, left=2cm of H2] (S20) {\textsf{SubHyb}$_2^{\,0}$($\mathcal{A}$)};
\node[box, left=1.5cm of S20] (S21) {\textsf{SubHyb}$_2^{\,1}$($\mathcal{A}$)};
\node (S2mid) [left=1.5cm of S21] {};
\node[box, left=3cm of S21] (S2m) {\textsf{SubHyb}$_2^{\,\ell_m}$($\mathcal{A}$)};

\node[box, left=2cm of S2m] (H3) {\textsf{Hyb}$_3$($\mathcal{A}$)};
\node[box, below=1.5cm of H3] (H4) {\textsf{Hyb}$_4$($\mathcal{A}$)};

\draw[dots] (S01.east) -- (S0mid.west);
\draw[dots] (S21.west) -- (S2mid.east);

\node[dashedbox, fit=(S00)(S01)(S0mid)(S0q), name=boxA] {};
\node[above=2mm of boxA] {\textbf{Lemma~\ref{lem:hyb0-hyb1-updated} (key-query hybrids)}};

\node[dashedbox, fit=(S20)(S21)(S2mid)(S2m), name=boxB] {};
\node[below=2mm of boxB] {\textbf{Lemma~\ref{lem:hyb2-hyb3} (position-wise hybrids)}};

\draw[doublearrow] (H0) -- (S00) node[midway,above=3pt]{\textbf{Lemma~\ref{lem:hyb0-hyb1-updated}}} node[midway,below=3pt,align=center]{\textbf{Identical}};
\draw[arrow] (S00) -- (S01) node[midway,above=3pt]{\textbf{Claim 1}} node[midway,below=3pt,align=center]{\textbf{$i\mathcal{O}$}\\ \textbf{Security}};
\draw[arrow] (S01) -- (S0q) node[midway,above=3pt]{\textbf{Claim 1}} node[midway,below=3pt,align=center]{\textbf{$i\mathcal{O}$}\\ \textbf{Security}};
\draw[doublearrow] (S0q) -- (H1) node[midway,above=3pt]{\textbf{Lemma~\ref{lem:hyb0-hyb1-updated}}} node[midway,below=3pt,align=center]{\textbf{Identical}};

\draw[arrow] (H1) -- (H2) node[midway,right]{\textbf{Lemma~\ref{lem:hyb1-hyb2-updated}}} node[midway,left=3pt,align=center]{\textbf{\textsf{ZKA}}\\ \textbf{Security}};

\draw[doublearrow] (H2) -- (S20) node[midway,above=3pt]{\textbf{Lemma~\ref{lem:hyb1-hyb2-updated}}} node[midway,below=3pt,align=center]{\textbf{Identical}};
\draw[arrow] (S20) -- (S21) node[midway,above=3pt]{\textbf{Claim 2}} node[midway,below=3pt,align=center]{\textbf{\textsf{RABE}}\\ \textbf{Security}};
\draw[arrow] (S21) -- (S2m) node[midway,above=3pt]{\textbf{Claim 2}} node[midway,below=3pt,align=center]{\textbf{\textsf{RABE}}\\ \textbf{Security}} ;
\draw[doublearrow] (S2m) -- (H3) node[midway,above=3pt]{\textbf{Lemma~\ref{lem:hyb1-hyb2-updated}}} node[midway,below=3pt,align=center]{\textbf{Identical}};
\draw[arrow] (H3) -- (H4) node[midway,left=3pt]{\textbf{Lemma~\ref{lem:hyb3-hyb4}}} node[midway,right=3pt,align=center]{\textbf{$i\mathcal{O}$+\textsf{RABE}}\\ \textbf{Security}} ;

\end{tikzpicture}
}
\end{center}
\captionsetup{font=scriptsize}
\caption {\scriptsize A single arrow `\protect\tikz[baseline=-0.5ex]{\protect\draw[arrow] (0,0) -- (1,0);}' represents statistical/computational indistinguishability. The double arrow `\protect\tikz[baseline=-0.5ex]{\protect\draw[doublearrow] (0,0) -- (1,0);}' indicates that for both $b \in \{0,1\}$, the adversary is in the same experiment. Above each arrow, we write the lemma/claim that proves indistinguishability, while below, we state the underlying security assumption.}
\label{fig:shad-rabe-hybrids-arranged}
\end{figure}
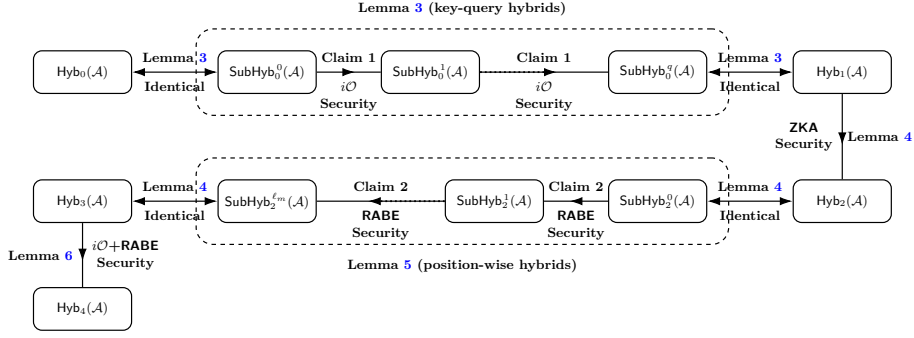

\begin{lemma}\label{lem:hyb0-hyb1-updated}
Suppose that $\prod_{\textsf{RABE}}$ satisfies perfect correctness, $\prod_{\mathsf{ZKA}}$ achieves soundness property, and the indistinguishability obfuscator $i\mathcal{O}$ ensures security under indistinguishability. Then, the following advantage in distinguishing between hybrids is negligible,
\begin{align*}
    &\left|\Pr[\mathsf{Hyb}_0(\mathcal{A}) = 1] - \Pr[\mathsf{Hyb}_1(\mathcal{A}) = 1]\right| \leq \mathsf{negl}(\lambda).
\end{align*}
\end{lemma}
\begin{proof}
We employ a hybrid argument over the adversary's key-generation queries. Let $q$ denote the total number of such queries made by the adversary $\mathcal{A}$.\vspace{0.15cm}

\noindent Now, we define a sequence of hybrid sub-experiments $\textsf{SubHyb}_{0}^j$ for each $j \in [0, q]$, where the challenger responds to the adversary's key queries as follows:
\begin{description}
    \item[$-$]\justifying For queries $k \in [j+1, q]$, the challenger returns $\mathsf{sk} =$ $i\mathcal{O}\left(\mathcal{D}\left[\{\mathsf{sk}_{i,z[i]}\}_{i \in [\ell_m]},\right.\right.$ $\left.\left. z\right]\right).$
    \item[$-$]\justifying For queries $k \in [1, j]$, the challenger returns $\widetilde{\mathsf{sk}} = i\mathcal{O}\left(\mathcal{D}_0\left[\{\mathsf{sk}_{i,0}\}_{i \in [\ell_m]}\right]\right).$
\end{description}

\noindent We observe that $\mathsf{SubHyb}_0^0(\mathcal{A}) \equiv \mathsf{Hyb}_0(\mathcal{A}),$ and $\mathsf{SubHyb}_0^q(\mathcal{A}) \equiv \mathsf{Hyb}_1(\mathcal{A})$.\vspace{0.2cm}

\noindent Let $\mathsf{Invalid}$ denote the event that there exists a statement $d^\dagger \notin \mathcal{L}$  such that for $\langle \mathcal{P}(w),\mathcal{V}(y)\rangle(d)=1$. By the soundness of the interactive argument system $\prod_{\textsf{ZKA}}$, it holds that
\[ 
\Pr[\mathsf{Invalid}] \leq \mathsf{negl}(\lambda). 
\]

\noindent Conditioned on the event $\neg\mathsf{Invalid}$, the circuits $\mathcal{D}$ and $\mathcal{D}_0$ are functionally equivalent on all valid inputs. This equivalence follows from their construction and the perfect correctness of the underlying \textsf{RABE} scheme. We now formalize the indistinguishability of adjacent subgames.\vspace{0.2cm}

\noindent\textbf{Claim 1.} For every $j\in[1, q]$, the distinguishing advantage between hybrid subexperiments $\mathsf{SubHyb}_0^{j-1}(\mathcal{A})$ and $\mathsf{SubHyb}_0^{j}(\mathcal{A})$ is negligible, \textit{i.e.,}
    \begin{align*}
        \left| \Pr[\mathsf{SubHyb}_0^{j-1}(\mathcal{A}) = 1] - \Pr[\mathsf{SubHyb}_0^{j}(\mathcal{A}) = 1] \right|.
    \end{align*}

\begin{proof}
\noindent Consider the transition between hybrid sub-experiments $\mathsf{SubHyb}_0^{j-1}$ and $\mathsf{SubHyb}_0^{j}$, which differ only in the $j$-th key query. In $\mathsf{SubHyb}_0^{j-1}$, the response is
\[ 
\mathsf{sk} = i\mathcal{O}\left(\mathcal{D}\left[\{\mathsf{sk}_{i,z[i]}\}_{i \in [\ell_m]}, z\right]\right), 
\]
while in \( \mathsf{SubHyb}_0^j \), the response is
\[ 
\widetilde{\mathsf{sk}} = i\mathcal{O}\left(\mathcal{D}_0\left[\{\mathsf{sk}_{i,0}\}_{i \in [\ell_m]}\right]\right). 
\]
\noindent Under the assumption $\neg\mathsf{Invalid}$, the two circuits $\mathcal{D}$ and $\mathcal{D}_0$ are functionally equivalent. Hence, by the security of indistinguishability obfuscator $i\mathcal{O}$, any \textsf{QPT} distinguisher can distinguish between these responses with at most negligible probability.\vspace{0.2cm}
\end{proof}

\noindent Applying the hybrid sub-experiments argument across the $q$ key queries and invoking the above claim repeatedly, we conclude that
\[ 
\left| \Pr[\mathsf{Hyb}_0(\mathcal{A}) = 1] - \Pr[\mathsf{Hyb}_1(\mathcal{A}) = 1] \right| \leq q \cdot \mathsf{negl}(\lambda) = \mathsf{negl}(\lambda), 
\]
which completes the proof of the lemma.
\end{proof}

\begin{lemma}\label{lem:hyb1-hyb2-updated}
If the interactive argument system $\prod_{\textsf{ZKA}}$ satisfies the zero-knowledge property, then
\begin{align*}
    \left|\Pr[\mathsf{Hyb}_1(\mathcal{A}) = 1] - \Pr[\mathsf{Hyb}_2(\mathcal{A}) = 1]\right| \leq \mathsf{negl}(\lambda).
\end{align*}
\end{lemma}
\begin{proof}
The hybrid experiments $\mathsf{Hyb}_1$ and $\mathsf{Hyb}_2$ differ solely in the generation of the ciphertext. Suppose that $\left|\Pr[\mathsf{Hyb}_1(\mathcal{A}) = 1] - \Pr[\mathsf{Hyb}_2(\mathcal{A}) = 1]\right|$ is non-negligible. Then there exists a distinguisher that separates the ciphertext distributions used in $\mathsf{Hyb}_1$ and $\mathsf{Hyb}_2$ with non-negligible advantage. \vspace{0.2cm} 

\noindent This would further imply that the distributions of the real and simulated proof, namely
\[
\langle \mathcal{P}(w), \mathcal{V}(y) \rangle(d)
\quad \text{and} \quad
\langle \mathsf{Sim} \rangle(d^*, y),
\]
are computationally distinguishable. Such a distinguisher contradicts the zero-knowledge property of $\prod_{\textsf{ZKA}}$, since it would allow one to efficiently distinguish the real view of the verifier from a simulated one for instances of size $|y|$ with non-negligible advantage.  \vspace{0.2cm}

\noindent Hence, by the zero-knowledge property, the advantage $\big|\Pr[\mathsf{Hyb}_1(\mathcal{A}) = 1]$ $ -$ $ \Pr[\mathsf{Hyb}_2(\mathcal{A}) $ $= 1]\big|$ must be negligible.
\end{proof}


\begin{lemma}\label{lem:hyb2-hyb3}
If the underlying \textsf{RABE} scheme $\prod_{\textsf{RABE}}$ is secure, as described in Section \ref{model:RABE}, then
\begin{align*}
    \left|\Pr[\mathsf{Hyb}_2(\mathcal{A}) = 1] - \Pr[\mathsf{Hyb}_3(\mathcal{A}) = 1]\right| \leq \mathsf{negl}(\lambda).
\end{align*}
\end{lemma}
\begin{proof}
We employ a position-by-position hybrid argument across the message bits. For each position $i \in [\ell_m]$, we demonstrate that modifying the encryption pattern at position $i$ introduces at most a negligible distinguishing advantage.\vspace{0.2cm}

\noindent We now define the hybrid sub-experiments games $\mathsf{SubHyb}_2^j$ for $j \in [0, \ell_m],$ where the challenger responds as follows:
\begin{description}
    \item[$-$] For positions $i \in [j+1, \ell_m]$, both ciphertexts encrypt the same message bit.
    \begin{align*}
        &\mathsf{CT}^*_{i,b} \leftarrow \mathsf{RABE}.\mathsf{Encrypt}(\widetilde{\mathsf{mpk}}_{i,b}, X^*, \mu[i])~\text{for}~b \in \{0, 1\}
    \end{align*}
    \item[$-$] For positions $i \in [1, j]$, the ciphertexts are generated using the following computation.
    \begin{align*}
    &\mathsf{CT}^*_{i,1-z[i]} \leftarrow \mathsf{RABE}.\mathsf{Encrypt}(\widetilde{\mathsf{mpk}}_{i,1-z[i]}, X^*, 1-\mu[i]),\\
    &\mathsf{CT}^*_{i,z[i]} \leftarrow \mathsf{RABE}.\mathsf{Encrypt}(\widetilde{\mathsf{mpk}}_{i,z[i]}, X^*, \mu[i])
    \end{align*}
\end{description}
By construction, $\mathsf{SubHyb}_2^0(\mathcal{A}) \equiv \mathsf{Hyb}_2(\mathcal{A})$ and $\mathsf{SubHyb}_2^{\ell_m}(\mathcal{A}) \equiv \mathsf{Hyb}_3(\mathcal{A})$. To complete the proof, we introduce the following claim.\vspace{0.2cm}

\noindent\textbf{Claim 2.} For every $j \in [1, \ell_m]$, the hybrids $\mathsf{SubHyb}_2^{j-1}(\mathcal{A})$ and $\mathsf{SubHyb}_2^{j}(\mathcal{A})$ are computationally indistinguishable under the security of the \textsf{RABE} scheme, \textit{i.e.,}
    \begin{align*}
        \left| \Pr[\mathsf{SubHyb}_2^{j-1}(\mathcal{A}) = 1] - \Pr[\mathsf{SubHyb}_2^{j}(\mathcal{A}) = 1] \right|.
    \end{align*}

\begin{proof}
To show that adjacent games $\mathsf{SubHyb}_2^{j-1}$ and $\mathsf{SubHyb}_2^j$ for $i \in [1, \ell_m],$ are computationally indistinguishable, we define a reduction $\mathcal{B}_{\mathsf{rabe}}$ against the security of the underlying \textsf{RABE} scheme.\vspace{0.2cm}

\noindent $\mathcal{B}_{\mathsf{rabe}}$ receives master public key $\widetilde{\mathsf{mpk}}$ from the \textsf{RABE} challenger of $\mathsf{Exp}^{\textsf{RABE}}_{\mathcal{A}}(\lambda, b^{'})$ for $b^{'}\in\{0,1\}$ and sets $\widetilde{\mathsf{mpk}}_{j,1-z[j]}$ $ :=$ $ \widetilde{\mathsf{mpk}}$. For all other public keys, $(i,b) \neq (j,1 - z[j])$, it generates $\widetilde{\mathsf{mpk}}_{i,b}$ itself. It then compiles the overall master public key as 
  \begin{align*}
      &\widetilde{\mathsf{mpk}} := \left( \{\widetilde{\mathsf{mpk}}_{i,b} \}_{i \in [\ell_m], b \in \{0,1\}}, y \right),
  \end{align*}
which is given to the adversary $\mathcal{A}$. Similar to before, $\mathcal{B}_{\mathsf{rabe}}$ receives $\mathsf{sk}_{j,1-z[j]}$ from the \textsf{RABE} challenger, and locally generates all other secret keys. Since these keys 
are never used in any decryption or response, the simulation remains indistinguishable. $\mathcal{B}_{\mathsf{rabe}}$ then constructs
\begin{align*}
    &\widetilde{\mathsf{sk}} := iO\left( \mathcal{D}_0[\{ \mathsf{sk}_{i,0} \}_{i \in [\ell_m]}] \right).
\end{align*}
At the challenge phase, $\mathcal{A}$ submits $(X^*, \mu)$ to the $\mathcal{B}_{\mathsf{rabe}}$ and it sends the message pair $(\mu[j], 1 - \mu[j])$ to the \textsf{RABE} challenger. It then computes $\mathsf{CT}^*_{j,z[j]}$ $\leftarrow \mathsf{RABE}.\mathsf{Encrypt}$ $(\widetilde{\mathsf{mpk}}_{j,z[j]},$ $X^*, \mu[j])$, and all remaining ciphertexts $\mathsf{CT}^*_{i,b}$ for $i\neq j,$ as in the hybrids $\mathsf{SubHyb}_2^{j-1}$ and $\mathsf{SubHyb}_2^j$. Finally, it outputs whatever $\mathcal{A}$ outputs.\vspace{0.2cm}

\noindent The reduction proceeds as follows:
\begin{description}\setlength\itemsep{0.50em}
    \item[$-$] If $b' = 0$, then $\mathsf{CT}^*_{j,1 - z[j]} \leftarrow \mathsf{RABE}.\mathsf{Encrypt}(\widetilde{\mathsf{mpk}}_{j,1-z[j]}, X^*, \mu[j]),$
  which implies that $\mathcal{B}_{\mathsf{rabe}}$ perfect simulation of the hybrid sub-experiment $\mathsf{SubHyb}_2^{j-1}$.
  \item[$-$] If $b' = 1$, then $\mathsf{CT}^*_{j,1 - z[j]} \leftarrow \mathsf{RABE}.\mathsf{Encrypt}(\widetilde{\mathsf{mpk}}_{j,1-z[j]}, X^*, 1-\mu[j]),$
    which implies that $\mathcal{B}_{\mathsf{rabe}}$ perfect simulation of the hybrid sub-experiment $\mathsf{SubHyb}_2^{j-1}$.
\end{description}
Hence, any non-negligible advantage by $\mathcal{A}$ in distinguishing between $\mathsf{SubHyb}_2^{j-1}$ and $\mathsf{SubHyb}_2^{j}$ can be used to break the security of the \textsf{RABE} scheme. Therefore, the hybrid sub-experiments are computationally indistinguishable.
\end{proof}

\noindent Applying the hybrid sub-experiment argument to all $\ell_m$ positions and using the above claim iteratively, we conclude that
\[
\left| \Pr[\mathsf{Hyb}_2(\mathcal{A}) = 1] - \Pr[\mathsf{Hyb}_3(\mathcal{A}) = 1] \right| \leq \ell_m \cdot \mathsf{negl}(\lambda) = \mathsf{negl}(\lambda).
\]
This concludes the proof of Lemma~\ref{lem:hyb2-hyb3}.
\end{proof}

\begin{lemma}\label{lem:hyb3-hyb4}
The hybrid games satisfy perfect indistinguishability:
\begin{align*}
    &\Pr\left[\mathsf{Hyb}_3(\mathcal{A}) = 1\right] = \Pr\left[\mathsf{Hyb}_4(\mathcal{A}) = 1\right].
\end{align*}
\end{lemma}
\begin{proof}
This hybrid transition is a purely formal change that preserves the adversary's view. Specifically, the two games become identical under the variable substitution $z := z^{*} \oplus \mu$.\vspace{0.2cm}

\noindent In game $\mathsf{Hyb}_3$, the challenge ciphertexts are given by
\begin{align*}
&~~~~~~\mathsf{CT}^*_{i,1 - z[i]} ~\text{encrypts}~ 1 - \mu[i], \\
&~~~~~~\mathsf{CT}^*_{i,z[i]}     ~\text{encrypts}~ \mu[i].
\end{align*}

\noindent After applying the substitution $z := z^* \oplus \mu$, the ciphertexts in $\mathsf{Hyb}_4$ take the form
\begin{align*}
&\mathsf{CT}^*_{i,z^*[i]}     ~\text{encrypts}~ 0, \\
&\mathsf{CT}^*_{i,1 - z^*[i]} ~\text{encrypts}~ 1.
\end{align*}

\noindent Since the secret keys are generated independently of the selection string $z$ in both hybrids, this change of variables does not affect the underlying distributions. Consequently, the adversary's view is identical across the two hybrids, and we conclude that
\[
\Pr[\mathsf{Hyb}_3(\mathcal{A}) = 1] = \Pr[\mathsf{Hyb}_4(\mathcal{A}) = 1].
\]

\end{proof}

\noindent Consequently, by combining Lemmas~\ref{lem:hyb0-hyb1-updated}, \ref{lem:hyb1-hyb2-updated}, \ref{lem:hyb2-hyb3}, and \ref{lem:hyb3-hyb4}, we conclude that the \textsf{Shad-RABE} construction satisfies security against all quantum polynomial-time adversaries.
\end{proof}

\subsection{Instatiation of \textsf{Shad-RABE} from Lattices}\label{Inst:Shadow-RABE}
Our generic construction of \textsf{Shad-RABE}, presented in Section \ref{generic-construct:Shadow-RABE} is fully modular and can be instantiated from any secure (key-policy) registered \textsf{ABE} scheme, zero-knowledge argument system, and indistinguishability obfuscation. In this section, we give a concrete lattice-based instantiation of each component, resulting in a quantum-secure \textsf{Shad-RABE} scheme. \vspace{0.2cm}

\noindent We instantiate our generic framework using the following state-of-the-art lattice-based primitives:

\begin{description}\setlength\itemsep{0.5em}
    \item[$\bullet$] \textbf{Registered \textsf{ABE} Component:} We employ the key-policy registered \textsf{ABE} scheme of Champion \textit{et al.} \cite{champion2025registered}, which supports arbitrary bounded-depth circuit policies. This scheme is proven secure under the falsifiable $\ell$-succinct \textsf{LWE} assumption in the random oracle model and provides succinct ciphertexts whose size remains independent of the attribute length.\vspace{0.2cm}

    \item[$\bullet$] \textbf{\textsf{ZKA} Component:} We utilize the post-quantum secure zero-knowledge argument system of Bitansky \textit{et al.} \cite{bitansky2020post} for arbitrary \textsf{NP} languages. Their construction achieves soundness and zero-knowledge under the plain \textsf{LWE} assumption against quantum adversaries.

    \item[$\bullet$] \textbf{Indistinguishability Obfuscation Component:} We utilize the lattice-based $i\mathcal{O}$ construction of Cini \textit{et al.} \cite{cini2025lattice}, which provides post-quantum security under the equivocal \textsf{LWE} assumption in conjunction with standard \textsf{NTRU} lattice assumptions. This represents the first plausibly post-quantum secure $i\mathcal{O}$  construction from well-studied lattice problems.
\end{description}

\noindent Adapting the above-mentioned component as a building block, we can construct a post-quantum secure \textsf{Shad-RABE} protocol from lattices.

%% file: 5-6_RABE-CD.tex
\section{\textsf{RABE} with Privately Verifiable Certified Deletion}\label{subsection-RABE-PriVCD}
This section first presents a generic construction of registered attribute-based encryption with privately verifiable certified deletion (\textsf{RABE-PriVCD}). The design builds upon the \textsf{Shad-RABE} protocol (\textit{cf.} Section \ref{generic-construct:Shadow-RABE}) and incorporates a symmetric-key encryption scheme with certified deletion (\textsf{SKE-CD}) that guarantees one-time certified deletion (\textsf{OT-CD}) security (\textit{cf.} Section \ref{subsection-enc-cd}). Following the generic model, we then propose an instantiation of the \textsf{RABE-PriVCD} scheme, which relies on standard hardness assumptions over lattices.\vspace{0.2cm}


\subsection{Generic Construction of \textsf{RABE-PriVCD}.}\label{generic-construct-RABE-PriVCD}

\noindent We construct a registered attribute-based encryption scheme with 
privately verifiable certified deletion, denoted by 
$\prod^{\mathsf{RABE}}_{\textsf{PriVCD}}$ $=(\mathsf{Setup},$ $\mathsf{KeyGen},$ $\mathsf{RegPK},$ $\mathsf{Encrypt},$ $\mathsf{Update},$ $\mathsf{Decrypt},$ $\mathsf{Delete},$ $\mathsf{Verify})$ 
from the following building blocks: a \textsf{Shad-RABE} scheme, specified by the algorithms, $\prod_{\textsf{Shad-RABE}}=$ \textsf{Shad-RABE}.($\mathsf{Setup},$ $\mathsf{KeyGen},$ $\mathsf{RegPK}, \mathsf{Encrypt},$ $\mathsf{Update},$ $\mathsf{Decrypt},$ $\mathsf{SimKeyGen},$ $\mathsf{SimRegPK},$ $\mathsf{SimCorrupt},$ $\mathsf{SimCT},$ $\mathsf{Reveal}),$ and a symmetric-key encryption scheme with certified deletion, comprising algorithms, $\prod_{\textsf{SKE-CD}}=\textsf{SKE-CD}.(\mathsf{KeyGen},$ $\mathsf{Encrypt},$ $\mathsf{Decrypt},$ $\mathsf{Delete},$ $\mathsf{Verify}).$ The detailed algorithms of \textsf{RABE-PriVCD} protocol are specified below.\vspace{0.2cm}

\begin{description}\setlength\itemsep{0.5em}

\item[]\textsf{Setup}$(1^\lambda, 1^{{\tau}})$:
\leavevmode
\begin{itemize}
  \item[$-$] Generate $\mathsf{crs} \leftarrow \mathsf{Shad\text{-}RABE}.\mathsf{Setup}(1^\lambda, 1^{{\tau}})$.
  \item[$-$] Output $\mathsf{crs}$.
\end{itemize}

\item[]\textsf{KeyGen}$(\mathsf{crs}, \mathsf{aux}, P)$:
\leavevmode
\begin{itemize}
  \item[$-$] Generate $(\mathsf{pk}, \mathsf{sk}) \leftarrow \mathsf{Shad\text{-}RABE}.\mathsf{KeyGen}(\mathsf{crs}, \mathsf{aux}, P)$.
  \item[$-$] Output $(\mathsf{pk}, \mathsf{sk})$.
\end{itemize}

\item[]\textsf{RegPK}$(\mathsf{crs}, \mathsf{aux}, \mathsf{pk}, P)$:
\leavevmode
\begin{itemize}
  \item[$-$] Generate $(\mathsf{mpk}, \mathsf{aux}') \leftarrow \mathsf{Shad\text{-}RABE}.\mathsf{RegPK}(\mathsf{crs}, \mathsf{aux}, \mathsf{pk}, P)$.
  \item[$-$] Output $(\mathsf{mpk}, \mathsf{aux}')$.
\end{itemize}

\item[]\textsf{Encrypt}$(\mathsf{mpk}, X, \mu)$:
\leavevmode
\begin{itemize}
 \item[$-$] Parse $\mathsf{mpk}$.
  \item[$-$] Generate $\mathsf{ske.sk} \leftarrow \mathsf{SKE\text{-}CD}.\mathsf{KeyGen}(1^\lambda)$.
  \item[$-$] Compute $\mathsf{srabe.ct} \leftarrow \mathsf{Shad\text{-}RABE}.\mathsf{Encrypt}(\mathsf{mpk}, X, \mathsf{ske.sk})$.
  \item[$-$] Compute $\mathsf{ske.ct} \leftarrow \mathsf{SKE\text{-}CD}.\mathsf{Encrypt}(\mathsf{ske.sk}, \mu)$.
  \item[$-$] Output ciphertext $\mathsf{ct}:= (\mathsf{srabe.ct}, \mathsf{ske.ct})$ and verification key $\mathsf{vk}:=\mathsf{ske.sk}$.
\end{itemize}

\item[]\textsf{Update}$(\mathsf{crs}, \mathsf{aux}, \mathsf{pk})$:
\leavevmode
\begin{itemize}
  \item[$-$] Generate $\mathsf{hsk} \leftarrow \mathsf{Shad\text{-}RABE}.\mathsf{Update}(\mathsf{crs}, \mathsf{aux}, \mathsf{pk})$.
  \item[$-$] Output $\mathsf{hsk}$.
\end{itemize}

\item[]\textsf{Decrypt}$(\mathsf{sk}, \mathsf{hsk},{X}, \mathsf{ct})$:
\leavevmode
\begin{itemize}
  \item[$-$] Parse $\mathsf{ct} = (\mathsf{srabe.ct}, \mathsf{ske.ct})$.
  \item[$-$] Compute $\mathsf{sk}' \leftarrow \mathsf{Shad\text{-}RABE}.\mathsf{Decrypt}(\mathsf{sk}, \mathsf{hsk},{X}, \mathsf{srabe.ct})$.
  \item[$-$] If $\mathsf{sk}' = \bot$ or $\mathsf{sk}' = \mathsf{GetUpdate}$, output $\mathsf{sk}'$.
  \item[$-$] Otherwise, compute and output $\mu' \leftarrow \mathsf{SKE\text{-}CD}.\mathsf{Decrypt}(\mathsf{sk}', \mathsf{ske.ct})$.
\end{itemize}

\item[]\textsf{Delete}$(\mathsf{ct})$:
\leavevmode
\begin{itemize}
  \item[$-$] Parse $\mathsf{ct} = (\mathsf{srabe.ct}, \mathsf{ske.ct})$.
  \item[$-$] Generate $\mathsf{ske.cert} \leftarrow \mathsf{SKE\text{-}CD}.\mathsf{Delete}(\mathsf{ske.ct})$.
  \item[$-$] Output certificate $\mathsf{cert} := \mathsf{ske.cert}$.
\end{itemize}

\item[]\textsf{Verify}$(\mathsf{vk}, \mathsf{cert})$:
\leavevmode
\begin{itemize}
  \item[$-$] Parse $\mathsf{vk} = \mathsf{ske.sk}$ and $\mathsf{cert} = \mathsf{ske.cert}$.
  \item[$-$] Output $b \leftarrow \mathsf{SKE\text{-}CD}.\mathsf{Verify}(\mathsf{ske.sk}, \mathsf{ske.cert})$.
\end{itemize}
\end{description}
\noindent\textbf{Correctness of \textsf{RABE-PrivCD}.} The correctness of the \textsf{RABE-PrivCD} protocol follows directly from the correctness of $\prod_{\textsf{Shad-RABE}}$ and $\prod_{\textsf{SKE-CD}}$ protocols.

\subsection{Proof of Security}
This section presents a provable theorem model for the certified deletion security of the $\prod^{\textsf{RABE}}_{\textsf{PriVCD}}$ protocol. 

\begin{theorem}\label{thm:rabecd-private security}
If the scheme $\prod_{\textsf{Shad-RABE}}$ is secure and $\prod_{\mathsf{SKE\text{-}CD}}$ satisfies the \textsf{OT-CD} security property, then the protocol $\prod^{\textsf{RABE}}_{\textsf{PriVCD}}$ is secure.\end{theorem}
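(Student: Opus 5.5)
We follow the standard receiver non-committing plus \textsf{OT-CD} template of Hiroka \textit{et al.}~\cite{hiroka2021quantum}, adapted to the registered setting. The proof proceeds through a short sequence of hybrids. It starts from $\mathsf{Exp}^{\textsf{RABE-PriVCD}}_{\mathcal{A}}(\lambda,b)$ and reaches a game in which all information about the symmetric key $\mathsf{ske.sk}$ is withheld until the deletion certificate has been verified. At that point, distinguishing $b=0$ from $b=1$ reduces to the \textsf{OT-CD} game of $\prod_{\textsf{SKE-CD}}$.

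\textbf{Hybrid $\mathsf{H}_1$ (switching to the simulated \textsf{Shad-RABE} world).} Define $\mathsf{H}_0$ to be the real experiment with challenge bit $b$. In $\mathsf{H}_1$ the challenger makes the following changes:
\begin{itemize}
\item honest and corrupted registrations are answered with $\mathsf{SimKeyGen}$ and $\mathsf{SimRegPK}$, maintaining the dictionary $\mathsf{B}$;
\item corruptions of honest keys are answered with $\mathsf{SimCorrupt}$;
\item the challenge component $\mathsf{srabe.ct}^*$ is produced as $\mathsf{SimCT}(\mathsf{mpk},\mathsf{B},X^*)$, independently of $\mathsf{ske.sk}$, while $\mathsf{ske.ct}^*\leftarrow\mathsf{SKE\text{-}CD}.\mathsf{Encrypt}(\mathsf{ske.sk},\mu_b^*)$ is computed as before;
\item after a valid certificate, the honest secret keys are returned as $\mathsf{Reveal}(\mathsf{pk}_i,\mathsf{B},\mathsf{srabe.ct}^*,\mathsf{ske.sk})$, and the helper keys are computed honestly with $\mathsf{Update}$.
\end{itemize}

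Indistinguishability of $\mathsf{H}_0$ and $\mathsf{H}_1$ is shown by a reduction $\mathcal{B}$ to the \textsf{Shad-RABE} security game of Theorem~\ref{thm:shadow-security}. $\mathcal{B}$ forwards every registration and corruption query to its challenger. At the challenge, it samples $\mathsf{ske.sk}$ and $b$ itself, submits $(X^*,\mathsf{ske.sk})$ as the challenge message, and receives $(\mathsf{ct}^*,\{\mathsf{sk}^*_i\},\{\mathsf{hsk}_i\})$. It then computes $\mathsf{ske.ct}^*$ locally. It holds $\mathsf{ske.sk}=\mathsf{vk}^*$, so it can run $\mathsf{Verify}$ on $\mathcal{A}$'s certificate itself. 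If verification succeeds it releases the received keys; otherwise it aborts. Finally it outputs $\mathcal{A}$'s guess. Admissibility carries over because the challenge attribute condition is identical in both games.

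\textbf{Reduction to \textsf{OT-CD}.} In $\mathsf{H}_1$ the only dependence on $\mathsf{ske.sk}$ before deletion is through $\mathsf{ske.ct}^*$; the simulated ciphertext and everything preceding it are independent of $\mathsf{ske.sk}$. After deletion, $\mathsf{ske.sk}$ enters only through $\mathsf{Reveal}$. We therefore build $\mathcal{B}'$ against $\mathsf{Exp}^{\mathsf{OT\text{-}CD}}_{\textsf{SKE-CD}}$:
\begin{itemize}
\item it runs all simulated algorithms itself, producing $\mathsf{srabe.ct}^*$ via $\mathsf{SimCT}$;
\item it forwards $(\mu_0^*,\mu_1^*)$ and uses the returned $\mathsf{CT}_b$ as $\mathsf{ske.ct}^*$;
\item it forwards $\mathcal{A}$'s certificate to its challenger;
\item on receiving $\mathsf{ske.sk}$, it computes the revealed keys with $\mathsf{Reveal}$ and the helper keys with $\mathsf{Update}$, and gives them to $\mathcal{A}$.
\end{itemize}
This perfectly simulates $\mathsf{H}_1$ with bit $b$. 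Combining the two steps, $|\Pr[\mathsf{Exp}(\lambda,0)=1]-\Pr[\mathsf{Exp}(\lambda,1)=1]|\le 2\,\mathsf{Adv}^{\textsf{Shad-RABE}}+\mathsf{Adv}^{\mathsf{OT\text{-}CD}}=\mathsf{negl}(\lambda)$.

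\textbf{Main obstacle: the first step.} The \textsf{Shad-RABE} game discloses $\{\mathsf{sk}^*_i\}$ unconditionally, whereas the \textsf{RABE-PriVCD} game discloses keys only after a valid certificate. The reduction must evaluate $\mathsf{Verify}$ itself, which is possible here because it chose $\mathsf{ske.sk}$, but it must also handle the ordering: the \textsf{Shad-RABE} challenger sends keys together with the ciphertext, before $\mathcal{A}$ deletes. $\mathcal{B}$ resolves this by simply withholding the keys until after verification.

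There is a further subtlety. The quantum register $\mathsf{ske.ct}^*$ must not be needed by $\mathcal{B}$ after handoff, and post-challenge registration queries should not occur, since the experiment defines honest keys relative to the final $\mathsf{aux}$. Both conditions are satisfied by the game's phase structure. Finally, we must check that the message space of $\prod_{\textsf{Shad-RABE}}$ contains the key space $\{0,1\}^{\ell_k}$ of Theorem~\ref{thm:otske-cd-existence}, so we set $\ell_m:=\ell_k$.
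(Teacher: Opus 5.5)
Your proposal is correct and follows essentially the same route as the paper. First you switch to the simulated \textsf{Shad-RABE} world via a reduction that samples $\mathsf{ske.sk}$ itself, checks the certificate itself and withholds the revealed keys until then (the paper's Lemma~\ref{lem:real-to-hybrid}). Then you reduce $\mathsf{Hyb}(0)$ versus $\mathsf{Hyb}(1)$ to \textsf{OT-CD}, computing $\mathsf{Reveal}$ only after $\mathsf{ske.sk}$ is released (Lemma~\ref{lem:hybrid-indistinguishability}).

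Your version is slightly cleaner: you correctly keep $\mathsf{vk}^*=\mathsf{ske.sk}$ on the challenger's side. The paper's write-up of $\mathsf{Hyb}(b)$ instead hands $\mathsf{vk}^*$ to the adversary and encrypts $\mu_0$ rather than $\mu_b$; both are slips.
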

\begin{proof}

Let $\mathcal{A}$ be a \textsf{QPT} adversary and $b$ be a bit. We define the following hybrid game $\textsf{Hyb}(b)$.

\begin{description}\setlength\itemsep{0.5em}

\item[$\bullet$] $\textsf{Hyb}(b)$: This corresponds to a modification of the original experiment $\mathsf{Exp}^{\textsf{RABE-PriVCD}}_{\mathcal{A}}$ $(\lambda, b)$, where actual arguments are replaced with simulated ones. The simulation is carried out as follows:\vspace{0.2cm}

\begin{description}\setlength\itemsep{0.5em}
    \item[$-$] \textbf{Setup:} The challenger samples \(\mathsf{crs} \leftarrow \mathsf{Shad\text{-}RABE}.\mathsf{Setup}(1^{\lambda},1^{\tau})\) and initializes the internal state as  
    \begin{align*}
        & \mathsf{aux} \leftarrow \bot ,~\mathsf{mpk} \leftarrow \bot ,~D \leftarrow \emptyset,\\
        &\mathsf{ctr} \leftarrow 0,~C \leftarrow \emptyset,~H \leftarrow \emptyset.
    \end{align*}

    \item[$-$] \textbf{Query Phase:} The challenger responds to the adversary's queries using the simulation algorithms:\vspace{0.2cm}
    
    \begin{itemize}[leftmargin=*]
        \item[(a)] For a corrupted key query, upon receiving $(\mathsf{pk},P)$ from the adversary $\mathcal{A}$, the challenger computes
        \begin{align*}
            &(\mathsf{mpk}',\mathsf{aux}') \leftarrow \mathsf{Shad\text{-}RABE}.\mathsf{SimRegPK}(\mathsf{crs},\mathsf{aux},\mathsf{pk},P, \textsf{B}),
        \end{align*}
        and updates the states as
        \begin{align*}
            &\mathsf{aux}\leftarrow\mathsf{aux}',~D[\mathsf{pk}] \leftarrow D[\mathsf{pk}] \cup \{P\},\\
            &\mathsf{mpk}\leftarrow\mathsf{mpk}',~C\leftarrow C\cup\{\mathsf{pk}\}.
        \end{align*}
        
        \item[(b)] For an honest key query with $P \in \mathcal{P}$, the challenger increments $\mathsf{ctr}\leftarrow\mathsf{ctr}+1$, samples 
        \[
            (\mathsf{pk}_{\mathsf{ctr}},\mathsf{B}) \leftarrow 
            \mathsf{Shad\text{-}RABE}.\mathsf{SimKeyGen}(\mathsf{crs},\mathsf{aux}, P, \mathsf{B}),
        \]
        and computes 
        \begin{align*}
            &(\mathsf{mpk}',\mathsf{aux}') \leftarrow 
            \mathsf{Shad\text{-}RABE}.\mathsf{SimRegPK}(\mathsf{crs},\mathsf{aux},\mathsf{pk},P, \textsf{B}).
        \end{align*}
        The states are then updated as follows:
        \begin{align*}
            &\mathsf{mpk}\leftarrow\mathsf{mpk}',~H\leftarrow H \cup \{\mathsf{pk}_{\mathsf{ctr}}\}\\
            &\mathsf{aux}\leftarrow\mathsf{aux}',~D[\mathsf{pk}_{\mathsf{ctr}}]\leftarrow D[\mathsf{pk}_{\mathsf{ctr}}]\cup\{P\}.
        \end{align*}
        
        \item[(c)] On corruption of an honest index $i \in [1,\mathsf{ctr}]$ with $\mathsf{pk}_i \in H$, the challenger computes
        \begin{align*}
            &\mathsf{sk}_i \leftarrow \mathsf{Shad\text{-}RABE}.\mathsf{SimCorrupt}(\mathsf{crs}, \mathsf{pk}_i, \textsf{B}),
        \end{align*}
        updates
        \begin{align*}
            &H \leftarrow H \setminus \{\mathsf{pk}_i\},~C \leftarrow C \cup \{\mathsf{pk}_i\},
        \end{align*}
        and returns $\mathsf{sk}_i$ to $\mathcal{A}$.
    \end{itemize}

    \item[$-$] \textbf{Challenge:} On receiving $(\mu_0,\mu_1,X^*)$ from $\mathcal{A}$, the challenger generates
    \begin{align*}
        &\mathsf{ske.sk} \leftarrow \mathsf{SKE\text{-}CD}.\mathsf{KeyGen}(1^\lambda),
    \end{align*}
    and computes
    \begin{align*}
       \mathsf{srabe.ct}^* &\leftarrow \mathsf{Shad\text{-}RABE}.\mathsf{SimCT}(\mathsf{mpk},\mathsf{B}, X^*),\\ 
       \mathsf{ske.ct}^* &\leftarrow \mathsf{SKE\text{-}CD}.\mathsf{Encrypt}(\mathsf{ske.sk},\mu_0). 
    \end{align*}
    It then outputs the challenge pair
    \begin{align*}
        &\mathsf{ct}^* := (\mathsf{srabe.ct}^*, \mathsf{ske.ct}^*),~\mathsf{vk}^* := \mathsf{ske.sk}
    \end{align*}
    to the adversary.
    \item[$-$] \textbf{Deletion:} After ciphertext deletion, for each uncorrupted $\mathsf{pk}_i \in H$, the challenger computes
    \begin{align*}
        \mathsf{sk}_i &\leftarrow \mathsf{Shad\text{-}RABE}.\mathsf{Reveal}(\mathsf{pk}_i,\textsf{B}, \mathsf{srabe.ct}^*,\textsf{ske.sk}),\\
        \mathsf{hsk}_i &\leftarrow \mathsf{Shad\text{-}RABE}.\mathsf{Update}(\mathsf{crs},\mathsf{aux},\mathsf{pk}_i),
    \end{align*}
    and returns both values to $\mathcal{A}$. 
\end{description}

\end{description}

\noindent We now state two lemmas that complete the proof. \vspace{0.2cm}

\begin{lemma}\label{lem:real-to-hybrid}
If \(\textsf{Shad-RABE}\) is secure, then for every \textsf{QPT} adversary \(\mathcal{A}\), there is a negligible
\(\mathsf{negl}\) such that
\begin{align*}
&\bigl|\Pr[\mathsf{Exp}^{\textsf{RABE-PriVCD}}_{\mathcal{A}}(\lambda,b)=1]
      - \Pr[\mathsf{Hyb}(b)=1]\bigr| \le \mathsf{negl}(\lambda).
\end{align*}
\end{lemma}

\begin{proof}
Suppose that $\mathcal{A}$ distinguishes the two experiments with a non-negligible advantage. We construct a reduction $\mathcal{B}$ that breaks the security of $\textsf{Shad-RABE}$. The challenger of $\mathsf{Exp}^{\textsf{Shad-RABE}}_{\mathcal{B}}(\lambda,b')$ for $b'\in \{0,1\}$ provides $\mathsf{crs}$ to $\mathcal{B}$, which forwards it to $\mathcal{A}$ and initializes $C, H, D$ as in the security game. \vspace{0.2cm}

\noindent All queries of $\mathcal{A}$ are answered by relaying them to the $\textsf{Shad-RABE}$ challenger through its registration and corruption interfaces, with $\mathcal{B}$ maintaining $C, H, D$ consistently.\vspace{0.2cm}

\noindent Upon receiving the challenge input $(\mu_0,\mu_1,X^*)$ from $\mathcal{A}$, $\mathcal{B}$ samples $\mathsf{ske.sk} \leftarrow \mathsf{SKE\text{-}CD}.\mathsf{KeyGen}(1^\lambda)$ and submits $(X^*,\mathsf{ske.sk})$ to its own challenger. It obtains $(\mathsf{srabe.ct}^*, \{\mathsf{sk}_i^*\}_{\mathsf{pk}_i\in H}, \{\mathsf{hsk}_i\}_{\mathsf{pk}_i\in H})$ from the challenger, 
then computes 
\begin{align*}
    &\mathsf{ske.ct}^* \leftarrow \mathsf{SKE\text{-}CD}.\mathsf{Encrypt}(\mathsf{ske.sk},\mu_b),
\end{align*}
and returns $(\mathsf{srabe.ct}^*,\mathsf{ske.ct}^*)$ together with $\mathsf{vk}^* = \mathsf{ske.sk}$ to $\mathcal{A}$.\vspace{0.2cm}

\noindent If $\mathcal{A}$ later produces a certificate $\mathsf{cert}$, $\mathcal{B}$ verifies it using $\mathsf{SKE\text{-}CD}.\mathsf{Verify}(\mathsf{ske.sk},\mathsf{cert})$. On success, $\mathcal{B}$ forwards 
$\{\mathsf{sk}_i^*\}_{\mathsf{pk}_i\in H}$ and $\{\mathsf{hsk}_i\}_{\mathsf{pk}_i\in H}$ to $\mathcal{A}$; otherwise it returns $\perp$. Finally, $\mathcal{B}$ outputs whatever $\mathcal{A}$ outputs.\vspace{0.2cm}

\noindent If $b'=0$, the simulation is perfect and matches $\mathsf{Exp}^{\textsf{RABE-CD}}_{\mathcal{A}}(\lambda,b)$. If $b'=1$, the challenger runs the simulation algorithms, and the view of $\mathcal{A}$ is identical to $\mathsf{Hyb}(b)$. Thus, any non-negligible distinguishing advantage of $\mathcal{A}$ implies a break of $\textsf{Shad-RABE}$, completing the proof.
\end{proof}
\vspace{0.2cm}

\begin{lemma}\label{lem:hybrid-indistinguishability}
If $\mathsf{SKE\text{-}CD}$ satisfies one-time certified deletion, then for every \textsf{QPT} adversary $\mathcal{A},$ there is a negligible \(\mathsf{negl}\) such that
\[
\bigl|\Pr[\mathsf{Hyb}(0)=1]-\Pr[\mathsf{Hyb}(1)=1]\bigr| \le \mathsf{negl}(\lambda).
\]
\end{lemma}

 \begin{proof}
 Suppose \(\mathcal{A}\) distinguishes \(\mathsf{Hyb}(0)\) from \(\mathsf{Hyb}(1)\). We construct \(\mathcal{B}\) against one-time certified deletion in the experiment $\mathsf{Exp}^{\mathsf{OT-CD}}_{\textsf{SKE-CD}, \mathcal{B}}(\lambda, b')$ for $b'\in\{0,1\}.$

\noindent The reduction generates \(\mathsf{crs}\leftarrow \mathsf{Setup}(1^\lambda,1^{|\mathcal{U}|})\), forwards it to \(\mathcal{A}\), and answers all registration and corruption queries using the \(\textsf{Shad-RABE}\) simulation algorithms, keeping the same state as in \(\mathsf{Hyb}(\cdot)\). On receiving the challenge $(\mu_0,\mu_1,X^*)$ from $\mathcal{A}$, it submits $(\mu_0,\mu_1)$ to its own challenger and obtains $\mathsf{ske.ct}^*$ (encrypting $\mu_{b'}$). It then prepares $\mathsf{ct}^*_{\textsf{Shad}} \leftarrow \textsf{Shad-RABE.SimCT}(\mathsf{mpk},\mathsf{B},X^*)$ and returns the combined challenge ciphertext $\mathsf{ct}^*=(\mathsf{srabe.ct}^*,\mathsf{ske.ct}^*)$.\vspace{0.3cm}

\noindent If $\mathcal{A}$ outputs a certificate $\mathsf{cert}$, $\mathcal{B}$ forwards it to its challenger. On success, the challenger releases $\mathsf{ske.sk}$, which $\mathcal{B}$ uses to compute 
$\mathsf{sk}_i \leftarrow$ $\textsf{Shad-RABE.Reveal}(\mathsf{pk}_i,$ $\mathsf{B},$ $\mathsf{srabe.ct}^*,$ $\mathsf{ske.sk})$ for each $\mathsf{pk}_i \in H$, and forwards these to $\mathcal{A}$; otherwise $\perp$ is returned.\vspace{0.3cm}

\noindent When $b'=0$, the reduction perfectly simulates $\mathsf{Hyb}(0)$; when $b'=1$, the view coincides with $\mathsf{Hyb}(1)$. \vspace{0.3cm}

\noindent Thus, a distinguisher between $\mathsf{Hyb}(0)$ and $\mathsf{Hyb}(1)$ breaks the one-time certified deletion property of $\mathsf{SKE\text{-}CD}$, which concludes the lemma.
\end{proof}

\noindent Combining Lemma~\ref{lem:real-to-hybrid} and Lemma~\ref{lem:hybrid-indistinguishability}, we obtain
\begin{align*}
    \big|\Pr[\mathsf{Exp}^{\textsf{RABE-PriVCD}}_{\mathcal{A}}(\lambda,0)=1]-\Pr[\mathsf{Hyb}(0)=1]\big| &\leq \mathsf{negl}(\lambda),\\
    \big|\Pr[\mathsf{Hyb}(0)=1]-\Pr[\mathsf{Hyb}(1)=1]\big| &\leq \mathsf{negl}(\lambda),\\
    \big|\Pr[\mathsf{Hyb}(1)=1]-\Pr[\mathsf{Exp}^{\textsf{RABE-PriVCD}}_{\mathcal{A}}(\lambda,1)=1]\big| &\leq \mathsf{negl}(\lambda).
\end{align*}
\noindent Together, these bounds establish Theorem \ref{thm:rabecd-private security}.
\end{proof}

\subsection{Instantiation of \textsf{RABE-PriVCD}}\label{Instant-RABE-PriVCD}

\noindent We now describe a lattice-based instantiation of our \textsf{RABE-PriVCD}. The construction follows the generic construction developed in the preceding sections and combines state-of-the-art lattice-based primitives in order to achieve post-quantum security. In particular, the instantiation of \textsf{RABE-PriVCD} relies on two fundamental building blocks, both of which can be realized from standard and well-studied lattice assumptions.  

\begin{description}\setlength\itemsep{0.5em}
    \item[$\bullet$] \textbf{\textsf{Shad-RABE} Instantiation.} We instantiate the \textsf{Shad-RABE} component using lattice-based primitive as described in Section~\ref{Inst:Shadow-RABE}. The instantiation of \textsf{Shad-RABE} is secure under the $\ell$-succinct \textsf{LWE}, plain \textsf{LWE}, and equivocal \textsf{LWE} assumptions.

    \item[$\bullet$] \textbf{Symmetric Encryption with Certified Deletion.}  
    From Theorem~\ref{thm:otske-cd-existence}, we invoke the existence of an unconditionally secure one-time symmetric-key encryption scheme with certified deletion (\(\mathsf{SKE\text{-}CD}\)) \cite{broadbent2020quantum}. This primitive guarantees one-time certified deletion security, which is essential to the design of \textsf{RABE-PriVCD}.  
\end{description}  

\noindent Putting the above components together, we obtain a lattice-based instantiation of \textsf{RABE-PriVCD}. The resulting scheme achieves post-quantum security under the $\ell$-succinct \textsf{LWE}, plain \textsf{LWE}, and equivocal \textsf{LWE} assumptions, thereby demonstrating the plausibility of secure certified-deletion functionalities in the lattice setting. \vspace{0.2cm}

\section{\textsf{RABE} with Publicly Verifiable Certified Deletion}\label{subsection-RABE-PubVCD}

\vspace{0.2cm}
In this section, we provide a generic construction of registered attribute-based encryption with publicly verifiable certified deletion (\textsf{RABE-PubVCD}) and then describe its lattice-based instantiation. Our construction integrates three core cryptographic primitives: witness encryption (\textit{cf.} Section~\ref{def:we-framework}), one-shot signatures (\textit{cf.} Section \ref{def:oss-framework}), and the \textsf{Shad-RABE} protocol (\textit{cf.} Section \ref{generic-construct:Shadow-RABE}). \vspace{0.2cm}

\subsection{Generic Construction of \textsf{RABE-PubVCD}.}\label{generic-construct-RABE-PubVCD}

\noindent We construct a \textsf{RABE-PubVCD} protocol, denoted by $\prod^{\mathsf{RABE}}_{\textsf{PubVCD}} =$ $ (\mathsf{Setup},$ $\mathsf{KeyGen},$ $\mathsf{RegPK},$ $\mathsf{Encrypt},$ $\mathsf{Update},$ $\mathsf{Decrypt},$ $\mathsf{Delete},$ $\mathsf{Verify})$, 
from the following building blocks: 
a \textsf{Shad-RABE} scheme, specified by the algorithms $\prod_{\textsf{Shad-RABE}}=$ \textsf{Shad-RABE}.($\mathsf{Setup},$ $\mathsf{KeyGen},$ $\mathsf{RegPK},$ $\mathsf{Encrypt},$ $\mathsf{Update},$ $\mathsf{Decrypt},$ $\mathsf{SimKeyGen},$ $\mathsf{SimRegPK},$ $\mathsf{SimCorrupt},$ $\mathsf{SimCT},$ $\mathsf{Reveal}$), a witness encryption scheme $\prod_{\mathsf{we}} =$ $\textsf{WE}.(\mathsf{Encrypt},$ $\mathsf{Decrypt})$, and a one-shot signature scheme $\prod_{\mathsf{oss}} =$ $\textsf{OSS}.(\mathsf{Setup},$ $\mathsf{KeyGen},$ $\mathsf{Sign},$ $\mathsf{Verify})$. The detailed algorithms of \textsf{RABE-PubVCD} are specified below.\vspace{0.2cm}

\begin{description}\setlength\itemsep{0.5em}
\item[]\textsf{Setup}$(1^{\lambda}, 1^{{\tau}})$:
\leavevmode
\begin{itemize}
  \item[$-$] Generate $\mathsf{crs} \leftarrow \textsf{Shad-RABE}.\mathsf{Setup}(1^{\lambda}, 1^{{\tau}})$.
  \item[$-$] Output $\mathsf{crs}$.
\end{itemize}

\item[]\textsf{KeyGen}$(\mathsf{crs}, \mathsf{aux}, P)$:
\leavevmode
\begin{itemize}
  \item[$-$] Generate $(\mathsf{srabe.pk}, \mathsf{srabe.sk}) \leftarrow \textsf{Shad-RABE}.\mathsf{KeyGen}(\mathsf{crs}, \mathsf{aux}, P)$.
  \item[$-$] Output $(\mathsf{pk}, \mathsf{sk}) := (\mathsf{srabe.pk}, \mathsf{srabe.sk})$.
\end{itemize}

\item[]\textsf{RegPK}$(\mathsf{crs}, \mathsf{aux}, \mathsf{pk}, P)$:
\leavevmode
\begin{itemize}
  \item[$-$] Parse $\mathsf{pk} = \mathsf{srabe.pk}$.
  \item[$-$] Compute $(\mathsf{srabe.mpk}, \mathsf{aux}') \leftarrow \textsf{Shad-RABE}.\mathsf{RegPK}(\mathsf{crs}, \mathsf{aux}, \mathsf{srabe.pk}, P)$.
  \item[$-$] Output $(\mathsf{mpk}, \mathsf{aux}') := (\mathsf{srabe.mpk}, \mathsf{aux}')$.
\end{itemize}

\item[]\textsf{Encrypt}$\left( \mathsf{Sndr}(\mathsf{mpk}, X, \mu), \mathsf{Rcvr} \right)$:
\leavevmode
\begin{itemize}
  \item[$-$] $\mathsf{Sndr}$ parses $\mathsf{mpk} = \mathsf{srabe.mpk}$.
  \item[$-$] $\mathsf{Sndr}$ generates $\mathsf{oss.crs} \leftarrow \mathsf{OSS}.\mathsf{Setup}(1^{\lambda})$ and sends it to $\mathsf{Rcvr}$.
  \item[$-$] $\mathsf{Rcvr}$ generates $(\mathsf{oss.pk}, \mathsf{oss.sk}) \leftarrow \mathsf{OSS}.\mathsf{KeyGen}(\mathsf{oss.crs})$, sends $\mathsf{oss.pk}$ to $\mathsf{Sndr}$, and stores $\mathsf{oss.sk}$.
  \item[$-$] $\mathsf{Sndr}$ computes the witness encryption ciphertext $\mathsf{we.ct} \leftarrow$ $\mathsf{WE}.\mathsf{Encrypt}(1^{\lambda},$ $x, \mu)$, where
  \[
     x := \{ \exists \sigma \; \text{s.t.} \; \mathsf{OSS}.\mathsf{Verify}(\mathsf{oss.crs}, \mathsf{oss.pk}, \sigma, 0) = \top \}.
  \]
  \item[$-$] $\mathsf{Sndr}$ computes $\mathsf{srabe.ct} \leftarrow \textsf{Shad-RABE}.\mathsf{Encrypt}(\mathsf{srabe.mpk}, X, \mathsf{we.ct})$.
  \item[$-$] $\mathsf{Sndr}$ sends $\mathsf{srabe.ct}$ to $\mathsf{Rcvr}$ and outputs $\mathsf{vk} := (\mathsf{oss.crs}, \mathsf{oss.pk})$.
  \item[$-$] $\mathsf{Rcvr}$ outputs $\mathsf{ct} := (\mathsf{srabe.ct}, \mathsf{oss.sk})$.
\end{itemize}

\item[]\textsf{Update}$(\mathsf{crs}, \mathsf{aux}, \mathsf{pk})$:
\leavevmode
\begin{itemize}
  \item[$-$] Parse $\mathsf{pk} = \mathsf{srabe.pk}$.
  \item[$-$] Generate $\mathsf{srabe.hsk} \leftarrow \textsf{Shad-RABE}.\mathsf{Update}(\mathsf{crs}, \mathsf{aux}, \mathsf{srabe.pk})$.
  \item[$-$] Output $\mathsf{hsk} := \mathsf{srabe.hsk}$.
\end{itemize}

\item[]\textsf{Decrypt}$(\mathsf{sk}, \mathsf{hsk},{X}, \mathsf{ct})$:
\leavevmode
\begin{itemize}
  \item[$-$] Parse $\mathsf{sk} = \mathsf{srabe.sk}$, $\mathsf{hsk} = \mathsf{srabe.hsk}$, and $\mathsf{ct} = (\mathsf{srabe.ct}, \mathsf{oss.sk})$.
  \item[$-$] Compute $\sigma \leftarrow \mathsf{OSS}.\mathsf{Sign}(\mathsf{oss.sk}, 0)$.
  \item[$-$] Compute $\mathsf{we.ct}' \leftarrow \textsf{Shad-RABE}.\mathsf{Decrypt}(\mathsf{srabe.sk}, \mathsf{srabe.hsk},{X}, \mathsf{srabe.ct})$.
  \item[$-$] If $\mathsf{we.ct}' \in \{\perp, \mathsf{GetUpdate}\}$, output $\mathsf{we.ct}'$.
  \item[$-$] Otherwise, compute and output $\mu \leftarrow \mathsf{WE}.\mathsf{Decrypt}(\mathsf{we.ct}', \sigma)$.
\end{itemize}

\item[]\textsf{Delete}$(\mathsf{ct})$:
\leavevmode
\begin{itemize}
  \item[$-$] Parse $\mathsf{ct} = (\mathsf{srabe.ct}, \mathsf{oss.sk})$.
  \item[$-$] Compute $\sigma \leftarrow \mathsf{OSS}.\mathsf{Sign}(\mathsf{oss.sk}, 1)$.
  \item[$-$] Output certificate $\mathsf{cert} := \sigma$.
\end{itemize}

\item[]\textsf{Verify}$(\mathsf{vk}, \mathsf{cert})$:
\leavevmode
\begin{itemize}
  \item[$-$] Parse $\mathsf{vk} = (\mathsf{oss.crs}, \mathsf{oss.pk})$ and $\mathsf{cert} = \sigma$.
  \item[$-$] Compute and output $b \leftarrow \mathsf{OSS}.\mathsf{Verify}(\mathsf{oss.crs}, \mathsf{oss.pk}, \sigma, 1)$.
\end{itemize}
\end{description}

\noindent\textbf{Correctness.} The correctness of decryption and verification follows directly from the correctness of $\prod_{\mathsf{we}}$ and $\prod_{\mathsf{oss}}$.\vspace{0.2cm}

\subsection{Proof of Security}
We prove the security of our construction by showing the following theorem.

\begin{theorem}\label{thm:rabe-cd-security}
If the \textsf{Shad-RABE} scheme $\prod_{\textsf{Shad-RABE}}$ is secure, as discussed in Section \ref{def:shadow-rabe-security}, the witness encryption scheme $\prod_{\mathsf{WE}}$ satisfies extractable security, and the one-shot signature scheme $\prod_{\mathsf{OSS}}$ is secure, then the $\prod^{\textsf{RABE}}_{\textsf{PubVCD}}$ construction achieves certified deletion security with public verification.
\end{theorem}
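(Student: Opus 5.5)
The plan mirrors the proof of Theorem~\ref{thm:rabecd-private security}: first replace the real \textsf{Shad-RABE} components by simulated ones, and then argue about the inner witness-encryption layer directly. The inner argument combines extractability of $\prod_{\mathsf{WE}}$ with the one-shot property of $\prod_{\mathsf{OSS}}$.

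\emph{Step 1: switching to the simulated world.} Define $\mathsf{Hyb}(b)$ exactly as in the proof of Theorem~\ref{thm:rabecd-private security}, with the following changes.
\begin{itemize}
\item Honest keys are produced by $\mathsf{SimKeyGen}$, and all registrations use $\mathsf{SimRegPK}$.
\item Corruptions are answered with $\mathsf{SimCorrupt}$.
\item The challenge is $\mathsf{srabe.ct}^*\leftarrow\mathsf{SimCT}(\mathsf{mpk},\mathsf{B},X^*)$.
\item After a valid certificate, honest keys are produced as $\mathsf{Reveal}(\mathsf{pk}_i,\mathsf{B},\mathsf{srabe.ct}^*,\mathsf{we.ct}^*)$, where $\mathsf{we.ct}^*\leftarrow\mathsf{WE}.\mathsf{Encrypt}(1^\lambda,x,\mu_b)$.
\end{itemize}
Indistinguishability of $\mathsf{Exp}^{\textsf{RABE-PubVCD}}_{\mathcal{A}}(\lambda,b)$ and $\mathsf{Hyb}(b)$ follows by the same reduction as Lemma~\ref{lem:real-to-hybrid}. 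The reduction $\mathcal{B}$ relays registration and corruption queries to the \textsf{Shad-RABE} challenger. It plays the sender itself: it samples $\mathsf{oss.crs}$, runs $\mathsf{OSS}.\mathsf{KeyGen}$ on the receiver's behalf (or lets $\mathcal{A}$ do so and receives $\mathsf{oss.pk}$), computes $\mathsf{we.ct}^*$, and submits $(X^*,\mathsf{we.ct}^*)$ as its challenge message. It checks the certificate publicly with $\mathsf{OSS}.\mathsf{Verify}$, and on success forwards the keys $\{\mathsf{sk}_i^*\}$ and $\{\mathsf{hsk}_i\}$. Giving $\mathsf{vk}$ to $\mathcal{A}$ causes no difficulty, since $\mathcal{B}$ knows it.

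\emph{Step 2: $\mathsf{Hyb}(0)\approx\mathsf{Hyb}(1)$.} In $\mathsf{Hyb}(b)$, everything $\mathcal{A}$ sees before deletion is independent of $\mu_b$ except through $\mathsf{we.ct}^*$. This is because $\mathsf{SimCT}$ does not take the message as input, and the oss key material depends only on $\mathsf{oss.crs}$. After deletion, $\mu_b$ enters only via $\mathsf{we.ct}^*$ (through the revealed keys, which $\mathsf{Reveal}$ computes from $\mathsf{we.ct}^*$). The revealed keys can therefore be computed from $\mathsf{we.ct}^*$ together with $\mathsf{B}$, which the challenger holds.

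Suppose $\mathcal{A}$ distinguishes with advantage $1/p(\lambda)$. Define a WE adversary $\mathcal{A}'$ whose auxiliary state $\mathsf{aux}$ is the joint state of $\mathcal{A}$ and the challenger just before the challenge. That state includes $\mathcal{A}$'s internal registers, $\mathsf{B}$, $\mathsf{oss.crs}$, $\mathsf{oss.pk}$ and the receiver's $\mathsf{oss.sk}$. On input $\mathsf{we.ct}$, $\mathcal{A}'$ completes the experiment: it runs $\mathcal{A}$ to obtain $\mathsf{cert}$, outputs $0$ unless $\mathsf{OSS}.\mathsf{Verify}(\mathsf{oss.crs},\mathsf{oss.pk},\mathsf{cert},1)=\top$, and otherwise supplies the revealed keys and outputs $\mathcal{A}$'s guess. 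Then $\mathcal{A}'$ distinguishes $\mathsf{WE}.\mathsf{Encrypt}(1^\lambda,x,\mu_0)$ from $\mathsf{WE}.\mathsf{Encrypt}(1^\lambda,x,\mu_1)$ with advantage $1/p$.

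By extractability, a QPT extractor $\mathcal{E}(1^\lambda,x,\mathsf{aux})$ outputs a signature $\sigma_0$ with $\mathsf{OSS}.\mathsf{Verify}(\cdot,\cdot,\sigma_0,0)=\top$ with probability at least $1/q(\lambda)$. The difficulty is that we also need a valid $\sigma_1$ from the same run.

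\emph{The main obstacle} is obtaining both signatures simultaneously, since extraction is applied to a state that does not yet contain the certificate. My first attempt is to move the certificate check before extraction. Let $\mathcal{A}$ run until it outputs $\mathsf{cert}$, measure $\mathsf{cert}$ and check $\mathsf{OSS}.\mathsf{Verify}(\cdot,1)=\top$, and take $\mathsf{aux}$ to be the residual state together with $\mathsf{B}$. Because $\mathcal{A}$ interacts with $\mathsf{we.ct}^*$ before deleting, this $\mathsf{aux}$ depends on the ciphertext. To break this dependence I would introduce an intermediate hybrid in which the deletion phase receives an encryption of the other message, i.e.\ re-index which $\mathsf{we.ct}$ is handed over at which point of the experiment. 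In that hybrid, conditioned on a valid certificate, the post-deletion distinguisher is an extractable-WE adversary whose auxiliary input already fixes $\sigma_1$. Running $\mathcal{E}$ then yields $\sigma_0$ with $1/\mathsf{poly}$ probability alongside $\sigma_1$. The resulting reduction outputs $(\mathsf{oss.pk},(0,\sigma_0),(1,\sigma_1))$ and breaks the security of $\prod_{\mathsf{OSS}}$, because $\mathsf{oss.pk}$ was produced by the adversarially controlled receiver from the honest $\mathsf{oss.crs}$.

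Carefully justifying that the certificate-conditioned advantage remains non-negligible is the delicate point, and I would handle it by a standard averaging argument over the pre-challenge state. Combining Steps 1 and 2 gives the theorem.
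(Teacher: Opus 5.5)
Your overall route is the paper's. Step~1 is its $\mathsf{Game}_0^{(b)}\approx\mathsf{Game}_1^{(b)}$ reduction to \textsf{Shad-RABE} security, and Step~2 is meant to be its extractability-plus-one-shot argument. The gap is in Step~2: you misdiagnose the obstacle and then patch it with a hybrid that does not work as stated.

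You claim that in $\mathsf{Hyb}(b)$ the adversary ``interacts with $\mathsf{we.ct}^*$ before deleting'', so the post-certificate state depends on the ciphertext. In the simulated world this is false. The only challenge-dependent object $\mathcal{A}$ receives before deletion is $\mathsf{srabe.ct}^*\leftarrow\mathsf{SimCT}(\mathsf{mpk},\mathsf{B},X^*)$, which does not take $\mathsf{we.ct}^*$ as input. The \textsf{OSS} material depends only on $\mathsf{oss.crs}$. Hence $\mathcal{A}$'s whole view, up to and including its certificate, is independent of $\mathsf{we.ct}^*$; the challenger could sample $\mathsf{we.ct}^*$ only after checking $\sigma_1$. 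The bit $\mu_b$ enters solely through $\mathsf{Reveal}(\mathsf{pk}_i,\mathsf{B},\mathsf{srabe.ct}^*,\mathsf{we.ct}^*)$. This is exactly why Step~1 exists, and it is what licenses the paper's choice of $\mathsf{st}$ as $\mathcal{A}$'s state right after a valid certificate.

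Your intermediate hybrid, in which ``the deletion phase receives an encryption of the other message'', is then literally $\mathsf{Hyb}(1-b)$, since no other copy of $\mathsf{we.ct}$ appears anywhere. Proving it indistinguishable from $\mathsf{Hyb}(b)$ is the entire claim, so the step is circular. If you instead intend a separate pre-deletion witness ciphertext, none exists in $\mathsf{Hyb}(b)$. One could not be switched anyway, because before deletion $\mathcal{A}$ holds $\mathsf{oss.sk}$ and can sign $0$.

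The repair is direct.
\begin{enumerate}
\item Let $\mathcal{A}'$ take as auxiliary input the residual state of $\mathcal{A}$ after it outputs $\sigma_1$ with $\mathsf{OSS}.\mathsf{Verify}(\mathsf{oss.crs},\mathsf{oss.pk},\sigma_1,1)=\top$, together with $\sigma_1$, $\mathsf{B}$, $\mathsf{srabe.ct}^*$ and the rest of the challenger's state. On input $\mathsf{we.ct}$, $\mathcal{A}'$ computes the revealed and helper keys and outputs $\mathcal{A}$'s guess.
\item The same independence shows that $\Pr[\mathsf{ValidCert}]$ is identical for $b=0,1$. This is what justifies conditioning on $\mathsf{ValidCert}$ as the paper does.
\item Your averaging argument belongs over this post-certificate auxiliary state, not the pre-challenge one. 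If the conditional advantage is at least $1/p$, then with probability at least $1/(2p)$ the auxiliary state yields advantage at least $1/(2p)$.
\item On those states the extractor returns $\sigma_0$, giving the forgery $(\mathsf{oss.pk},(0,\sigma_0),(1,\sigma_1))$.
\end{enumerate}
The paper skips the averaging step, so once the independence argument is corrected your version would be slightly more careful than the paper's.
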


\begin{proof}
We establish security via a sequence of hybrid games. As a starting point, we present the real security experiment $\mathsf{Exp}^{\textsf{RABE-PubVCD}}_{\mathcal{A}}(\lambda, b)$, which we denote as $\mathsf{Game}_0^{(b)}$.

\begin{description}\setlength\itemsep{0.5em}
    \item[$\bullet$] $\mathsf{Game}_0^{(b)}:$ The experiment proceeds as follows:\vspace{0.2cm}
    
    \begin{description}\setlength\itemsep{0.5em}
        \item[$-$] The challenger runs $\textsf{Shad-RABE}.\mathsf{Setup}(1^{\lambda}, 1^{|\mathcal{U}|})$ to obtain the common reference string $\mathsf{crs}$, initializes its internal state, and forwards $\mathsf{crs}$ to the adversary $\mathcal{A}$.

        \item[$-$] The adversary $\mathcal{A}$ adaptively issues registration and corruption queries, possibly involving both honest and malicious keys. The challenger responds while updating the master public key $\mathsf{mpk}$ and its internal state \textsf{aux}.

        \item[$-$] $\mathcal{A}$ submits a pair of challenge messages $(\mu_0^*, \mu_1^*) \in \mathcal{M}^2$ and a challenge attribute set $X^* \subseteq \mathcal{U}$. Then, the challenger and adversary engage in the interactive encryption protocol:\vspace{0.2cm}
    
        \begin{itemize}\setlength\itemsep{0.5em}
            \item[(a)] The challenger samples $\mathsf{oss.crs} \leftarrow \mathsf{OSS}.\mathsf{Setup}(1^{\lambda})$ and sends it to $\mathcal{A}$.
        
            \item[(b)] The adversary generates a key pair $(\mathsf{oss.pk}, \mathsf{oss.sk})$ $\leftarrow$ $\mathsf{OSS}.\mathsf{KeyGen}$ $(\mathsf{oss.crs})$ and returns $\mathsf{oss.pk}$ to the challenger.
        
            \item[(c)] The challenger generates a ciphertext through witness encryption $\mathsf{we.ct}$ $\leftarrow$ $\mathsf{WE}.\mathsf{Encrypt}(1^{\lambda},$ $x,$ $\mu_b^*)$, where the statement \(x\) asserts the existence of a valid one-shot signature, \textit{i.e.,}
            \[
                x = \left\{\exists \sigma : \mathsf{OSS}.\mathsf{Verify}(\mathsf{oss.crs}, \mathsf{oss.pk}, \sigma, 0) = \top \right\}.
            \]
            
            \item[(d)] The challenger encrypts the witness ciphertext under the \textsf{Shad-RABE} scheme:
            \[
                \mathsf{srabe.ct} \leftarrow \textsf{Shad-RABE}.\mathsf{Encrypt}(\mathsf{mpk}, X^*, \mathsf{we.ct}).
            \]
        
            \item[(e)] The challenger sends $\mathsf{srabe.ct}$ to $\mathcal{A}$ and stores the verification key $\mathsf{vk}^* := (\mathsf{oss.crs}, \mathsf{oss.pk})$.
        \end{itemize}

        \item[$-$] $\mathcal{A}$ returns a deletion certificate $\mathsf{cert} = \sigma$ to the challenger.

        \item[$-$] The challenger verifies the certificate by computing $\mathsf{valid}$ $\leftarrow$ $\mathsf{OSS}.\mathsf{Verify}(\mathsf{oss.crs},$ $\mathsf{oss.pk},$ $\sigma,$ $1)$. If $\mathsf{valid} = 0$, it returns $\perp$ to $\mathcal{A}$ and halts. Otherwise, it provides all honest secret keys to $\mathcal{A}$.

        \item[$-$] Finally, the adversary outputs a bit $b' \in \{0, 1\}$.
    \end{description}
    
    \item[$\bullet$] $\mathsf{Game}_1^{(b)}:$ This game is identical to $\mathsf{Game}_0^{(b)}$, except that all operations involving the \textsf{Shad-ABE} scheme are performed using its simulation algorithms. Specifically:

    \begin{description}\setlength\itemsep{0.5em}
        \item[$-$] The challenger replaces the real key generation procedure with its simulator, \textit{i.e.,} honest key pairs are produced via $(\mathsf{pk}, \mathsf{B})$ $\leftarrow$ \textsf{Shad-RABE.SimKey}- $\textsf{Gen}(\mathsf{crs},$ $\mathsf{aux}, P, \textsf{B})$ instead of (\textsf{pk}, \textsf{sk}) $\leftarrow$ ${\textsf{Shad-RABE.KeyGen}(\textsf{crs}, \textsf{aux}, P)}.$

        \item[$-$] The honest public keys are registered using the simulation procedure $(\widetilde{\textsf{mpk}},$ $\widetilde{\mathsf{aux}})$ $\leftarrow$ $\textsf{Shad-RABE.SimRegPK}(\textsf{crs},$ $\textsf{aux},$ $\mathsf{pk},$ $P)$ in place of the real registration algorithm $(\mathsf{mpk}, \mathsf{aux}') \leftarrow \textsf{Shad-RABE.RegPK}(\mathsf{crs}, \mathsf{aux}, \mathsf{pk}, P).$

        \item[$-$] The challenge ciphertext is produced via the simulation procedure $\widetilde{\mathsf{ct}}$ $\leftarrow$ $\textsf{Shad-RABE.SimCT}(\widetilde{\mathsf{mpk}}, \mathsf{B}, X)$ rather than the original encryption algorithm $\mathsf{ct}\leftarrow \textsf{Shad-RABE.Encrypt}(\mathsf{mpk}, X, \mu).$

        \item[$-$] For corruption queries, the challenger answers with simulated secret keys, computed as $\widetilde{\mathsf{sk}} \leftarrow \textsf{Shad-RABE.SimCorrupt}(\mathsf{crs}, \mathsf{pk}, \mathsf{B})$, instead of disclosing the actual secret keys.

        \item[$-$] In the deletion phase, once verification succeeds, the challenger derives the honest secret keys using $\widetilde{\mathsf{sk}}$ $\leftarrow$ $\textsf{Shad-RABE.Reveal}(\mathsf{pk},$ $\mathsf{B},$ $\widetilde{\mathsf{ct}},$ $\mathsf{we.ct})$, ensuring they correspond to the correct challenge message.
    \end{description}    
\end{description}

\begin{lemma}\label{lem:rabe-cd-shadow-transition}
If $\prod_{\textsf{Shad-RABE}}$ is secure as defined in Section~\ref{def:shadow-rabe-security}, then for every bit $b \in \{0,1\}$,
\begin{align*}
    &\left|\Pr[\mathsf{Game}_0^{(b)} = 1] - \Pr[\mathsf{Game}_1^{(b)} = 1]\right| \leq \mathsf{negl}(\lambda).
\end{align*}
\end{lemma}
\begin{proof}
Assume, for the sake of contradiction, that
\[
\left| \Pr[\mathsf{Game}_0^{(b)} = 1] - \Pr[\mathsf{Game}_1^{(b)} = 1] \right|
\]
is non-negligible. We then construct a reduction adversary $\mathcal{B}_{\textsf{Shad-RABE}}$ that breaks the security of the \textsf{Shad-RABE} scheme $\prod_{\textsf{Shad-RABE}}$. Let $\mathcal{A}$ be the adversary that distinguishes between $\mathsf{Game}_0^{(b)}$ and $\mathsf{Game}_1^{(b)}$. The reduction $\mathcal{B}_{\textsf{Shad-RABE}}$ operates as follows:

\begin{description}\setlength\itemsep{0.5em}
    \item[$\bullet$] \textbf{Setup Phase:} Upon receiving a tuple $(\mathsf{pk}, P)$ from $\mathcal{A}$, $\mathcal{B}_{\textsf{Shad-RABE}}$ forwards the query to the \textsf{Shad-RABE} challenger and relays the response $(\mathsf{mpk}', \mathsf{aux}')$ to $\mathcal{A}$.

    \item[$\bullet$] \textbf{Registration Phase:} Upon receiving a tuple $(\mathsf{pk}, P)$ from $\mathcal{A}$, $\mathcal{B}_{\textsf{Shad-RABE}}$ forwards the query to the \textsf{Shad-RABE} challenger and relays the response $(\mathsf{mpk}', \mathsf{aux}')$ to $\mathcal{A}$.\vspace{0.2cm}
    
    \begin{itemize}\setlength\itemsep{0.5em}
        \item[$-$] {Corrupted key registration:} Whenever $\mathcal{A}$ submits a tuple $(\mathsf{pk}, P)$, the reduction $\mathcal{B}_{\textsf{Shad-RABE}}$ forwards this query to the \textsf{Shad-RABE} challenger and returns the challenger’s reply $(\mathsf{mpk}', \mathsf{aux}')$ to $\mathcal{A}$.
    
        \item[$-$] {Honest key registration:} Upon receiving an access policy $P$, $\mathcal{B}_{\textsf{Shad-RABE}}$ forwards $P$ to the \textsf{Shad-RABE} challenger and receives $(\mathsf{ctr}, \mathsf{mpk}', \mathsf{aux}', \mathsf{pk}_{\mathsf{ctr}})$, which it returns to $\mathcal{A}$.
    
        \item[$-$] {Corruption query:} When $\mathcal{A}$ issues a corruption query, $\mathcal{B}_{\textsf{Shad-RABE}}$ forwards the request to the \textsf{Shad-RABE} challenger and provides $\mathcal{A}$ with the simulated secret key returned.
    \end{itemize}

    \item[$\bullet$] \textbf{Challenge Phase:} $\mathcal{A}$ submits challenge messages $(\mu_0^*, \mu_1^*)$ and an attribute set $X^*$. The reduction proceeds as follows:\vspace{0.2cm}
    
    \begin{itemize}\setlength\itemsep{0.5em}
        \item[$-$] $\mathcal{B}_{\textsf{Shad-RABE}}$ executes $\mathsf{OSS}.\mathsf{Setup}(1^{\lambda})$ to obtain a common reference string $\mathsf{oss.crs}$, which is then given to $\mathcal{A}$.
    
        \item[$-$] $\mathcal{A}$ computes a key pair $(\mathsf{oss.pk}, \mathsf{oss.sk})$ via $\mathsf{OSS}.\mathsf{KeyGen}(\mathsf{oss.crs})$ and reveals only the public key $\mathsf{oss.pk}$.
    
        \item[$-$] To generate the challenge ciphertext, $\mathcal{B}_{\textsf{Shad-RABE}}$ invokes witness encryption as $\mathsf{we.ct} \gets \mathsf{WE}.\mathsf{Encrypt}(1^{\lambda}, x, \mu_b^*)$, with the statement
        \begin{align*}
            &x := \left\{ \exists \sigma \text{ s.t. } \mathsf{OSS}.\mathsf{Verify}(\mathsf{oss.crs}, \mathsf{oss.pk}, \sigma, 0) = \top \right\}.
        \end{align*}
        
        \item[$-$] $\mathcal{B}_{\textsf{Shad-RABE}}$ submits the challenge $(\mathsf{we.ct}, X^*)$ to the \textsf{Shad-RABE} challenger and obtains:
        \begin{align*}
            \mathsf{srabe.ct}^* &=
            \begin{cases}
                \textsf{Shad-RABE}.\textsf{Encrypt}(\mathsf{mpk}, X^*, \mathsf{we.ct}) &~~~~~~ \text{if } b = 0, \\[0.75em]
                \textsf{Shad-RABE}.\textsf{SimCT}(\mathsf{mpk}, \textsf{B},X^*) &~~~~~~ \text{if } b = 1,
            \end{cases}\\
            \mathsf{reveal} &=
            \begin{cases}
                \{\mathsf{sk}_i : \mathsf{pk}_i \in H\} & \text{if } b = 0, \\[0.75em]
                \textsf{Shad-RABE}.\textsf{Reveal}(\mathsf{pk}, \textsf{B}, \mathsf{srabe.ct}^*, \mathsf{we.ct}) &\text{if } b = 1.
            \end{cases}
        \end{align*}

    \item[$-$] $\mathcal{B}_{\textsf{Shad-RABE}}$ forwards $\mathsf{srabe.ct}^*$ to $\mathcal{A}$ and stores the verification key $\mathsf{vk}^* := (\mathsf{oss.crs}, \mathsf{oss.pk})$.
    \end{itemize}

    \item[$\bullet$] \textbf{Deletion Phase:} $\mathcal{A}$ submits a deletion certificate $\mathsf{cert} = \sigma$. The reduction verifies it as follows:
    \[
        \mathsf{OSS}.\mathsf{Verify}(\mathsf{oss.crs}, \mathsf{oss.pk}, \sigma, 1).
    \]
    If the verification fails, $\mathcal{B}_{\textsf{Shad-RABE}}$ returns $\perp$ to $\mathcal{A}$. Otherwise, it sends the value $\mathsf{reveal}$ obtained from the \textsf{Shad-RABE} challenger.

    \item[$\bullet$] \textbf{Output:} $\mathcal{B}_{\textsf{Shad-RABE}}$ outputs the bit returned by $\mathcal{A}$.
\end{description}

\noindent When the internal bit maintained by the \textsf{Shad-RABE} challenger is fixed to $0$, all computations are performed using the real algorithms of the scheme. In this case, $\mathcal{B}_{\textsf{Shad-RABE}}$ provides a perfect simulation of $\mathsf{Game}_0^{(b)}$ for the adversary $\mathcal{A}$. In contrast, when the internal bit is set to $1$, the challenger consistently employs the corresponding simulation procedures, and the resulting transcript is distributed identically to $\mathsf{Game}_1^{(b)}$. \vspace{0.2cm}

\noindent Since, by hypothesis, the adversary $\mathcal{A}$ can distinguish between $\mathsf{Game}_0^{(b)}$ and $\mathsf{Game}_1^{(b)}$ with non-negligible advantage, it follows that $\mathcal{B}_{\textsf{Shad-RABE}}$ can also distinguish the real from the simulated execution in the \textsf{Shad-RABE} security experiment with the same advantage. This directly contradicts the assumed security of $\prod_{\textsf{Shad-RABE}}$.
\end{proof}

\begin{lemma}\label{lem:rabe-cd-message-independence}
If the one-shot signature scheme $\prod_{\mathsf{OSS}}$ is secure and the witness encryption scheme $\prod_{\mathsf{WE}}$ achieves extractable security, then the distinguishing advantage between the two games is negligible, \textit{i.e.,}
\begin{align*}
    &\left|\Pr[\mathsf{Game}_1^{(0)} = 1] - \Pr[\mathsf{Game}_1^{(1)} = 1]\right| \leq \mathsf{negl}(\lambda).
\end{align*}
\end{lemma}
\begin{proof}
We prove this lemma by analyzing the adversary’s strategy with respect to the validity of the deletion certificate. Let $\mathsf{ValidCert}$ denote the event that adversary $\mathcal{A}$ provides a valid deletion certificate $\sigma$ such that $\mathsf{OSS}.\mathsf{Verify}(\mathsf{oss.crs},$ $\mathsf{oss.pk},$ $\sigma,$ $1) =$ $\top$. We can decompose the distinguishing advantage as:

\begin{align*}
&\left|\Pr\left[\mathsf{Game}_1^{(0)} = 1\right] - \Pr\left[\mathsf{Game}_1^{(1)} = 1\right]\right| \\
= &\left|\Pr\left[\mathsf{Game}_1^{(0)} = 1 \mid \mathsf{ValidCert}\right] \cdot \Pr\left[\mathsf{ValidCert}\right] + \Pr\left[\mathsf{Game}_1^{(0)} = 1 \mid \neg\mathsf{ValidCert}\right] \right.\\
&\left.\cdot \Pr\left[\neg\mathsf{ValidCert}\right]- \Pr\left[\mathsf{Game}_1^{(1)} = 1 \mid \mathsf{ValidCert}\right] \cdot \Pr\left[\mathsf{ValidCert}\right] \right.\\
&\left.- \Pr\left[\mathsf{Game}_1^{(1)} = 1 \mid \neg\mathsf{ValidCert}\right] \cdot \Pr\left[\neg\mathsf{ValidCert}\right]\right| \\
\leq &\left|\Pr\left[\mathsf{Game}_1^{(0)} = 1 \mid \mathsf{ValidCert}\right] - \Pr\left[\mathsf{Game}_1^{(1)} = 1 \mid \mathsf{ValidCert}\right]\right| \cdot \Pr\left[\mathsf{ValidCert}\right].
\end{align*}

\noindent The last inequality follows from the triangle inequality together with the observation that, whenever $\neg\mathsf{ValidCert}$ holds, the two games are indistinguishable since both return $\perp$ to $\mathcal{A}$.\vspace{0.2cm}

\noindent Assume that the above distinguishing advantage is non-negligible. Then there exists an infinite subset $I \subseteq \mathbb{N}$ and a polynomial $p(\cdot)$ such that for every $\lambda \in I$,

\begin{align*}
    &\left|\Pr\left[\mathsf{Game}_1^{(0)} = 1 \mid \mathsf{ValidCert}\right] - \Pr\left[\mathsf{Game}_1^{(1)} = 1 \mid \mathsf{ValidCert}\right]\right| \geq \frac{1}{p(\lambda)}\\
    &\Pr\left[\mathsf{ValidCert}\right] \geq \frac{1}{p(\lambda)}.
\end{align*}

\noindent Let $\mathsf{st}$ represent $\mathcal{A}$'s internal state after producing the valid certificate, conditioned on $\mathsf{ValidCert}$.From the extractable security of witness encryption, there exists a QPT extractor $\mathcal{E}$ and a polynomial $q(\cdot)$ such that, for every $\lambda \in I,$
\begin{align*}
    \Pr\left[\mathcal{E}(1^{\lambda}, x, \mathsf{st})=\sigma' \wedge \mathsf{OSS}.\mathsf{Verify}(\mathsf{oss.crs}, \mathsf{oss.pk}, \sigma', 0) = \top\right] \geq \frac{1}{q(\lambda)}.
\end{align*}

\noindent We construct adversary $\mathcal{B}_{\mathsf{oss}}$ that breaks the security of the one-shot signature scheme as follows. $\mathcal{B}_{\mathsf{oss}}$ is given the common reference string $\mathsf{oss.crs}$ from its challenger. It then internally runs $\mathsf{Game}_1^{(b)}$ with adversary $\mathcal{A}$, embedding $\mathsf{oss.crs}$ into the simulated encryption protocol. Once $\mathcal{A}$ outputs a certificate $\sigma$, $\mathcal{B}_{\mathsf{oss}}$ invokes the extractor $\mathcal{E}$ to derive another certificate $\sigma'$. At this point, $\mathcal{B}_{\mathsf{oss}}$ produces the forgery consisting of $(\mathsf{oss.pk}, 1, \sigma, 0, \sigma')$.\vspace{0.2cm}

\noindent The probability that $\mathcal{B}_{\mathsf{oss}}$ succeeds in this forgery is: 
\begin{align*}
    &\Pr\left[\mu_0^* \neq \mu_1^* \wedge \mathsf{OSS}.\mathsf{Verify}(\mathsf{oss.crs}, \mathsf{oss.pk}, \sigma', 0) = \top \right.\\
    &\left.~~~~~~~~~~~~~~~\wedge~\mathsf{OSS}.\mathsf{Verify}(\mathsf{oss.crs}, \mathsf{oss.pk}, \sigma, 1) = \top\right]\\
    =&\Pr\left[\mathsf{ValidCert}\right]\cdot\Pr\left[\mathcal{E}(1^{\lambda}, x, \mathsf{st})=\sigma' \wedge \mathsf{OSS}.\mathsf{Verify}(\mathsf{oss.crs}, \mathsf{oss.pk}, \sigma', 0) = \top\right]\\
    \geq &~\frac{1}{p(\lambda) q(\lambda)}.
\end{align*}
This contradicts the assumed security of $\prod_{\mathsf{oss}}$, thereby completing the proof of the lemma. \vspace{0.2cm}
\end{proof}

\noindent We thus conclude the proof of Theorem~\ref{thm:rabe-cd-security}.
\end{proof}

\subsection{Instantiation of \textsf{RABE-PubVCD}}  
We now describe a lattice instantiation of our \textsf{RABE-PubVCD}. The construction follows the generic construction developed in the preceding sections. In particular, the instantiation of \textsf{RABE-PubVCD} relies on three fundamental building blocks, \textsf{Shad-RABE} protocol, \textsf{WE} protocol, \textsf{OSS} scheme.  

\begin{description}\setlength \itemsep{0.5em}
    \item[$\bullet$] \textbf{\textsf{Shad-RABE} Instantiation:}  
    We realize the \textsf{Shad-RABE} component from lattice-based primitives, as detailed in Section~\ref{Inst:Shadow-RABE}. The resulting instantiation is secure under the $\ell$-succinct \textsf{LWE}, plain \textsf{LWE}, and equivocal \textsf{LWE} assumptions.  

    \item[$\bullet$] \textbf{\textsf{WE} Instantiation:}  
    For extractable witness encryption, candidate constructions have been proposed in the works of~\cite{vaikuntanathan2022witness,tsabary2022candidate} from the \textsf{LWE} assumption and its variants, offering significantly better efficiency than earlier $i\mathcal{O}$-based approaches.  

    \item[$\bullet$] \textbf{\textsf{OSS} Instantiation:}  
    One-shot signatures were originally introduced by Amos \textit{et al.}~\cite{amos2020one}. More recently, Shmueli \textit{et al.}~\cite{shmueli2025one} provided a standard-model construction secure under (sub-exponential) indistinguishability obfuscation ($i\mathcal{O}$) and \textsf{LWE}. Their work also establishes the first standard-model separation between classical and collapse-binding post-quantum commitments, and further introduces an oracle-based scheme with unconditional security.  
\end{description}  

\noindent Combining these ingredients yields a lattice-based instantiation of \textsf{RABE-PubVCD}. The resulting scheme enjoys post-quantum security under $\ell$-succinct \textsf{LWE}, plain \textsf{LWE}, and equivocal \textsf{LWE}, thereby demonstrating the feasibility of certified-deletion functionalities in the lattice setting. \vspace{0.2cm}

%% file: 7-8_RABE-CED.tex
\section{\textsf{RABE} with Privately Verifiable Certified Everlasting Deletion}\label{sec:RABE-PrivCED}
In this section, we propose a construction of registered attribute-based encryption with privately verifiable certified everlasting deletion (\textsf{RABE-PriVCED}). The construction extends the \textsf{RABE} framework by modifying the encryption algorithm to produce a hybrid quantum-classical ciphertext. Concretely, the encryption procedure samples random strings $x$ and $\theta$, and prepares the quantum state $\lvert x \rangle_{\theta}$, where each qubit is encoded in either the computational or the Hadamard basis according to $\theta_i$. The plaintext is concealed using a one-time pad derived from the positions where $\theta_i = 0$. Subsequently, the masked message together with the basis information $\theta$ is encrypted classically, whereas the quantum state can be irreversibly destroyed during the deletion process. In the following subsection, we provide a generic construction of \textsf{RABE-PriVCED}.


\subsection{Generic Construction of \textsf{RABE-PriVCED}}
We construct a registered attribute-based encryption scheme with privately verifiable certified everlasting deletion (\textsf{RABE-PriVCED}) protocol, denoted by 
$\prod^{\mathsf{RABE}}_{\textsf{PriVCED}} =$ (\textsf{Setup}, \textsf{KeyGen},\textsf{RegPK}, \textsf{Encrypt}, \textsf{Update}, \textsf{Decrypt}, \textsf{Delete}, \textsf{Ver}), 
from the registered attribute-based encryption scheme $\prod_{\textsf{RABE}} =$ (\textsf{Setup}, \textsf{KeyGen}, \textsf{RegPK}, \textsf{Encrypt}, \textsf{Update}, \textsf{Decrypt}). The detailed algorithms of \textsf{RABE-PriVCED} are specified below.

\begin{description}\setlength\itemsep{0.5em}
\item[]\textsf{Setup}($1^{\lambda}, 1^{{\tau}}$):
\leavevmode
\begin{itemize}
    \item[$-$] Generate $\mathsf{crs} \leftarrow \mathsf{RABE}.\mathsf{Setup}(1^{\lambda}, 1^{{\tau}})$.
    \item[$-$] Output $\mathsf{crs}$.
\end{itemize}

\item[]\textsf{KeyGen}(\textsf{crs}, \textsf{aux}):
\leavevmode
\begin{itemize}
    \item[$-$] Generate $(\mathsf{pk}, \mathsf{sk}) \leftarrow \mathsf{RABE}.\mathsf{KeyGen}(\mathsf{crs}, \mathsf{aux})$.
    \item[$-$] Output $(\mathsf{pk}, \mathsf{sk})$.
\end{itemize}

\item[]\textsf{RegPK}(\textsf{crs}, \textsf{aux}, \textsf{pk}, $P$):
\leavevmode
\begin{itemize}
    \item[$-$] Compute $(\mathsf{mpk}, \mathsf{aux}') \leftarrow \mathsf{RABE}.\mathsf{RegPK}(\mathsf{crs}, \mathsf{aux}, \mathsf{pk}, P)$.
    \item[$-$] Output $(\mathsf{mpk}, \mathsf{aux}')$.
\end{itemize}

\item[]\textsf{Update}(\textsf{crs}, \textsf{aux}, \textsf{pk}):
\leavevmode
\begin{itemize}
    \item[$-$] Compute $\mathsf{hsk} \leftarrow \mathsf{RABE}.\mathsf{Update}(\mathsf{crs}, \mathsf{aux}, \mathsf{pk})$.
    \item[$-$] Output $\mathsf{hsk}$.
\end{itemize}

\item[]\textsf{Encrypt}(\textsf{mpk}, X, $b$):
\leavevmode
\begin{itemize}
    \item[$-$] Sample $x, \theta \leftarrow \{0,1\}^{\lambda}$ uniformly at random.
    \item[$-$] Compute $\mathsf{ct}_{\textsf{rabe}} \leftarrow \mathsf{RABE}.\mathsf{Encrypt}\left(\mathsf{mpk}, X, \left(\theta, b \oplus \bigoplus_{i:~\theta_i=0} x_i\right)\right)$.
    \item[$-$] Output $\mathsf{ct} := \left(|x\rangle_{\theta}, \mathsf{ct}_{\textsf{rabe}}\right)$ and $\mathsf{vk} := (x, \theta)$.
\end{itemize}

\item[]\textsf{Decrypt}(\textsf{sk}, \textsf{hsk},{X}, \textsf{ct}):
\leavevmode
\begin{itemize}
    \item[$-$] Parse $\mathsf{ct} := (|x\rangle_{\theta}, \mathsf{ct}_{\textsf{rabe}})$.
    \item[$-$] Compute $(\theta, b') \leftarrow \mathsf{RABE}.\mathsf{Decrypt}(\mathsf{sk}, \mathsf{hsk},{X}, \mathsf{ct}_{\textsf{rabe}})$.
    \item[$-$] If $\mathsf{RABE}.\mathsf{Decrypt}$ returns $\perp$ or $\mathsf{GetUpdate}$, output the same.
    \item[$-$] Measure $|x\rangle_{\theta}$ in the $\theta$-basis to obtain $x$.
    \item[$-$] Output $b = b' \oplus \bigoplus_{i:\theta_i=0} x_i$.
\end{itemize}

\item[]\textsf{Delete}(\textsf{ct}):
\leavevmode
\begin{itemize}
    \item[$-$] Parse $\mathsf{ct} := (|x\rangle_{\theta}, \mathsf{ct}_{\textsf{rabe}})$.
    \item[$-$] Measure $|x\rangle_{\theta}$ in the Hadamard basis to obtain a string $x'$.
    \item[$-$] Output $\mathsf{cert} := x'$.
\end{itemize}

\item[]\textsf{Ver}(\textsf{vk}, \textsf{cert}):
\leavevmode
\begin{itemize}
    \item[$-$] Parse $\mathsf{vk}$ as $(x, \theta)$ and $\mathsf{cert}$ as $x'$.
    \item[$-$] Output $\top$ if and only if $x_i = x'_i$ for all $i$ such that $\theta_i = 1$.
\end{itemize}
\end{description}

\noindent\textbf{Correctness of \textsf{RABE-PriVCED}.} The correctness of the \textsf{RABE-PriVCED} protocol is ensured by the scheme’s description together with the correctness of \textsf{RABE}. \vspace{0.2cm}

\noindent It is important to note that the entire construction of \textsf{RABE-PriVCED} is based solely on the underlying \textsf{RABE} primitive, which can be instantiated from lattices. As discussed earlier, we employ the key-policy registered \textsf{ABE} scheme of Champion \textit{et al.} \cite{champion2025registered}, supporting arbitrary bounded-depth circuit policies. Consequently, \textsf{RABE-PriVCED} admits a concrete instantiation based on standard lattice assumptions.

\subsection{Proof of Security for \textsf{RABE-PriVCED}}

In what follows, we provide a detailed proof of security for our \textsf{RABE-PriVCED} construction. The argument relies on the security of the underlying \textsf{RABE} scheme under the \textsf{LWE} assumption, together with the certified everlasting property established earlier.

\begin{theorem}\label{thm:rabe-privcd-security-proof}
If the underlying \textsf{RABE} scheme is secure under the \textsf{LWE} assumption, then the proposed \textsf{RABE-PriVCED} scheme achieves certified everlasting security.
\end{theorem}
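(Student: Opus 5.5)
The plan is to reduce certified everlasting security of $\prod^{\mathsf{RABE}}_{\textsf{PriVCED}}$ directly to the Certified Everlasting Lemma (Lemma~\ref{lem:certified-everlasting-bk}), using semantic security of $\prod_{\textsf{RABE}}$ only to discharge the lemma's computational hypothesis.

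\emph{Step 1: align conventions.} The construction masks $b$ with the bits $x_i$ where $\theta_i=0$ and verifies the certificate on positions where $\theta_i=1$. Lemma~\ref{lem:certified-everlasting-bk} uses the opposite convention: it masks on $\theta_i=1$ and checks on $\theta_i=0$. I would handle this by a change of variables. Set $\bar\theta := \theta\oplus 1^\lambda$. The state $H^{\otimes\lambda}\ket{x}_\theta$ equals $\ket{x}_{\bar\theta}$ up to the identity $H^2=I$, so the pair (prepare $\ket{x}_\theta$, measure in the Hadamard basis, check on $\theta_i=1$) corresponds exactly to the lemma's pair (state $\ket{x}_{\bar\theta}$, measure in the computational basis, check on $\bar\theta_i=0$). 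Under this relabelling the mask $\bigoplus_{i:\theta_i=0}x_i$ becomes $\bigoplus_{i:\bar\theta_i=1}x_i$. Since $\theta\mapsto\bar\theta$ is a bijection on uniform strings, the lemma's sampled $\theta$ can play the role of $\bar\theta$. The local unitary $H^{\otimes\lambda}$ can be absorbed into the adversary $\mathcal{B}$ and into $\mathcal{Z}_\lambda$.

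\emph{Step 2: define $\mathcal{Z}_\lambda$ and $\mathcal{B}$.} Fix a QPT adversary $\mathcal{A}$ for $\mathsf{Exp}^{\textsf{RABE-CED}}_{\mathcal{A}}(\lambda,b)$. Let $\mathcal{Z}_\lambda(\theta,\beta,\textsf{A})$ be the efficient operation that does the following.
\begin{itemize}
\item It runs the challenger together with $\mathcal{A}$ through setup and the whole query phase (registrations, corruptions), up to the point where $\mathcal{A}$ outputs $X^*$, aborting if admissibility fails.
\item It computes $\mathsf{ct}_{\textsf{rabe}}\leftarrow\mathsf{RABE}.\mathsf{Encrypt}(\mathsf{mpk},X^*,(\bar\theta,\beta))$.
\item It applies $H^{\otimes\lambda}$ to register $\textsf{A}$.
\item It hands $\mathcal{A}$ its internal state together with $\mathsf{ct}=(\textsf{A},\mathsf{ct}_{\textsf{rabe}})$.
\end{itemize}
The query phase is independent of $(\theta,\beta)$, so running it inside $\mathcal{Z}_\lambda$ is legitimate, and $\mathcal{Z}_\lambda$ is QPT because $\mathcal{A}$ is. Let $\mathcal{B}$ run $\mathcal{A}$ to obtain $(\mathsf{cert},\rho)$, and output $x':=\mathsf{cert}$ with residual register $\rho$. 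With the correspondence of Step 1, $\widetilde{\mathcal{Z}}^{\mathcal{B}}_\lambda(b)$ is exactly $\mathsf{Exp}^{\textsf{RABE-CED}}_{\mathcal{A}}(\lambda,b)$: both output $\rho$ if the certificate passes $\mathsf{Ver}$ and $\bot$ otherwise. The conclusion of the lemma then gives the required trace-distance bound.

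\emph{Step 3: verify the hypothesis.} It remains to show that for every $\theta$, $\beta$ and every efficiently samplable $\ket{\psi}_{\textsf{A},\textsf{C}}$,
\[
\mathcal{A}'(\mathcal{Z}_\lambda(\theta,\beta,\textsf{A}),\textsf{C})\stackrel{c}{\approx}\mathcal{A}'(\mathcal{Z}_\lambda(0^\lambda,\beta,\textsf{A}),\textsf{C}).
\]
I would build a reduction $\mathcal{B}_{\mathsf{rabe}}$ against the security game of Section~\ref{model:RABE}. It relays all of $\mathcal{A}$'s registration and corruption queries to the $\textsf{RABE}$ challenger and prepares $\ket{\psi}$ itself. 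At the challenge it submits the pair $((\bar\theta,\beta),(1^\lambda,\beta))$ with $X^*$; here $1^\lambda$ is the relabelled image of $0^\lambda$. It then feeds the returned ciphertext to $\mathcal{A}'$ and outputs whatever $\mathcal{A}'$ outputs.

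Admissibility carries over verbatim: the $\textsf{RABE-CED}$ challenger enforces $P(X^*)=0$ for all corrupted policies. Crucially, unlike $\textsf{RABE-CD}$, no honest secret keys are released after deletion, so the reduction never needs keys it lacks. Because $\theta,\beta$ are fixed, this is a standard non-uniform reduction.

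\emph{Main obstacle.} I expect the delicate point to be the bookkeeping in Steps 1--2 rather than any cryptographic step. One must ensure that the adaptively chosen $X^*$ and the entire query phase can be folded into the single operation $\mathcal{Z}_\lambda$ without depending on $\theta$ or $\beta$. One must also check that the basis complement and Hadamard relabelling make the lemma's mask and check sets coincide exactly with the scheme's. If this relabelling proves awkward, my fallback is to restate Lemma~\ref{lem:certified-everlasting-bk} with the roles of the two bases swapped; its proof is symmetric under conjugation by $H^{\otimes\lambda}$.
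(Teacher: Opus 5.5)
Your proposal is correct and follows essentially the same route as the paper. Both define $\mathcal{Z}_\lambda$ as the registration phase followed by a $\mathsf{RABE}$ encryption of $(\theta,\beta)$ under $X^*$, invoke Lemma~\ref{lem:certified-everlasting-bk}, and discharge its hypothesis using the semantic security of $\prod_{\textsf{RABE}}$. Your version is more careful than the paper's in two places the paper leaves implicit: it reconciles the scheme's basis convention (mask on $\theta_i=0$, check on $\theta_i=1$) with the lemma's via $\theta\mapsto\theta\oplus 1^\lambda$ and $H^{\otimes\lambda}$, and it spells out the reduction, including the observation that no honest keys are released after deletion.
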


\begin{proof}
The theorem is proved via a standard argument of indistinguishability. Consider the following two experiments, parameterized by a bit $b \in \{0,1\}$, 
\[
    \mathsf{Exp}^{\textsf{RABE-PriVCED}}_{\mathcal{A}}(\lambda, b).
\]  
The adversary $\mathcal{A}$ interacts with the challenger in one of these experiments, and its goal is to distinguish whether it interacts with the $b=0$ or $b=1$ world. The security requires that these two experiments be computationally indistinguishable. \vspace{0.2cm}

\noindent Formally, it suffices to show that the statistical distance
\begin{align*}
    &\mathsf{TD}\left(\mathsf{Exp}^{\textsf{RABE-PriVCED}}_{\mathcal{A}}(\lambda, 0), ~\mathsf{Exp}^{\textsf{RABE-PriVCED}}_{\mathcal{A}}(\lambda, 1)
    \right) \leq \mathsf{negl}(\lambda),
\end{align*}
for every non-uniform \textsf{QPT} adversary $\mathcal{A}$. \vspace{0.2cm} 

\noindent By Lemma \ref{lem:certified-everlasting-bk}, we already know that once deletion has been verified, the adversary cannot recover information about the challenge message with a non-negligible probability. This ensures that even after the deletion phase, the system retains everlasting confidentiality.\vspace{0.2cm}

\noindent The semantic security of the underlying $\mathsf{RABE}$ construction, based on the \textsf{LWE} assumption, ensures that encryptions of two distinct challenge messages cannot be distinguished by any efficient adversary, as long as the challenge policy restrictions are satisfied. \vspace{0.2cm}

\noindent To formalize the adversarial view, we define the distribution $Z_\lambda(\theta, b', X, \mathcal{A})$ as follows:
\begin{description}\setlength\itemsep{0.5em}
    \item[$-$] The challenger first samples the common reference string  
    \begin{align*}
       &\mathsf{crs} \gets \mathsf{Setup}(1^\lambda, 1^{\tau}),
    \end{align*}
   where $\tau$ denotes the policy parameter.  
   
   \item[$-$] Using the registration procedure, the challenger generates all registered keys and derives the master public key $\mathsf{mpk}$.  

    \item[$-$] Finally, the challenger outputs  
    \begin{align*}
       &\big(\mathcal{A},\, \mathsf{RABE}.\mathsf{Encrypt}(\mathsf{mpk}, X, (\theta, b'))\big),
    \end{align*} 
   where $(\theta, b')$ encodes the challenge message and the policy $P$. 
\end{description}

\noindent This distribution captures the complete view of the adversary during the challenge phase.\vspace{0.2cm}

\noindent The class of adversaries $\mathcal{A}$ considered here consists of non-uniform quantum polynomial-time families
\begin{align*}
    \{\mathcal{A}_\lambda,\, |\psi_\lambda\rangle\}_{\lambda \in \mathbb{N}},
\end{align*} 
which may include auxiliary quantum advice states $|\psi_\lambda\rangle$. These adversaries are constrained to respect the policy restrictions defined in the registered \textsf{ABE} security experiment (\textit{i.e.,} they cannot query for decryption keys that would trivially satisfy the challenge policy).\vspace{0.2cm}

\noindent By combining the above steps, we conclude that any distinguishing advantage of $\mathcal{A}$ in the above experiment is bounded by a negligible function in $\lambda$. Hence,
\begin{align*}
    \mathsf{TD}\left(\mathsf{Exp}^{\textsf{RABE-PriVCED}}_{\mathcal{A}}(\lambda, 0),~\mathsf{Exp}^{\textsf{RABE-PriVCED}}_{\mathcal{A}}(\lambda, 1)\right) \leq \textsf{negl}(\lambda).
\end{align*} 
This completes the proof.
\end{proof}

\section{\textsf{RABE} with Publicly Verifiable Certified Everlasting Deletion}\label{sec:RABE-PubVCED}

In this section, we present a construction of registered attribute-based encryption with publicly verifiable certified everlasting deletion (\textsf{RABE-PubVCED}) scheme. Analogously to the privately verifiable setting described in the previous section, our construction is based on the \textsf{RABE} framework combined with a signature scheme that satisfies the one-time unforgeability for the BB84 states, with a modified encryption algorithm. Specifically, the encryption procedure samples random strings $x$ and $\theta$ together with a signature key pair $(\mathsf{vk}, \mathsf{sigk})$, and prepares a quantum state $\lvert \psi \rangle$ by applying a signing operation to the quantum one-time pad state $\lvert x \rangle_{\theta}$. The plaintext message is masked using the bits at positions where $\theta_i = 1$. The signing key $\mathsf{sigk}$, the basis information $\theta$, and the masked message are then encrypted using the underlying $\mathsf{RABE}$ scheme, while the public verification key $\mathsf{vk}$ allows anyone to verify deletion certificates via the signature scheme.


\subsection{Generic Construction of \textsf{RABE-PubVCED}}
We present a registered attribute-based encryption scheme with publicly verifiable certified everlasting deletion (\textsf{RABE-PubVCED}), denoted as $\prod^{\mathsf{RABE}}_{\textsf{PubVCED}} =$ \textsf{RABE}.(\textsf{Setup}, \textsf{KeyGen},\textsf{RegPK}, \textsf{Encrypt}, \textsf{Update}, \textsf{Decrypt}, \textsf{Delete}, \textsf{Verify}) defined over an attribute universe $\mathcal{U}$, policy space $\mathcal{P}$, and message space $\{0,1\}$. Our construction utilizes two core cryptographic components: the registered attribute-based encryption scheme $\prod_{\textsf{RABE}} =$ \textsf{RABE}.(\textsf{Setup}, \textsf{KeyGen}, \textsf{RegPK}, \textsf{Encrypt}, \textsf{Update}, \textsf{Decrypt}) and a deterministic signature scheme $\prod_{\mathsf{SIG}} = \textsf{SIG}.(\mathsf{Gen}, \mathsf{Sign}, \mathsf{Verify})$. The algorithms comprising \textsf{RABE-PubVCED} are described in detail below.\vspace{0.2cm}

\begin{description}\setlength\itemsep{0.5em}
\item[]\textsf{Setup}($1^{\lambda}, 1^{{\tau}}$):
\leavevmode
\begin{itemize}
    \item[$-$] Output $\mathsf{crs} \leftarrow \mathsf{RABE.Setup}(1^{\lambda}, 1^{{\tau}})$.
\end{itemize}

\item[]\textsf{KeyGen}(\textsf{crs}, \textsf{aux}, $P$):
\leavevmode
\begin{itemize}
    \item[$-$] Output $(\mathsf{pk}, \mathsf{sk}) \leftarrow \mathsf{RABE.KeyGen}(\mathsf{crs}, \mathsf{aux}, P)$.
\end{itemize}

\item[]\textsf{RegPK}(\textsf{crs}, \textsf{aux}, \textsf{pk}, $P$):
\leavevmode
\begin{itemize}
    \item[$-$] Output $(\mathsf{mpk}, \mathsf{aux}') \leftarrow \mathsf{RABE.RegPK}(\mathsf{crs}, \mathsf{aux}, \mathsf{pk}, P)$.
\end{itemize}

\item[]\textsf{Encrypt}(\textsf{mpk}, X, $\mu$):
\leavevmode
\begin{itemize}
    \item[$-$] Generate $x, \theta \leftarrow \{0, 1\}^{\lambda}$ and generate $(\mathsf{vk}, \mathsf{sigk}) \leftarrow \mathsf{SIG.Gen}(1^{\lambda})$.
    \item[$-$] Generate a quantum state $|\psi\rangle$ by applying the map $|\nu\rangle |0 \ldots 0\rangle$ $\rightarrow$ $|\nu\rangle$ $ |\mathsf{SIG.Sign}(\mathsf{sigk}, \nu)\rangle$ to $|x\rangle_{\theta} \otimes |0 \ldots 0\rangle$.
    \item[$-$] Generate $\mathsf{rabe.ct} \leftarrow \mathsf{RABE.Encrypt}(\mathsf{mpk}, X, (\mathsf{sigk}, \theta, \mu \oplus \bigoplus_{i: \theta_i = 1} x_i))$.
    \item[$-$] Output $\mathsf{vk}$ and $\mathsf{ct} := (|\psi\rangle, \mathsf{rabe.ct})$.
\end{itemize}

\item[]\textsf{Update}(\textsf{crs}, \textsf{aux}, \textsf{pk}):
\leavevmode
\begin{itemize}
    \item[$-$] Output $\mathsf{hsk} \leftarrow \mathsf{RABE.Update}(\mathsf{crs}, \mathsf{aux}, \mathsf{pk})$.
\end{itemize}

\item[]\textsf{Decrypt}(\textsf{sk}, \textsf{hsk}, {X},\textsf{ct}):
\leavevmode
\begin{itemize}
    \item[$-$] Parse $\mathsf{ct}$ into a quantum state $\rho$ and classical string $\mathsf{rabe.ct}$.
    \item[$-$] Compute $(\mathsf{sigk}, \theta, \beta) \leftarrow \mathsf{RABE.Decrypt}(\mathsf{sk}, \mathsf{hsk},{X},\mathsf{rabe.ct})$.
    \item[$-$] If $\mathsf{RABE.Decrypt}$ outputs $\perp$ or $\mathsf{GetUpdate}$, return the same output.
    \item[$-$] Apply the map $|\nu\rangle |0 \ldots 0\rangle \rightarrow |\nu\rangle |\mathsf{SIG.Sign}(\mathsf{sigk}, \nu)\rangle$ to $\rho$, measure the first $\lambda$ qubits of the resulting state in Hadamard basis, and obtain $\bar{x}$.
    \item[$-$] Output $\mu \leftarrow \beta \oplus \bigoplus_{i: \theta_i = 1} \bar{x}_i$.
\end{itemize}

\item[]\textsf{Delete}(\textsf{ct}):
\leavevmode
\begin{itemize}
    \item[$-$] Parse $\mathsf{ct}$ into a quantum state $\rho$ and a classical string $\mathsf{rabe.ct}$.
    \item[$-$] Measure $\rho$ in the computational basis and obtain $x'$ and $\sigma'$.
    \item[$-$] Output $\mathsf{cert} = (x', \sigma')$.
\end{itemize}

\item[]\textsf{Verify}(\textsf{vk}, \textsf{cert}):
\leavevmode
\begin{itemize}
    \item[$-$] Parse $(z', \sigma') \leftarrow \mathsf{cert}$.
    \item[$-$] Output the result of $\mathsf{SIG.Verify}(\mathsf{vk}, z', \sigma')$.
\end{itemize}
\end{description}
\noindent\textbf{Correctness of \textsf{RABE-PubCED}.} The correctness of the \textsf{RABE-PubCED} protocol is guaranteed by the correctness of \textsf{RABE} and \textsf{SIG}.

\subsection{Proof of Security}
\begin{theorem}\label{thm:rabe-cepubv-security-proof}
If the signature scheme $\mathsf{SIG}$ ensures one-time unforgeability for BB84 states and the underlying $\mathsf{RABE}$ scheme is secure under the \textsf{LWE} assumption, then the proposed \textsf{RABE-PubVCED} construction guarantees certified everlasting security.
\end{theorem}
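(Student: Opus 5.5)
The plan is to reduce Theorem~\ref{thm:rabe-cepubv-security-proof} directly to the Publicly Verifiable Certified Everlasting Lemma (Lemma~\ref{lem:pv-certified-everlasting}). We define the quantum operation $\mathcal{Z}_\lambda(\mathsf{vk},\mathsf{sigk},\theta,\beta,\textcolor{gray}{\textsf{A}})$ to be the first part of the experiment $\mathsf{Exp}^{\textsf{RABE-PubVCED}}_{\mathcal{A}}(\lambda,b)$ up to and including delivery of the challenge. Concretely, $\mathcal{Z}_\lambda$ samples $\mathsf{crs}\leftarrow\mathsf{RABE.Setup}(1^\lambda,1^\tau)$ and runs $\mathcal{A}$, answering all register-corrupted, register-honest and corrupt queries with the real $\mathsf{RABE}$ algorithms. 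When $\mathcal{A}$ outputs an admissible $X^*$, it computes $\mathsf{rabe.ct}\leftarrow\mathsf{RABE.Encrypt}(\mathsf{mpk},X^*,(\mathsf{sigk},\theta,\beta))$ and hands $\mathcal{A}$ the pair $(\mathsf{vk},\mathsf{ct}:=(\textcolor{gray}{\textsf{A}},\mathsf{rabe.ct}))$, as required by the public-verifiability variant of the experiment. The adversary $\mathcal{B}$ of the lemma simply continues $\mathcal{A}$ to its output $(\mathsf{cert}=(x',\sigma'),\rho)$.

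We then check that the lemma's experiment matches the real one. Plugging in $\beta=b\oplus\bigoplus_{i:\theta_i=1}x_i$ and the coherently signed state $\ket{\psi}$ gives exactly the challenge ciphertext of the construction. Moreover, the lemma's output rule (output register \textcolor{gray}{\textsf{B}} iff $\mathsf{SIG.Verify}(\mathsf{vk},x',\sigma')=\top$) coincides with the experiment's $\mathsf{Verify}$ step. Hence $\widetilde{\mathcal{Z}}^{\mathcal{B}}_\lambda(b)=\mathsf{Exp}^{\textsf{RABE-PubVCED}}_{\mathcal{A}}(\lambda,b)$, and the conclusion of Lemma~\ref{lem:pv-certified-everlasting} yields the trace-distance bound. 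One technical point: the lemma requires its hypotheses for every fixed $(\mathsf{vk},\mathsf{sigk},\theta,\beta)$ and every efficiently samplable $\ket{\psi}_{\textcolor{gray}{\textsf{A}},\textcolor{gray}{\textsf{C}}}$. Our $\mathcal{Z}_\lambda$ treats \textcolor{gray}{\textsf{A}} opaquely and is efficient, so this is satisfied.

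The two hypotheses of the lemma are verified by reductions to $\mathsf{RABE}$ security. For the first, a distinguisher $\mathcal{D}$ between $\mathcal{Z}_\lambda(\mathsf{vk},\mathsf{sigk},\theta,\beta,\textcolor{gray}{\textsf{A}})$ and $\mathcal{Z}_\lambda(\mathsf{vk},0^{\ell_{\mathsf{sigk}}},\theta,\beta,\textcolor{gray}{\textsf{A}})$ (with side register \textcolor{gray}{\textsf{C}}) gives a $\mathsf{RABE}$ adversary. It prepares $\ket{\psi}_{\textcolor{gray}{\textsf{A}},\textcolor{gray}{\textsf{C}}}$ itself, relays all of $\mathcal{A}$'s queries to the $\mathsf{RABE}$ challenger, and submits challenge messages $(\mathsf{sigk},\theta,\beta)$ and $(0^{\ell_{\mathsf{sigk}}},\theta,\beta)$ under $X^*$. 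It then runs $\mathcal{D}$. Admissibility transfers because the same corrupted set $C$ and dictionary $D$ are used, and honest secret keys are never revealed in the CED game, which is exactly why plain $\mathsf{RABE}$ security suffices here. The second hypothesis, replacing $\theta$ by $0^\lambda$, is handled identically. Since the $\mathsf{RABE}$ scheme is post-quantum secure (Champion \textit{et al.} under $\ell$-succinct \textsf{LWE}), both advantages are negligible. One-time unforgeability of $\mathsf{SIG}$ for BB84 states enters only as the standing hypothesis of Lemma~\ref{lem:pv-certified-everlasting}.

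The main obstacle I expect is making $\mathcal{Z}_\lambda$ a legitimate fixed efficient operation even though the adversary's registration queries happen before the challenge and are adaptive. The first thing I would try is to fold the whole pre-challenge interaction into $\mathcal{Z}_\lambda$, as above, so that only the encrypted tuple depends on $(\mathsf{sigk},\theta,\beta)$. The one thing to make sure of is that sampling $\mathsf{vk}$ before the query phase does not matter, since $\mathsf{vk}$ is independent of the $\mathsf{RABE}$ state. A second subtlety is that the reduction must handle a challenge message of length $|\mathsf{sigk}|+\lambda+1$; I would assume the $\mathsf{RABE}$ message space is sized accordingly, or else encrypt component-wise and add a standard hybrid.
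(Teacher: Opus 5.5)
Your proposal is correct and takes the same route as the paper. Both define $\mathcal{Z}_\lambda$ via $\mathsf{RABE.Encrypt}(\mathsf{mpk},X^*,(\mathsf{sigk},\theta,\beta))$, discharge the two hypotheses of Lemma~\ref{lem:pv-certified-everlasting} (replacing $\mathsf{sigk}$ by $0^{\ell_{\mathsf{sigk}}}$ and $\theta$ by $0^\lambda$) by $\mathsf{RABE}$ semantic security, and then invoke the lemma. Your version is more careful than the paper's: you fold the adaptive query phase into $\mathcal{Z}_\lambda$, include the register \textcolor{gray}{\textsf{A}} and $\mathsf{vk}$ in what the adversary receives (the paper's $Z_\lambda$ omits \textcolor{gray}{\textsf{A}}), and justify admissibility by noting that honest secret keys are never revealed in the CED game.
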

\begin{proof}
We begin by formalizing the adversary’s view in the security experiment. Define $Z_{\lambda}(\mathsf{vk}, \mathsf{sigk}, \theta, \beta, \mathcal{A})$ to be the following efficient quantum process:

\begin{itemize}\setlength\itemsep{0.5em}
    \item[$-$] The challenger first generates the common reference string {$\mathsf{crs}$ $\gets$ $\mathsf{Setup}(1^{\lambda}, 1^{\tau})$}.  

    \item[$-$] Using the registration procedure, the challenger produces the master public key $\mathsf{mpk}$ and any auxiliary state $\mathsf{aux}$ that may be required for the remainder of the experiment.  

    \item[$-$] The challenger then constructs the challenge ciphertext under the attribute set $X^*$ taken from the everlasting security experiment $\mathsf{Exp}^{\textsf{RABE-PubVCED}}_{\mathcal{A}}(\lambda, b).$ Specifically, it computes
    \begin{align*}
        \mathsf{ct}^* \gets \mathsf{RABE}.\mathsf{Encrypt}(\mathsf{mpk}, X^*, (\mathsf{sigk}, \theta, \beta)).
    \end{align*}

    \item[$-$] Finally, the process outputs the tuple
    \begin{align*}
        \big(\mathsf{crs}, \mathsf{mpk}, \mathsf{vk}, \mathsf{ct}^*\big).
    \end{align*}
\end{itemize}

\noindent We now analyze the indistinguishability of the adversary’s view under different distributions of the challenge inputs. Let $\mathcal{A}$ be any non-uniform \textsf{QPT} adversary. For any key pair $(\mathsf{vk}, \mathsf{sigk})$ of the signature scheme $\mathsf{SIG}$, for any challenge string $\theta \in \{0,1\}^{\lambda}$, bit $\beta \in \{0,1\}$, and for any efficiently samplable auxiliary quantum state $\lvert \psi \rangle_{\textcolor{gray}{\textsf{A}},\textcolor{gray}{\textsf{C}}}$ over registers \textcolor{gray}{\textsf{A}} and \textcolor{gray}{\textsf{C}}, the following two indistinguishability conditions hold due to the semantic security of the underlying $\mathsf{RABE}$ scheme.\vspace{0.2cm}

\noindent First, replacing the secret signing key $\mathsf{sigk}$ with a uniformly random string of the same length $\ell_{\mathsf{sigk}}$ does not alter the adversary’s distinguishing advantage by more than a negligible function:
\begin{align*}
&\Big|\Pr\big[\mathcal{A}(Z_{\lambda}(\mathsf{vk}, \mathsf{sigk}, \theta, \beta, \textcolor{gray}{\textsf{A}}), \textcolor{gray}{\textsf{C}}) = 1\big] \\
&\qquad - \Pr\big[\mathcal{A}(Z_{\lambda}(\mathsf{vk}, 0^{\ell_{\mathsf{sigk}}}, \theta, \beta, \textcolor{gray}{\textsf{A}}), \textcolor{gray}{\textsf{C}}) = 1\big]\Big|
 \leq \mathsf{negl}(\lambda).
\end{align*}

\noindent Second, replacing the challenge string $\theta$ with the all-zero string $0^{\lambda}$ also leaves the adversary’s view unchanged up to negligible statistical distance:
\begin{align*}
&\Big|\Pr\big[\mathcal{A}(Z_{\lambda}(\mathsf{vk}, \mathsf{sigk}, \theta, \beta, \textcolor{gray}{\textsf{A}}), \textcolor{gray}{\textsf{C}}) = 1\big] \\
&\qquad - \Pr\big[\mathcal{A}(Z_{\lambda}(\mathsf{vk}, \mathsf{sigk}, 0^{\lambda}, \beta, \textcolor{gray}{\textsf{A}}), \textcolor{gray}{\textsf{C}}) = 1\big]\Big|
 \leq \mathsf{negl}(\lambda).
\end{align*}

\noindent The first bound guarantees that the adversary cannot exploit the actual signing key $\mathsf{sigk}$, since it is computationally indistinguishable from random in the ciphertext view. The second bound ensures that the particular challenge string $\theta$ likewise provides no distinguishing advantage. Together, these facts imply that the adversary’s view in the challenge experiment is independent of the challenge bit $\beta$.\vspace{0.2cm}

\noindent Since the adversary’s distinguishing advantage is bounded by a negligible function, it follows that the challenge experiments with $b=0$ and $b=1$ are computationally indistinguishable. Hence, by invoking Lemma~\ref{lem:pv-certified-everlasting}, we conclude that the construction achieves certified everlasting security.

\end{proof}

\subsection{Instantiation of \textsf{RABE-PubVCED}}\label{Instant-RABE-PubVCE}

We present a concrete lattice-based instantiation of our \textsf{RABE-PubVCED} construction that attains post-quantum security. Our instantiation builds on two fundamental lattice-based primitives, both grounded in well-established and extensively studied lattice assumptions: \vspace{0.2cm}

\begin{description}\setlength\itemsep{0.5em}
\item[$\bullet$] \textbf{Registered \textsf{ABE} Component:} As mentioned earlier, we adopt the key-policy registered \textsf{ABE} scheme of Champion \textit{et al.} \cite{champion2025registered}, which supports arbitrary bounded-depth circuit policies. This scheme achieves semantic security under the falsifiable $\ell$-succinct \textsf{LWE} assumption in the random oracle model, and produces succinct ciphertexts whose size is independent of the attribute length. 

\item[$\bullet$] \textbf{Signature Component:} For the classical deterministic signature scheme satisfying one-time unforgeability for BB84, we follow the construction framework established by Kitagawa \textit{et al.} \cite{kitagawa2023publicly}. Their approach fundamentally relies on the existence of a secure one-way function as the underlying cryptographic primitive. To achieve post-quantum security, we instantiate this one-way function using lattice-based constructions derived from cyclic lattice structures developed in \cite{micciancio2007generalized}. This construction preserves the essential one-time unforgeability properties required for our quantum deletion mechanism while grounding the security analysis in the computational hardness of lattice problems, specifically the Learning With Errors and Short Integer Solution (\textsf{SIS}) assumptions that are believed to remain secure against quantum attacks.
\end{description}

\noindent Building on these components, we present a lattice-based instantiation of \textsf{RABE-PubVCED}. The scheme achieves post-quantum security under the $\ell$-succinct \textsf{LWE} and cyclic lattice assumptions, thereby demonstrating the feasibility of certified-deletion functionalities in the lattice setting.
\vspace{0.2cm}

\section*{Acknowledgements}
This work is supported in part by the Anusandhan National Research Foundation (ANRF), Department of Science and Technology (DST), Government of India, under File No. EEQ/2023/000164 and the Information Security Education and Awareness (ISEA) Project Phase-III initiatives of the Ministry of Electronics and Information Technology (MeitY) under Grant No. F.No. L-14017/1/2022-HRD.